\newtheorem{defi} {Definition}
\newtheorem{lemma} {Lemma}
\newtheorem{prop} {Proposition}
\newtheorem{cor} {Corollary}
\newtheorem*{rem} {Remark}
\begin{document}
\title{General Spectrum Sensing in Cognitive Radio Networks}

\author{Sheng-Yuan~Tu~and
        Kwang-Cheng~Chen, \IEEEmembership{Fellow,~IEEE}
%
\thanks{This work was supported in part by the National Science Council of
Taiwan, under the contract NSC97-2219-E-002-018.}
\thanks{Sheng-Yuan Tu is with the
Institute of Communication Engineering, National Taiwan University,
Taiwan (R.O.C) (e-mail: shinetu@santos.ee.ntu.edu.tw).}
\thanks{Kwang-Cheng Chen is with the
Institute of Communication Engineering and the Department of
Electrical Engineering, National Taiwan University, Taiwan (R.O.C)
(e-mail: chenkc@cc.ee.ntu.edu.tw).}}



\maketitle

\begin{abstract}
It is well known that the successful operation of cognitive radio
(CR) between CR transmitter and CR receiver (CR link) relies on
reliable spectrum sensing. To network CRs requires more information
from spectrum sensing beyond traditional techniques, executing at CR
transmitter and further information regarding the spectrum
availability at CR receiver. Redefining the spectrum sensing along
with statistical inference suitable for cognitive radio networks
(CRN), we mathematically derive conditions to allow CR transmitter
forwarding packets to CR receiver under guaranteed outage
probability, and prove that the correlation of localized spectrum
availability between a cooperative node and CR receiver determines
effectiveness of the cooperative scheme. Applying our novel
mathematical model to potential hidden terminals in CRN, we
illustrate that the allowable transmission region of a CR, defined
as neighborhood, is no longer circular shape even in a pure path
loss channel model. This results in asymmetric CR links to make
bidirectional links generally inappropriate in CRN, though this
challenge can be alleviated with the aid of cooperative sensing.
Therefore, spectrum sensing capability determines CRN topology. For
multiple cooperative nodes, to fully utilize spectrum availability,
the selection methodology of cooperative nodes is developed due to
limited overhead of information exchange. Defining reliability as
information of spectrum availability at CR receiver provided by a
cooperative node and by applying neighborhood area, we can compare
sensing capability of cooperative nodes from both link and network
perspectives. In addition, due to dynamic network topology lack of
centralized coordination in CRN, CRs can only acquire local and
partial information in limited sensing duration, robust spectrum
sensing is therefore proposed to ensure successful CRN operation.
Limits of cooperative schemes and their impacts on network operation
are also derived.
\end{abstract}

\begin{IEEEkeywords}
Spectrum sensing, cognitive radio networks, link availability,
network tomography, statistical inference, reliability,
neighborhood.
\end{IEEEkeywords}

\IEEEpeerreviewmaketitle

\section{Introduction}
\IEEEPARstart{C}{ognitive} radios (CR)
\cite{GSSinCRN:Mitola00}\cite{GSSinCRN:Haykin05}, having capable of
sensing spectrum availability, is considered as a promising
technique to alleviate spectrum scarcity due to current static
spectrum allotment policy \cite{GSSinCRN:FCC}. Traditional CR link
availability is solely determined by the spectrum sensing conducted
at the transmitter (i.e. CR-Tx). If the CR-Tx with packets to relay
senses the selected channel to be available, it precedes this
opportunistic transmission. To facilitate the spectrum sensing, at
time instant $t_n$, we usually use a hypothesis testing as follows.
\begin{equation} \label{e27}
    \begin{aligned}
    H_1: Y&=I+N\\
    H_0: Y&=S+I+N
    \end{aligned}
\end{equation}
where $Y$ means the observation at CR-Tx; $S$ represents signal from
primary system (PS); $I$ is the interference from co-existing
multi-radio wireless networks; $N$ is additive white Gaussian noise
(AWGN). They are all random variables at time $t_n$. We can conduct
this hypothesis testing in several ways based on different
criterions and different assumptions
\cite{GSSinCRN:Ma09}-\cite{GSSinCRN:Cabric04}:

\begin{enumerate}
    \item{Energy Detection \cite{GSSinCRN:Digham03}-\cite{GSSinCRN:Penna09}:} Energy detection is widely considered
    due to its simple complexity and no need of \emph{a priori} knowledge
    of PS. However, due
    to noise and interference power uncertainty, the performance of
    energy detection severely degrades \cite{GSSinCRN:Sahai04}\cite{GSSinCRN:Zeng07}, and the detector
    fails to differentiate PS from the interference.
    \item{Cyclostationary Detection \cite{GSSinCRN:Tu09}-\cite{GSSinCRN:Guo09}:}
    Stationary PS signal can be exploited to achieve a better and
    more robust detector. Stationary observation coupled with the periodicity
    of carriers, pulse trains, repeating spreading code, etc., results in
    a cyclostationary model for the signal, which can be exploited in
    frequency domain by the spectral correlation function \cite{GSSinCRN:Gardner88}\cite{GSSinCRN:Gardner91},
    with high computational complexity and long observation duration.
    \item{Locking Detection \cite{GSSinCRN:Cabric04}\cite{GSSinCRN:Cabric06}:} In practical communication systems, pilots
    and preambles are usually periodically transmitted to facilitate
    synchronization and channel estimation, etc. These known signals
    can be used for locking detection to distinguish PS from
    noise and the interference. However, locking detection requires
    more \emph{a priori} information about PS, including frame
    structure, modulation types, and coding schemes,
    etc.
    \item{Covariance-based Detection \cite{GSSinCRN:Zeng07}-\cite{GSSinCRN:Zayen09}:} Because of the dispersive
    channels, the utility of multiple antennas, or even
    over-sampling, the signals from PS are correlated and can be
    utilized to differentiate PS from white noise. The existence of
    PS can be determined by evaluating the structure of covariance
    matrix of the received signals. The detector can be implemented
    blindly \cite{GSSinCRN:Zayen09} by singular value decomposition, that is, it requires
    no \emph{a priori} knowledge of PS and noise, but needs good computational complexity.
    \item{Wavelet-based Detection \cite{GSSinCRN:Tian06}:} The detection is implemented
    by modeling the entire wideband
    signals as consecutive frequency subbands where the power
    spectral characteristic is smooth within each subband but
    exhibits an abrupt change between neighboring subbands, at the price
    of high sampling rate due to wideband signals.
\end{enumerate}

However, above spectrum sensing mechanisms, focusing on physical
layer detection or estimation at CR-Tx, ignore the spectrum
availability at CR receiver. We could illustrate the insufficiency
of traditional spectrum sensing model, especially to network CRs.
Due to existence of fading channels and noise uncertainty along with
limited sensing duration \cite{GSSinCRN:Ghasemi082}, even when there
is no detectable transmission of PS during this venerable period,
the receiver of this opportunistic transmission (i.e. CR-Rx) may
still suffer from collisions from simultaneous transmission(s), as
Fig. \ref{Fig_1} shows. The CR-Rx locates in the middle of CR-Tx and
PS-Tx and PS activities are hidden to CR-Tx, which induces a
challenge to spectrum sensing. We can either develop more powerful
sensing techniques such as cooperative sensing
\cite{GSSinCRN:Unnikrishnan08}-\cite{GSSinCRN:Mishra06} to alleviate
hidden terminal problem, or a more realistic mathematical model for
spectrum sensing what we are going to do hereafter.

The organization of this paper is as follows. We elaborate a
realistic definition of link availability and system model in
Section II and present general spectrum sensing with/without
cooperation in Section III an IV respectively. In Section V,
realistic operation of CRN is suggested to investigate impacts of
spectrum sensing and cooperative scheme on network operation.
Numerical results and examples are illustrated in Section VI.
Finally, conclusions are made in Section VII.

\section{System Model}
From a viewpoint of information theory, spectrum sensing can be
modeled as a binary channel that transmit CR link availability (one
bit information in link layer) to CR-Tx with transition
probabilities representing spectrum sensing capability, probability
of missing detection and probability of false alarm. Therefore,
traditional spectrum sensing mechanisms could be explained by a
mathematical structure of defining link availability.
\begin{defi}
CR link availability, between CR-Tx and CR-Rx, is specified by an
indicator function
\begin{equation*}
\mathbf{1}^{link}=
    \begin{cases}
    1, &\text{CR link is available for opportunistic transmission}\\
    0, &\text{otherwise}
    \end{cases}
\end{equation*}
\end{defi}
\begin{defi} \label{d1}
CR-Tx senses the spectrum and determines link availability based on
its observation as
\begin{equation*}
\mathbf{1}^{Tx}=
    \begin{cases}
    1, &\text{CR link is available for transmission at CR-Tx}\\
    0, &\text{otherwise}
    \end{cases}
\end{equation*}
\end{defi}
\begin{lemma} \label{l5}
Traditional spectrum sensing for CR link suggests
$\mathbf{1}^{link}=\mathbf{1}^{Tx}$.
\end{lemma}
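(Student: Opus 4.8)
The plan is to derive the equality as an immediate consequence of the operational meaning of the traditional, transmitter-only sensing paradigm, read through Definitions 1 and \ref{d1}. First I would recall the description already given in the Introduction: in the traditional model the CR-Tx performs the binary hypothesis test \eqref{e27} on its \emph{own} observation $Y$, and if $H_1$ (absence of a detectable primary transmission at the transmitter) is accepted it declares the channel available and proceeds with the opportunistic relay; otherwise it remains silent. By Definition \ref{d1}, the outcome of this decision rule at time $t_n$ is exactly the value of $\mathbf{1}^{Tx}$.

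Next I would make explicit the single modelling assumption that the word ``traditional'' carries here: the sensing channel is the \emph{single} binary channel from the true link state to $\mathbf{1}^{Tx}$, characterized only by its probability of missed detection and probability of false alarm, and no side information about the conditions at the CR-Rx is ever collected or used. Under this assumption the event ``the CR link is available for opportunistic transmission'' is, by construction, identified with the event ``the CR-Tx decides to transmit'': there is no separate test at, or feedback from, the CR-Rx that could make $\mathbf{1}^{link}$ differ from $\mathbf{1}^{Tx}$. Hence the indicator of Definition 1 collapses onto the indicator of Definition \ref{d1}, which is precisely $\mathbf{1}^{link}=\mathbf{1}^{Tx}$, holding on every realization / at every sensing instant $t_n$.

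I expect the only real ``obstacle'' to be expository rather than technical: one must state cleanly that traditional spectrum sensing \emph{postulates} the transmitter's decision as the ground truth for link availability, so that the claim is essentially a restatement of that postulate rather than a deduction from more primitive facts. Once this is granted the equality is immediate, and I would close by flagging that the rest of the paper is devoted to showing why this identification breaks down in a networked setting (hidden terminals at the CR-Rx), thereby motivating the refined definitions of the following sections.
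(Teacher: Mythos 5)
Your proposal is correct and matches the paper's treatment: the paper states this lemma without proof precisely because it is a definitional restatement of the traditional-sensing postulate (the transmitter's decision is taken as the ground truth for link availability), which is exactly the reading you give via Definitions 1 and \ref{d1}. Your closing remark that the rest of the paper exists to show why this identification fails is also consistent with the paper's immediate follow-up that the lemma ``is not generally true.''
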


The definition of link availability is pretty much similar to the
clear channel assessment (CCA) of medium access control (MAC) in the
IEEE 802.11 wireless local area networks (WLAN), or the medium
availability indicator in \cite{GSSinCRN:Chen09}. We may have a
correspondence between link availability in dynamic channel access
of cognitive radio networks (CRN), and the CCA in MAC of WLAN.

As we explain in Figure \ref{Fig_1} and/or take interference into
testing scenario, we may note that \textbf{Lemma \ref{l5}} is not
generally true. To generally model spectrum sensing, including
hidden terminal scenarios, we have to reach two simultaneous
conditions: (1) CR-Tx senses the link available to transmit (2)
CR-Rx can successfully receive packets, which means no PS signal at
CR-Rx side, nor significant interference to prohibit successful CR
packet reception (i.e. beyond a target SINR). In other words, at
CR-Rx,
\begin{equation}  \label{e28}
    \text{SINR}_{CR-Rx}=\frac{P_{CR-Tx}}{P_{PS}+P_I+P_N}\geq
    \eta_{outage}
\end{equation}
where $\eta_{outage}$ is the SINR threshold at CR-Rx for outage in
reception over fading channels, $P_{CR-Tx}$ is the received power
from opportunistic transmission from CR-Tx, $P_{PS}$ is the power
contributed from PS simultaneous operation for general network
topology such as ad hoc, $P_I$ is the total interference power from
other co-existing radio systems \cite{GSSinCRN:Ghasemi08}, and $P_N$
is band-limited noise power, with assuming independence among PS,
CR, interference systems, and noise.

Based on this observation, CR link availability should be composed
of localized spectrum availability at CR-Tx and CR-Rx, which may not
be identical in general. The inconsistency of spectrum availability
at CR-Tx and CR-Rx is rarely noted in current literatures. However,
this factor not only suggests spatial behavior for CR-Tx and CR-Rx
but also is critical to some networking performance such as
throughput of CRN, etc. \cite{GSSinCRN:Jafar07} developed a
brilliant two-switch model to capture distributed and dynamic
spectrum availability. However, \cite{GSSinCRN:Jafar07} focused on
capacity from information theory and it is hard to directly extend
the model in studying network operation of CRN. Actually, two
switching functions can be generalized as indicator functions to
indicate the activities of PS based on the sensing by CR-Tx and
CR-Rx respectively \cite{GSSinCRN:Chen09}. Generalizing the concept
of \cite{GSSinCRN:Jafar07}\cite{GSSinCRN:Srinivasa07} to facilitate
our study in spectrum sensing and further impacts on network
operation, we represent the spectrum availability at CR-Rx by an
another indicator function.
\begin{defi}
The true availability for CR-Rx can be indicated by
\begin{equation*}
\mathbf{1}^{Rx}=
    \begin{cases}
    1, &\text{CR link is available for reception at CR-Rx}\\
    0, &\text{otherwise}
    \end{cases}
\end{equation*}
\end{defi}
Please note that the activity of PS estimated at CR-Rx in
\cite{GSSinCRN:Jafar07} may not be identical to $1^{Rx}$. That is,
even when CR-Rx senses that PS is active, CR-Rx may still
successfully receive packets from CR-Tx if the received power from
CR-Tx is strong enough to satisfy (\ref{e28}). We call this
rate-distance nature \cite{GSSinCRN:Chen07} that is extended from an
overlay concept \cite{GSSinCRN:Srinivasa07}. Therefore, we consider
a more realistic mathematical model for CR link availability that
can be represented as multiplication (i.e. AND operation) of the
indicator functions of spectrum availability at CR-Tx and CR-Rx to
satisfy two simultaneous conditions for CR link availability.
\begin{prop} \label{p1}
$\mathbf{1}^{link}=\mathbf{1}^{Tx}\mathbf{1}^{Rx}$
\end{prop}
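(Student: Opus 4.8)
The plan is to reduce the claimed identity to the two simultaneous conditions for CR link availability that the text has already isolated: (i) CR-Tx senses the channel available and therefore proceeds with the opportunistic transmission, and (ii) the received SINR at CR-Rx meets the outage threshold of (\ref{e28}). First I would note that each of $\mathbf{1}^{link}$, $\mathbf{1}^{Tx}$, $\mathbf{1}^{Rx}$ is by construction a $\{0,1\}$-valued quantity evaluated at the time instant $t_n$, so it is enough to establish the event-level identity $\{\mathbf{1}^{link}=1\}=\{\mathbf{1}^{Tx}=1\}\cap\{\mathbf{1}^{Rx}=1\}$; the product representation then follows immediately, since for indicators one always has $\mathbf{1}_{A\cap B}=\mathbf{1}_A\mathbf{1}_B$, i.e.\ the AND operation on $\{0,1\}$ is multiplication.

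For the forward inclusion, assume $\mathbf{1}^{link}=1$, i.e.\ the CR link is available for opportunistic transmission. Then CR-Tx must have declared the channel available (otherwise it withholds the packet and there is no link), so $\mathbf{1}^{Tx}=1$ by Definition \ref{d1}; moreover the packet must be decodable at CR-Rx, which by the rate--distance formulation is precisely the requirement (\ref{e28}), hence $\mathbf{1}^{Rx}=1$. For the reverse inclusion, suppose $\mathbf{1}^{Tx}=1$ and $\mathbf{1}^{Rx}=1$ simultaneously: the first means CR-Tx carries out the transmission, and the second means $\mathrm{SINR}_{CR-Rx}\geq\eta_{outage}$ so CR-Rx successfully receives; both simultaneous conditions for link availability are met, hence $\mathbf{1}^{link}=1$. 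Combining the two inclusions yields the event identity and therefore $\mathbf{1}^{link}=\mathbf{1}^{Tx}\mathbf{1}^{Rx}$. As a consistency check, when the receiver side is never the bottleneck ($\mathbf{1}^{Rx}\equiv 1$) this collapses to $\mathbf{1}^{link}=\mathbf{1}^{Tx}$, recovering Lemma \ref{l5}.

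The step I expect to require the most care is the reverse inclusion — specifically arguing that (i) and (ii) are \emph{jointly sufficient}, i.e.\ that there is no residual coupling (for instance a collision at CR-Rx not already accounted for) that could still break the link even though CR-Tx transmits and the SINR margin is met. This is resolved by appealing to the modeling assumptions stated just before the proposition: $P_{PS}$, $P_I$ and $P_N$ are defined to aggregate \emph{all} contributions seen at CR-Rx (PS simultaneous operation for general topologies, all co-existing radio systems, and band-limited noise), so the single inequality (\ref{e28}) is, by the very definition of $\mathbf{1}^{Rx}$, the complete successful-reception condition. I would close with the remark that the identity is a pathwise statement about one realization at $t_n$ and does not use the independence of PS, CR, interference and noise; that independence assumption is only invoked later, when the transition probabilities (missed-detection and false-alarm) of the induced binary link-availability channel are computed.
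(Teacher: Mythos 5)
Your argument is correct and is essentially the paper's own justification: the authors do not give a formal proof but argue in the prose immediately preceding the proposition that link availability requires the two simultaneous conditions (CR-Tx senses the channel available, and CR-Rx satisfies the SINR condition (\ref{e28})), and that the product of the two indicators realizes this AND operation. Your write-up merely formalizes that same reasoning as a two-way event inclusion, so there is nothing to add.
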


To obtain the spectrum availability at CR-Rx (i.e.
$\mathbf{1}^{Rx}$) and to eliminate hidden terminal problem, a
handshake mechanism has been proposed by sending Request To Send
(RTS) and Clear To Send (CTS) frames. However, the effectiveness of
RTS/CTS degrades in general ad hoc networks \cite{GSSinCRN:Xu02}.
Furthermore, since CRs have lower priority in the co-existing
primary/secondary communication model, CRs should cherish the
venerable duration for transmission and reduce the overhead caused
by information exchange and increases spectrum utilization
accordingly. Therefore, the next challenge would be that
$\mathbf{1}^{Rx}$ cannot be known \emph{a priori} at CR-Tx, due to
no centralized coordination nor information exchange in advance
among CRs when CR-Tx wants to transmit. As a result, general
spectrum sensing turns out to be a composite hypothesis testing. In
this paper, we introduce statistical inference that is seldom
applied in traditional spectrum sensing to predict/estimate spectrum
availability at CR-Rx and to regard it as performance lower bound in
general spectrum sensing.

Further examining \textbf{Proposition~\ref{p1}}, we see that
prediction of $\mathbf{1}^{Rx}$ is necessary when
$\mathbf{1}^{Tx}=1$, which is equivalent to prediction of
$\mathbf{1}^{link}$. In this paper, we model $\mathbf{1}^{Rx}$ when
$\mathbf{1}^{Tx}=1$ as a Bernoulli process with the probability of
spectrum availability at CR-Rx
$\Pr(\mathbf{1}^{Rx}=1|\mathbf{1}^{Tx}=1)=\alpha$. The value of
$\alpha$ exhibits spatial behavior of CR-Tx and CR-Rx and thus
impacts of hidden terminal problem. If $\alpha$ is large, CR-Rx is
expected to be close to CR-Tx and hidden terminal problem rarely
occurs (and vise versa).

\section{General Spectrum Sensing}
The prediction of $\mathbf{1}^{Rx}$ at CR-Tx can be modeled as a
hypothesis testing, that is, detecting $\mathbf{1}^{Rx}$ with
\emph{a priori} probability $\alpha$ but no observation. To design
optimum detection, we consider minimum Bayesian risk criterion,
where Bayesian risk is defined by
\begin{equation} \label{e30}
    R=w\Pr(\mathbf{1}^{link}=0|\mathbf{1}^{Tx}=1)P_F+\Pr(\mathbf{1}^{link}=1|\mathbf{1}^{Tx}=1)P_M
\end{equation}
In (\ref{e30}),
$P_F=\Pr(\hat{\mathbf{1}}^{link}=1|\mathbf{1}^{link}=0,\mathbf{1}^{Tx}=1)$,
$P_M=\Pr(\hat{\mathbf{1}}^{link}=0|\mathbf{1}^{link}=1,\mathbf{1}^{Tx}=1)$,
and $w\geq 0$ denotes the normalized weighting factor to evaluate
costs of $P_F$ and $P_M$, where $\hat{\mathbf{1}}^{link}$ represents
prediction of $\mathbf{1}^{link}$. We will show that the value of
$w$ relates to the outage probability of CR link in Section V.

Since $\mathbf{1}^{Rx}$ is unavailable at CR-Tx, we have to develop
techniques to "obtain" some information of spectrum availability at
CR-Rx. Inspired by the CRN tomography
\cite{GSSinCRN:Yu09}\cite{GSSinCRN:Yu092}, we may want to derive the
statistical inference of $\mathbf{1}^{Rx}$ based on earlier
observation. It is reasonable to assume that CR-Tx can learn the
status of $\mathbf{1}^{Rx}$ at previous times when
$\mathbf{1}^{Tx}=1$, which is indexed by $n$. That is, at time $n$,
CR-Tx can learn the value of
$\mathbf{1}^{Rx}[n-1],\mathbf{1}^{Rx}[n-2],\ldots$. In other words,
we can statistically infer $\mathbf{1}^{Rx}[n]$ from
$\mathbf{1}^{Rx}[n-1]$, $\mathbf{1}^{Rx}[n-2]$, $\ldots$,
$\mathbf{1}^{Rx}[n-L]$, where $L$ is the observation depth. This
leads to a classical problem from Bayesian inference.
\begin{lemma} \label{l8}
Through the Laplace formula \cite{GSSinCRN:Casella02}, the estimated
probability of spectrum availability at CR-Rx is
\begin{equation} \label{e1}
    \hat{\alpha}=\frac{N+1}{L+2}
\end{equation}
where $N=\sum_{l=1}^{L}{\mathbf{1}^{Rx}[n-l]}$.
\end{lemma}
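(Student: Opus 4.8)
The plan is to recognize this as an instance of Laplace's rule of succession and to obtain it by a short Bayesian posterior computation. Following the model set up above, $\mathbf{1}^{Rx}[n]$, conditioned on $\mathbf{1}^{Tx}=1$, is a Bernoulli process with unknown parameter $\alpha$, and the past samples $\mathbf{1}^{Rx}[n-1],\ldots,\mathbf{1}^{Rx}[n-L]$ are, given $\alpha$, independent Bernoulli$(\alpha)$ draws. The content of the Laplace formula is to place a uniform (maximum-entropy, i.e.\ $\mathrm{Beta}(1,1)$) prior on $\alpha$ over $[0,1]$, expressing the absence of any prior preference among the admissible availability probabilities.

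First I would write the likelihood: by conditional independence,
$\Pr\!\big(\mathbf{1}^{Rx}[n-1],\ldots,\mathbf{1}^{Rx}[n-L]\mid\alpha\big)=\alpha^{N}(1-\alpha)^{L-N}$, with $N=\sum_{l=1}^{L}\mathbf{1}^{Rx}[n-l]$. Next, Bayes' rule gives the posterior density $\pi(\alpha\mid\text{data})\propto\alpha^{N}(1-\alpha)^{L-N}$; normalizing by the Beta integral $B(a,b)=\int_{0}^{1}t^{a-1}(1-t)^{b-1}\,dt=\Gamma(a)\Gamma(b)/\Gamma(a+b)$ identifies the posterior as $\mathrm{Beta}(N+1,\,L-N+1)$.

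Then I would form the estimate. Since there is no observation at time $n$ itself, the natural predictor is $\hat\alpha=\Pr(\mathbf{1}^{Rx}[n]=1\mid\text{data})$, and because $\Pr(\mathbf{1}^{Rx}[n]=1\mid\alpha)=\alpha$, the law of total probability makes this the posterior mean $\mathrm{E}[\alpha\mid\text{data}]=\int_{0}^{1}\alpha\,\pi(\alpha\mid\text{data})\,d\alpha=B(N+2,\,L-N+1)/B(N+1,\,L-N+1)$. Applying the identity $\Gamma(x+1)=x\Gamma(x)$ collapses this ratio to $(N+1)/(L+2)$, which is precisely (\ref{e1}).

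The Beta-function bookkeeping is routine; the only real modeling point—and where I would be most careful—is to justify the uniform prior as the precise meaning of the ``Laplace formula,'' and to carry the conditioning on $\mathbf{1}^{Tx}=1$ consistently so that the $L$ past observations and the future sample $\mathbf{1}^{Rx}[n]$ are all governed by the same Bernoulli$(\alpha)$ mechanism. Once that is in place, the derivation above is immediate.
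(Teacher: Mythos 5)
Your derivation is correct and is exactly the standard argument the paper is invoking: the paper gives no proof of this lemma, simply citing the Laplace formula (rule of succession) from \cite{GSSinCRN:Casella02}, and your uniform-prior/Beta-posterior computation yielding the predictive probability $(N+1)/(L+2)$ is precisely that result. Your care about conditioning everything on $\mathbf{1}^{Tx}=1$ is consistent with the paper's modeling of $\mathbf{1}^{Rx}$ as Bernoulli($\alpha$) given $\mathbf{1}^{Tx}=1$, so nothing further is needed.
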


\begin{prop}
Inference-based spectrum sensing at CR-Tx thus becomes
\begin{equation} \label{e13}
    \hat{\mathbf{1}}^{link}=
    \begin{cases}
    \mathbf{1}^{Tx}, &\text{if $\hat{\alpha}\geq\frac{w}{w+1}$}\\
    0, &\text{otherwise}
    \end{cases}
\end{equation}
where $\hat{\alpha}$ is in (\ref{e1}).
\end{prop}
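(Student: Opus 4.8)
The plan is to derive \eqref{e13} directly from the minimum Bayesian risk criterion \eqref{e30} by specializing it to the degenerate detection problem at hand: detecting $\mathbf{1}^{Rx}[n]$ with \emph{a priori} probability $\alpha$ but \emph{no observation}. Since there is no data, any decision rule reduces to a fixed choice between $\hat{\mathbf{1}}^{link}=\mathbf{1}^{Tx}$ (predict the link is available, which under $\mathbf{1}^{Tx}=1$ means predicting $\mathbf{1}^{Rx}=1$) and $\hat{\mathbf{1}}^{link}=0$ (predict unavailable). First I would write the conditional risk of each of these two actions. Declaring $\hat{\mathbf{1}}^{link}=\mathbf{1}^{Tx}$ incurs cost only through a false alarm, which happens exactly when $\mathbf{1}^{link}=0$; by \textbf{Proposition~\ref{p1}} and the Bernoulli model this event has probability $\Pr(\mathbf{1}^{link}=0\mid\mathbf{1}^{Tx}=1)=1-\alpha$, so its contribution to \eqref{e30} is $w(1-\alpha)$. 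Declaring $\hat{\mathbf{1}}^{link}=0$ incurs cost only through a missed detection, which happens when $\mathbf{1}^{link}=1$, an event of probability $\alpha$, contributing $\alpha$ to \eqref{e30}.

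Next I would compare the two conditional risks: the optimum Bayes rule picks $\hat{\mathbf{1}}^{link}=\mathbf{1}^{Tx}$ whenever $w(1-\alpha)\le\alpha$, i.e.\ whenever $\alpha\ge w/(w+1)$, and picks $\hat{\mathbf{1}}^{link}=0$ otherwise. This is precisely the threshold test in \eqref{e13} with the true parameter $\alpha$ in place of $\hat{\alpha}$. The final step is the plug-in (empirical Bayes / certainty-equivalence) substitution: because $\alpha$ is unknown at CR-Tx, we replace it by its estimate $\hat{\alpha}$ from \textbf{Lemma~\ref{l8}}, namely $\hat{\alpha}=(N+1)/(L+2)$ with $N=\sum_{l=1}^{L}\mathbf{1}^{Rx}[n-l]$, yielding the stated rule $\hat{\mathbf{1}}^{link}=\mathbf{1}^{Tx}$ if $\hat{\alpha}\ge w/(w+1)$ and $0$ otherwise.

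The only genuinely delicate point — and the one I would flag explicitly rather than gloss over — is the legitimacy of the last substitution: the Bayes-optimal rule is derived for known $\alpha$, so replacing $\alpha$ by $\hat{\alpha}$ does not automatically preserve optimality. The clean way to justify it is to observe that the Laplace estimate $\hat{\alpha}$ is itself the posterior mean of $\alpha$ under a uniform prior given the observations $\mathbf{1}^{Rx}[n-1],\ldots,\mathbf{1}^{Rx}[n-L]$, so that, conditioning on those observations and taking expectations inside \eqref{e30}, the posterior-expected conditional risks become $w(1-\hat{\alpha})$ and $\hat{\alpha}$ respectively; comparing these reproduces the threshold with $\hat{\alpha}$ exactly and makes the rule Bayes-optimal \emph{conditionally on the observation history}. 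Everything else is the one-line risk comparison above, so the write-up should be short; I would keep the emphasis on stating the two conditional risks cleanly and on this conditioning argument, with the algebraic rearrangement $w(1-\hat\alpha)\le\hat\alpha\iff\hat\alpha\ge w/(w+1)$ left as immediate.
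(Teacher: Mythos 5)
Your proposal is correct and is essentially the paper's own argument: the paper phrases the comparison as a degenerate likelihood ratio test $\alpha/(1-\alpha)\gtrless w$ with costs $C_{01}/C_{10}=w$, which is algebraically the same one-line risk comparison $w(1-\alpha)\lessgtr\alpha$ that you write out directly. Your added justification for the plug-in of $\hat{\alpha}$ — that the Laplace estimate is the posterior mean under a uniform prior and the two conditional risks are linear in $\alpha$, so the posterior-expected risks are exactly $w(1-\hat{\alpha})$ and $\hat{\alpha}$ — is a correct refinement of a step the paper performs silently.
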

\begin{proof}
Since the optimum detector under Bayesian criterion is the
likelihood ratio test \cite{GSSinCRN:Poor94}, we have
\begin{equation} \label{e41}
    \frac{\Pr(\mathbf{1}^{Rx}=1|\mathbf{1}^{Tx}=1)}{\Pr(\mathbf{1}^{Rx}=0|\mathbf{1}^{Tx}=1)}=
    \frac{\alpha}{1-\alpha}
    {\substack{
    \overset{\hat{\mathbf{1}}^{link}=1}{\geq}\\
    \underset{\hat{\mathbf{1}}^{link}=0}{<}
    }}
    \frac{C_{01}-C_{00}}{C_{10}-C_{11}}=w
\end{equation}
where $C_{ij}$ denotes the cost incurred by determining
$\hat{\mathbf{1}}^{link}=j$ when $\mathbf{1}^{link}=i$. According to
Bayesian risk in (\ref{e30}), we have $C_{11}=C_{00}=0$ and
$C_{01}/C_{10}=w$. Rearranging the inequality, we obtain the
proposition.
\end{proof}
\begin{rem}
CR-Tx believes CR link is available and forwards packets to CR-Rx if
the probability of spectrum available at CR-Rx $\alpha$ is high
enough. Otherwise, CR-Tx is prohibited from using the link even when
CR-Tx feels free for transmission because it can generate
unaffordable cost, that is, intolerable interference to PS or
collisions at CR-Rx.
\end{rem}

\section{General Cooperative Spectrum Sensing}
\subsection{Single Cooperative Node}
Spectrum sensing at cooperative node, which can be represented
$\mathbf{1}^{Co}$, is to explore more information about
$\mathbf{1}^{Rx}$ and therefore alleviates hidden terminal problem.
We can use Fig. \ref{Fig_2} to depict the scenario. In case the
existence of obstacles, $\mathbf{1}^{Rx}$ is totally orthogonal to
$\mathbf{1}^{Tx}$. $\mathbf{1}^{Co}$ is useful simply because of
more correlation between $\mathbf{1}^{Co}$ and $\mathbf{1}^{Rx}$.
From above observation, we only care about correlation of
$\mathbf{1}^{Rx}$ and $\mathbf{1}^{Co}$ when $\mathbf{1}^{Tx}=1$ and
assume
\begin{align*}
    \Pr(\mathbf{1}^{Co}=1|\mathbf{1}^{Rx}=1,\mathbf{1}^{Tx}=1)&=\beta\\
    \Pr(\mathbf{1}^{Co}=0|\mathbf{1}^{Rx}=0,\mathbf{1}^{Tx}=1)&=\gamma
\end{align*}
Thus the correlation between $\mathbf{1}^{Co}$ and
$\mathbf{1}^{Rx}$, $\rho$, and corresponding properties become
\begin{equation}
    \rho =\frac{\sqrt{\alpha(1-\alpha)}(\beta+\gamma-1)}
    {\sqrt{(\alpha\beta+(1-\alpha)(1-\gamma))
    (\alpha(1-\beta)+(1-\alpha)\gamma)}}
\end{equation}
\begin{lemma} \label{l1}
$\rho$ is a strictly concave function with respect to
$\alpha\in(0,1)$ if $1<\beta+\gamma<2$ but a strictly convex
function if $0<\beta+\gamma<1$. In addition, $\mathbf{1}^{Co}$ and
$\mathbf{1}^{Rx}$ are independent if and only if $\rho=0$, i.e,
$\beta+\gamma=1$.
\end{lemma}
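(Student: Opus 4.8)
The plan is to handle the three assertions separately, reducing each to an elementary fact after the right reparametrization.

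\emph{Independence.} I would read $\rho$ as the Pearson correlation coefficient of the two binary random variables $X=\mathbf{1}^{Rx}$ and $Y=\mathbf{1}^{Co}$, conditioned on $\mathbf{1}^{Tx}=1$, whose joint law is pinned down by $\Pr(X=1)=\alpha$, $\Pr(Y=1\mid X=1)=\beta$, $\Pr(Y=0\mid X=0)=\gamma$. A one-line computation then gives $\mathrm{Cov}(X,Y)=\alpha(1-\alpha)(\beta+\gamma-1)$, $\sigma_X=\sqrt{\alpha(1-\alpha)}$, and $\sigma_Y=\sqrt{g(\alpha)(1-g(\alpha))}$ with $g(\alpha):=\alpha\beta+(1-\alpha)(1-\gamma)$, which reproduces the stated formula for $\rho$. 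Since $X$ and $Y$ are binary, a vanishing covariance already forces the whole joint law to factor into its marginals (all four cell probabilities split), so $X$ and $Y$ are independent iff $\mathrm{Cov}(X,Y)=0$; and for $\alpha\in(0,1)$ this holds iff $\beta+\gamma=1$, which is also exactly when the numerator of $\rho$, hence $\rho$ itself, is zero. I would also record the implicit nondegeneracy needed for $\rho$ to be defined, namely $\sigma_X,\sigma_Y>0$, i.e. $\alpha\in(0,1)$ and $g(\alpha)\in(0,1)$; the latter fails only when $1-\gamma$ and $\beta$ are both $0$ or both $1$, which are precisely the excluded boundary values $\beta+\gamma\in\{0,1,2\}$.

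\emph{Concavity/convexity.} Write $\rho(\alpha)=(\beta+\gamma-1)\,\phi(\alpha)$ with $\phi(\alpha):=\sqrt{\frac{\alpha(1-\alpha)}{g(\alpha)(1-g(\alpha))}}$, and note that $g$ is affine, $g(\alpha)=(1-\gamma)+(\beta+\gamma-1)\alpha$, carrying $(0,1)$ bijectively onto an open subinterval of $(0,1)$ (the boundary cases again being excluded). Because composing with an affine bijection scales the second derivative by a positive constant, it suffices to show $\phi$ is strictly concave when regarded as a function of $g$. Setting $s:=1-\gamma$ and $t:=\beta$ one has $\alpha=(g-s)/(t-s)$ and $1-\alpha=(t-g)/(t-s)$, hence $\phi=|t-s|^{-1}\sqrt{F(g)}$ with $F(g):=\frac{(g-s)(t-g)}{g(1-g)}$. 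The decisive step is the partial-fraction identity $F(g)=1-\frac{st}{g}-\frac{(1-s)(1-t)}{1-g}$, which gives at once $F''(g)=-\frac{2st}{g^{3}}-\frac{2(1-s)(1-t)}{(1-g)^{3}}$; on the relevant interval $g\in(0,1)$, and since $(\beta,\gamma)$ is not a boundary case at least one of $st$ and $(1-s)(1-t)$ is strictly positive while the other is nonnegative, so $F''<0$ there. Hence $F$ is positive and strictly concave, and $(\sqrt{F})''=\frac{2FF''-(F')^{2}}{4F^{3/2}}<0$, so $\sqrt{F}$ — and therefore $\phi$ as a function of $\alpha$ — is strictly concave. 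Finally $\rho''=(\beta+\gamma-1)\phi''$, which is negative when $\beta+\gamma-1>0$ (the regime $1<\beta+\gamma<2$, the upper bound only excluding the degenerate $\beta=\gamma=1$) and positive when $\beta+\gamma-1<0$ (the regime $0<\beta+\gamma<1$).

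The main obstacle is the concavity of $\phi$: differentiating the raw ratio of square roots twice is messy and the sign of the outcome is not transparent. What removes the difficulty is (i) noticing that $g(1-g)$ is built from an affine $g$, so one may change the variable to $g$ and only needs $\sqrt{F}$ concave, and (ii) the partial-fraction form of $F$, after which $F''<0$ is immediate and the passage from $F$ to $\sqrt{F}$ is the standard fact that a nondecreasing concave function (here $\sqrt{\,\cdot\,}$) composed with a strictly concave one is strictly concave. The remaining ingredients — the covariance identity, ``uncorrelated $\Rightarrow$ independent'' for binary variables, and keeping track of which boundary values of $(\beta,\gamma)$ to exclude — are routine.
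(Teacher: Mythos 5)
Your proof is correct, and for the concavity claim it takes a genuinely different and considerably cleaner route than the paper. The paper works directly with $f(\alpha)=\sqrt{\alpha(1-\alpha)/\bigl(g(\alpha)(1-g(\alpha))\bigr)}$, computes $f''$ by brute force as a five-term polynomial expression in $\alpha,\beta,\gamma$, and establishes $f''<0$ by pairing terms and invoking inequalities such as $\beta\gamma+(1-\beta)(1-\gamma)>\beta(1-\beta)$; the sign is not transparent until the final bound. You instead exploit the observation that $g(\alpha)=\alpha\beta+(1-\alpha)(1-\gamma)$ is affine with $1-g(\alpha)=\alpha(1-\beta)+(1-\alpha)\gamma$, change variables to $g$, and reduce everything to the partial-fraction identity $F(g)=1-\tfrac{st}{g}-\tfrac{(1-s)(1-t)}{1-g}$, from which $F''<0$ is immediate and strict concavity of $\sqrt{F}$ follows from the standard composition rule (or your explicit formula for $(\sqrt{F})''$). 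This buys a proof whose sign bookkeeping is trivial, and it also handles the boundary cases $\beta\in\{0,1\}$ or $\gamma\in\{0,1\}$ with $\beta+\gamma\notin\{0,1,2\}$, which the paper's proof formally excludes by assuming $\beta,\gamma\in(0,1)$. For the independence claim you invoke the general fact that uncorrelated binary variables are independent, whereas the paper verifies directly that $\beta+\gamma=1$ forces the conditional laws of $\mathbf{1}^{Co}$ given $\mathbf{1}^{Rx}=0$ and $\mathbf{1}^{Rx}=1$ to coincide; these are equivalent, and both are routine. One cosmetic slip: the degenerate cases where $g(\alpha)\notin(0,1)$ are exactly $(\beta,\gamma)=(0,1)$ or $(1,0)$, which lie on the line $\beta+\gamma=1$ rather than filling out all of $\beta+\gamma\in\{0,1,2\}$; this does not affect the argument since that line is excluded from both regimes anyway.
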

\begin{proof}
Let
\begin{equation*}
    f(\alpha)=\sqrt{\frac{\alpha(1-\alpha)}{(\alpha\beta+(1-\alpha)(1-\gamma))
    (\alpha(1-\beta)+(1-\alpha)\gamma)}}
\end{equation*}
Then taking first order and second order differentiation with
respect to $\alpha$, we have
\begin{align*}
    f'(\alpha)&=\frac{-\alpha^2\beta(1-\beta)+(1-\alpha)^2\gamma(1-\gamma)}
    {2(\alpha(1-\alpha))^{1/2}(\alpha\beta+(1-\alpha)(1-\gamma))^{3/2}
    (\alpha(1-\beta)+(1-\alpha)\gamma)^{3/2}}\\
    f{''}(\alpha)&=K(\alpha)[-10\alpha^2(1-\alpha)^2\beta\gamma(1-\beta)(1-\gamma)-
    4\alpha^4(1-\alpha)\beta(1-\beta)(\beta\gamma+(1-\beta)(1-\gamma))\\
    &+(3-4\alpha)\alpha^4\beta^2(1-\beta)^2-
    4\alpha(1-\alpha)^4\gamma(1-\gamma)(\beta\gamma+(1-\beta)(1-\gamma))\\
    &+(-1+4\alpha)(1-\alpha)^4\gamma^2(1-\gamma)^2]
\end{align*}
where $K(\alpha)>0$ for $\alpha,\beta,\gamma\in(0,1)$ and
$\beta+\gamma\neq1$. In addition,
$\beta\gamma+(1-\beta)(1-\gamma)-\beta(1-\beta)=
    \beta(\beta+\gamma-1)+(1-\beta)(1-\gamma)
    =\beta\gamma+(1-\beta)(1-\beta-\gamma)>0$
for $\beta,\gamma\in(0,1)$. Similarly, we have
$\beta\gamma+(1-\beta)(1-\gamma)>\gamma(1-\gamma)$ for
$\beta,\gamma\in(0,1)$. Therefore, combining the second and the
third terms, and the fourth and the last terms in the bracket of
$f{''}(\alpha)$, we have
\begin{align*}
    f{''}(\alpha)&<K(\alpha)[-10\alpha^2(1-\alpha)^2\beta\gamma(1-\beta)(1-\gamma)
    -\alpha^4\beta^2(1-\beta)^2-(1-\alpha)^4\gamma^2(1-\gamma)^2]
    <0
\end{align*}
if $\alpha\in(0,1)$. Since $\rho=(\beta+\gamma-1)f(\alpha)$, we
prove the first statement of the lemma. For the second statement,
obviously, $\rho=0$ if $\mathbf{1}^{Co}$ and $\mathbf{1}^{Rx}$ are
independent. Reversely, if $\rho=0$, i.e., $\beta+\gamma=1$, we have
\begin{align*}
    \Pr(\mathbf{1}^{Co}=1|\mathbf{1}^{Tx}=1)&=
    \sum_{s=0}^{1}{\Pr(\mathbf{1}^{Rx}=s|\mathbf{1}^{Tx}=1)\Pr(\mathbf{1}^{Co}=1|\mathbf{1}^{Rx}=s,\mathbf{1}^{Tx}=1)}\\
    &=(1-\alpha)(1-\gamma)+\alpha\beta=(1-\alpha)\beta+\alpha\beta=\beta\\
    &=\Pr(\mathbf{1}^{Co}=1|\mathbf{1}^{Rx}=1,\mathbf{1}^{Tx}=1)\\
    &=1-\gamma=\Pr(\mathbf{1}^{Co}=1|\mathbf{1}^{Rx}=0,\mathbf{1}^{Tx}=1)
\end{align*}
Similarly, we can show
$\Pr(\mathbf{1}^{Co}=0|\mathbf{1}^{Tx}=1)=\Pr(\mathbf{1}^{Co}=0|\mathbf{1}^{Rx}=1,\mathbf{1}^{Tx}=1)
=\Pr(\mathbf{1}^{Co}=0|\mathbf{1}^{Rx}=0,\mathbf{1}^{Tx}=1)$ and
complete the proof.
\end{proof}

By statistical inference, CR-Tx can learn statistical characteristic
of $\mathbf{1}^{Rx}$ and $\mathbf{1}^{Co}$, i.e.,
$\{\alpha,\beta,\gamma\}$, by previous observations. From a
viewpoint of hypothesis testing, we would like to detect
$\mathbf{1}^{Rx}$ with \emph{a priori} probability $\alpha$ and one
observation $\mathbf{1}^{Co}$, which is the detection result at the
cooperative node. In addition, probability of detection and
probability of false alarm at the cooperative node are $\beta$ and
$1-\gamma$ respectively.
\begin{prop} \label{p4}
Spectrum sensing with one cooperative node becomes
\begin{equation} \label{e15}
    \hat{\mathbf{1}}^{link}=
    \begin{cases}
    \mathbf{1}^{Tx},  &\text{if $\alpha\geq\max\{\alpha_1,\alpha_2\}$}\\
    \mathbf{1}^{Tx}\mathbf{1}^{Co}, &\text{if $\alpha_2<\alpha
    <\alpha_1, \rho>0$}\\
    \mathbf{1}^{Tx}\bar{\mathbf{1}}^{Co}, &\text{if $\alpha_1<\alpha
    <\alpha_2, \rho<0$}\\
    0,   &\text{if $\alpha\leq\min\{\alpha_1,\alpha_2\}$}
    \end{cases}
\end{equation}
where $\bar{\mathbf{1}}^{Co}$ is the complement of
$\mathbf{1}^{Co}$, $\alpha_1=w\gamma/(1-\beta+w\gamma)$ and
$\alpha_2=w(1-\gamma)/(\beta+w(1-\gamma))$.
\end{prop}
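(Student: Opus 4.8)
The plan is to reduce the problem to two one-shot Bayesian binary tests, one for each possible value of the single observation $\mathbf{1}^{Co}$, and then to assemble the four regimes of (\ref{e15}) according to where $\alpha$ lies relative to the two resulting thresholds.

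First I would reuse the setup behind (\ref{e13}): under the Bayesian risk (\ref{e30}), with $C_{00}=C_{11}=0$ and $C_{01}/C_{10}=w$, the optimum rule is the likelihood ratio test against threshold $w$, now conditioned additionally on the observed value $\mathbf{1}^{Co}=c$ with $c\in\{0,1\}$. Keeping everything conditional on $\mathbf{1}^{Tx}=1$ and using $\Pr(\mathbf{1}^{Co}=1\mid\mathbf{1}^{Rx}=1)=\beta$ and $\Pr(\mathbf{1}^{Co}=0\mid\mathbf{1}^{Rx}=0)=\gamma$, the rule ``declare $\mathbf{1}^{Rx}=1$ iff $\frac{\alpha}{1-\alpha}\cdot\frac{\Pr(\mathbf{1}^{Co}=c\mid\mathbf{1}^{Rx}=1)}{\Pr(\mathbf{1}^{Co}=c\mid\mathbf{1}^{Rx}=0)}\ge w$'' simplifies, after clearing the positive denominators, to $\alpha\ge\alpha_2$ when $c=1$ and $\alpha\ge\alpha_1$ when $c=0$, with $\alpha_1=w\gamma/(1-\beta+w\gamma)$ and $\alpha_2=w(1-\gamma)/(\beta+w(1-\gamma))$, exactly as in the statement. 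This step is just the rearrangement of a scalar inequality.

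Next I would order the two thresholds. Cross-multiplying gives $\alpha_2<\alpha_1\iff(1-\beta)(1-\gamma)<\beta\gamma\iff\beta+\gamma>1$, and by Lemma \ref{l1} the condition $\beta+\gamma>1$ is precisely $\rho>0$ (and $\beta+\gamma<1$ is $\rho<0$). So when $\rho>0$ we have $\alpha_2<\alpha_1$: for $\alpha\ge\alpha_1=\max\{\alpha_1,\alpha_2\}$ the test declares $\mathbf{1}^{Rx}=1$ for both values of $c$, hence $\hat{\mathbf{1}}^{link}=\mathbf{1}^{Tx}$; for $\alpha_2<\alpha<\alpha_1$ it declares $\mathbf{1}^{Rx}=1$ iff $c=1$, hence $\hat{\mathbf{1}}^{link}=\mathbf{1}^{Tx}\mathbf{1}^{Co}$; for $\alpha\le\alpha_2=\min\{\alpha_1,\alpha_2\}$ it declares $\mathbf{1}^{Rx}=0$ for both $c$, hence $\hat{\mathbf{1}}^{link}=0$. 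When $\rho<0$ the roles of $\alpha_1$ and $\alpha_2$ swap, and in the middle band $\alpha_1<\alpha<\alpha_2$ the test declares $\mathbf{1}^{Rx}=1$ iff $c=0$, giving $\hat{\mathbf{1}}^{link}=\mathbf{1}^{Tx}\bar{\mathbf{1}}^{Co}$, while the two extreme bands are unchanged; collecting the cases yields (\ref{e15}), with a tie ($\alpha$ equal to a threshold) breakable either way without affecting the risk.

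I do not expect a genuine obstacle, since every step is elementary algebra; the only place demanding care is the bookkeeping of which value of $\mathbf{1}^{Co}$ favors $\mathbf{1}^{Rx}=1$ --- for a ``perverse'' cooperative node with $\beta+\gamma<1$ it is $\mathbf{1}^{Co}=0$, not $\mathbf{1}^{Co}=1$ --- together with the clean invocation of Lemma \ref{l1} so that the dichotomy is stated through the sign of $\rho$. As a sanity check one can note the degenerate case $\rho=0$ ($\beta+\gamma=1$), where $\alpha_1=\alpha_2=w/(w+1)$, the two middle bands vanish, and (\ref{e15}) collapses to the cooperation-free rule (\ref{e13}).
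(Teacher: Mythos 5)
Your proposal is correct and follows essentially the same route as the paper: condition on each of the two possible values of $\mathbf{1}^{Co}$, apply the Bayesian likelihood ratio test with threshold $w(1-\alpha)/\alpha$ (equivalently, fold the prior in and test against $w$), and rearrange to get the thresholds $\alpha_1$ and $\alpha_2$. You merely make explicit what the paper compresses into ``rearranging the above inequalities'' --- namely the ordering $\alpha_2<\alpha_1\iff\beta+\gamma>1\iff\rho>0$ via Lemma \ref{l1} and the assembly of the four regimes --- which is a faithful elaboration rather than a different argument.
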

\begin{proof}
The likelihood ratio test based on observed signal $\mathbf{1}^{Co}$
can be written as follows. For $\mathbf{1}^{Co}=1$,
\begin{equation*}
    \frac{\Pr(\mathbf{1}^{Co}=1|\mathbf{1}^{Rx}=1,\mathbf{1}^{Tx}=1)}
    {\Pr(\mathbf{1}^{Co}=1|\mathbf{1}^{Rx}=0,\mathbf{1}^{Tx}=1)}=
    \frac{\beta}{1-\gamma}
    {\substack{
    \overset{\hat{\mathbf{1}}^{link}=1}{\geq}\\
    \underset{\hat{\mathbf{1}}^{link}=0}{<}
    }}
    \frac{w(1-\alpha)}{\alpha}
\end{equation*}
For $\mathbf{1}^{Co}=0$,
\begin{equation*}
    \frac{\Pr(\mathbf{1}^{Co}=0|\mathbf{1}^{Rx}=1,\mathbf{1}^{Tx}=1)}
    {\Pr(\mathbf{1}^{Co}=0|\mathbf{1}^{Rx}=0,\mathbf{1}^{Tx}=1)}=
    \frac{1-\beta}{\gamma}
    {\substack{
    \overset{\hat{\mathbf{1}}^{link}=1}{\geq}\\
    \underset{\hat{\mathbf{1}}^{link}=0}{<}
    }}
    \frac{w(1-\alpha)}{\alpha}
\end{equation*}
Rearranging the above inequalities, we obtain the proposition.
\end{proof}

It is interesting to note that cooperative spectrum sensing is not
always helpful, that is, it does not always further decrease
Bayesian risk. We see that if $\alpha$ is large (greater than
$\max\{\alpha_1,\alpha_2\}$), that is, hidden terminal problem
rarely occurs because either CR-Rx is close to CR-Tx or CR-Tx adopt
cooperative sensing to determine $\mathbf{1}^{Tx}$, prediction of
$\mathbf{1}^{Rx}$ is unnecessary at CR-Tx. On the other hand, if
$\alpha$ is small (less than $\min\{\alpha_1,\alpha_2\}$), CR-Tx is
prohibited from forwarding packets to CR-Rx even with the aid of
cooperative sensing.

In the following, we adopt minimum error probability criterion
(i.e., $w=1$) and give an insight into the condition that
cooperative sensing is helpful. Although we set $w=1$, we do not
lose generality because we can scale \emph{a priori} probability
$\alpha$ to $\alpha/(\alpha+w(1-\alpha))$ as $w\neq1$. Applying
\textbf{Lemma \ref{l1}} and the fact that
$\rho|_{\alpha=\alpha^1_C}=\rho|_{\alpha=\alpha^2_C}$ when $w=1$, we
can reach the following corollary.
\begin{cor} \label{cor1}
If we adopt minimum error probability criterion, spectrum sensing
with one cooperative node becomes
\begin{equation} \label{e16}
    \hat{\mathbf{1}}^{link}=
    \begin{cases}
    \mathbf{1}_{[\alpha\geq1/2]}\mathbf{1}^{Tx},  &\text{if $|\rho|\leq\Psi$}\\
    (\mathbf{1}_{[\rho>0]}\mathbf{1}^{Co}+\mathbf{1}_{[\rho<0]}\bar{\mathbf{1}}^{Co})\mathbf{1}^{Tx}, &\text{if $|\rho|>\Psi$}
    \end{cases}
\end{equation}
where $\mathbf{1}_{[s]}$ is an indicator function, which is equal to
1 if the statement $s$ is true else equal to 0, and
\begin{equation*}
    \Psi=\left|\frac{\beta+\gamma-1}{\sqrt{2(\beta\gamma+(1-\beta)(1-\gamma))}}\right|
\end{equation*}
\end{cor}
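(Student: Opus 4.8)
The plan is to specialize Proposition~\ref{p4} to the case $w=1$ and then re-express the resulting threshold conditions in terms of the correlation $\rho$ rather than in terms of $\alpha_1$ and $\alpha_2$. First I would recall from Proposition~\ref{p4} that, with $w=1$, $\alpha_1=\gamma/(1-\beta+\gamma)$ and $\alpha_2=(1-\gamma)/(\beta+1-\gamma)$, and observe that the decision switches from "use $\mathbf{1}^{Tx}$ alone" (or "output $0$") to "use $\mathbf{1}^{Co}$" exactly at the boundary values $\alpha=\alpha_1$ and $\alpha=\alpha_2$. The key remark, already flagged in the text, is that $\rho$ evaluated at $\alpha=\alpha_1$ equals $\rho$ evaluated at $\alpha=\alpha_2$; I would verify this by direct substitution into the formula for $\rho$, which collapses the two thresholds of Proposition~\ref{p4} into a single threshold on $|\rho|$.

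The main computational step is to evaluate $f(\alpha)$ (hence $\rho=(\beta+\gamma-1)f(\alpha)$) at $\alpha=\alpha_1$ or $\alpha=\alpha_2$ and show it equals $1/\sqrt{2(\beta\gamma+(1-\beta)(1-\gamma))}$, so that $|\rho|$ at the crossover equals $\Psi$. Plugging $\alpha_1=\gamma/(1-\beta+\gamma)$ into the three factors $\alpha(1-\alpha)$, $\alpha\beta+(1-\alpha)(1-\gamma)$, and $\alpha(1-\beta)+(1-\alpha)\gamma$, each becomes a ratio with denominator $(1-\beta+\gamma)^2$; the numerators simplify to $\gamma(1-\beta)$, $(1-\beta+\gamma)(1-\gamma)$ — wait, one should just carry the algebra through — and after cancellation $f(\alpha_1)$ reduces to $\sqrt{\gamma(1-\beta)}/\sqrt{(1-\gamma+\gamma\beta)\cdot(\dots)}$; the point is that the $(1-\beta+\gamma)$ factors cancel and what remains is symmetric enough to equal $\Psi/|\beta+\gamma-1|$. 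I would also note $\mathbf{1}_{[\alpha\ge1/2]}$ appears because when $|\rho|\le\Psi$ the rule of Proposition~\ref{p4} reduces to the no-cooperation rule of \eqref{e13} with $w=1$, i.e. threshold $\alpha\ge1/2$.

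Next I would handle the sign/monotonicity bookkeeping. By Lemma~\ref{l1}, for fixed $\beta,\gamma$ with $\beta+\gamma\neq1$ the function $\rho(\alpha)$ is strictly concave (if $\beta+\gamma>1$) or strictly convex (if $\beta+\gamma<1$) on $(0,1)$, and $\rho(0^+)=\rho(1^-)=0$; hence $|\rho(\alpha)|$ is unimodal, rising from $0$ to a single interior maximum and falling back to $0$. Therefore the set $\{\alpha:|\rho(\alpha)|>\Psi\}$ is exactly the open interval $(\min\{\alpha_1,\alpha_2\},\max\{\alpha_1,\alpha_2\})$ — precisely the middle two cases of \eqref{e15} — while $|\rho(\alpha)|\le\Psi$ corresponds to $\alpha$ outside that interval, i.e. the first and last cases of \eqref{e15}, which merge into the single expression $\mathbf{1}_{[\alpha\ge1/2]}\mathbf{1}^{Tx}$ once one checks $1/2\in[\min\{\alpha_1,\alpha_2\},\max\{\alpha_1,\alpha_2\}]$ and that $\alpha_1\ge1/2\iff\gamma\ge1-\beta\iff\beta+\gamma\ge1\iff\rho\ge0$. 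Combining the sign of $\rho$ with whether $\mathbf{1}^{Co}$ or $\bar{\mathbf{1}}^{Co}$ is used then gives the compact form $(\mathbf{1}_{[\rho>0]}\mathbf{1}^{Co}+\mathbf{1}_{[\rho<0]}\bar{\mathbf{1}}^{Co})\mathbf{1}^{Tx}$.

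The hard part will be the symbolic simplification showing $|\rho|=\Psi$ at both $\alpha_1$ and $\alpha_2$ simultaneously: the expression for $f(\alpha)$ is a square root of a rational function of $\alpha$ with three distinct quadratic-in-$\alpha$ factors, and one must be careful that the cancellations producing the clean $\sqrt{2(\beta\gamma+(1-\beta)(1-\gamma))}$ denominator really do occur and are not an artifact of a particular parameter regime. I would organize this by first writing $\alpha_1/(1-\alpha_1)=\gamma/(1-\beta)$, substituting this single ratio everywhere (which avoids handling the common denominator repeatedly), and only at the end restoring the closed form; the degenerate cases $\beta+\gamma=1$ (where $\rho\equiv0$, $\Psi=0$, and the corollary trivially reduces to \eqref{e13}) and the boundary values $\alpha\in\{0,1\}$ should be dispatched separately at the outset.
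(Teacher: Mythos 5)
Your proposal is correct and follows essentially the same route as the paper, which obtains the corollary by specializing Proposition~\ref{p4} to $w=1$, invoking the identity $\rho|_{\alpha=\alpha_1}=\rho|_{\alpha=\alpha_2}$, and using the concavity/convexity of $\rho(\alpha)$ from Lemma~\ref{l1} to turn the interval condition on $\alpha$ in \eqref{e15} into the single threshold $|\rho|\gtrless\Psi$. The algebra you flag as the hard part does go through cleanly: at $\alpha_1=\gamma/(1-\beta+\gamma)$ the factor $\alpha(1-\beta)+(1-\alpha)\gamma$ becomes $2\gamma(1-\beta)/(1-\beta+\gamma)$ and cancels against $\alpha(1-\alpha)$, leaving exactly $f(\alpha_1)^2=1/\bigl(2(\beta\gamma+(1-\beta)(1-\gamma))\bigr)$, and symmetrically at $\alpha_2$.
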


\begin{rem}
The effectiveness of a cooperative node only depends on the
correlation of spectrum availability at CR-Rx and the cooperative
node. If the correlation is low, information provided by the
cooperative node is irrelevant to the spectrum sensing which
degenerates to (\ref{e13}).
\end{rem}

By establishing a simple indicator model in link layer, we
mathematically demonstrate the limit of a cooperative node in
general spectrum sensing. It is natural to ask what will happen for
multiple cooperative nodes and how to compare sensing capability
among cooperative nodes. In the following, we provide metrics to
measure sensing capability of cooperative nodes from link and
network perspectives.

\subsection{Preliminaries} Before exploring multiple cooperative nodes, we
introduce notations and properties to systematically construct
relation between joint probability mass function (pmf) and marginal
pmf of spectrum availability at cooperative nodes. We first define
notations in the following.
\begin{defi}
For an $m\times n$ matrix $\mathbf{A}$ and two $n\times 1$ vectors
$\mathbf{u}$ and $\mathbf{v}$,
\begin{equation*}
    \begin{matrix}
    \mathbf{A}[i,\star]:&\text{the } i\text{th row of } \mathbf{A} &
    \mathbf{A}[\star,j]:&\text{the } j\text{th column of } \mathbf{A}\\
    \mathbf{A}^T:&\mathbf{A}^T[i,j]=\mathbf{A}[j,i] &
    \mathbf{A}^{RC}:&\mathbf{A}^{RC}[\star,j]=\mathbf{A}[\star,n-j+1]\\
    \mathbf{A}^{RR}:&\mathbf{A}^{RR}[i,\star]=\mathbf{A}[m-i+1,\star] &
    \mathbf{1}_{m\times n}:&\mathbf{A}[i,j]=1\\
    \mathbf{0}_{m\times n}:&\mathbf{A}[i,j]=0 &
    \mathbf{I}_{n}:&n\times n \text{ identity matrix}\\
    \mathbf{u}\odot\mathbf{v}:&\mathbf{u}\odot\mathbf{v}[i]=\mathbf{u}[i]\mathbf{v}[i]&
    \mathbf{u}\preceq\mathbf{v}:&\mathbf{u}[i]\leq \mathbf{v}[i]\\
    \|\mathbf{u}\|_p:&(\sum_i{|\mathbf{u}[i]|^p})^{1/p} &
    \mathbf{1}_{[\mathbf{u}\geq0]}:&\mathbf{1}_{[\mathbf{u}\geq0]}[i]=\mathbf{1}_{[\mathbf{u}[i]\geq0]}
    \end{matrix}
\end{equation*}
\end{defi}

Let
\begin{align}
    \mathbf{A}^{n}_k&=
    \begin{bmatrix}
    \mathbf{A}^n_{0,k} & \mathbf{A}^n_{1,k} & \cdots & \mathbf{A}^n_{k,k}
    \end{bmatrix} \quad 0\leq n\leq k \label{e3} \\
    \mathbf{G}^{(1)}_{m,k}&=
    \begin{bmatrix}
    (\mathbf{A}^0_k)^T & (\mathbf{A}^1_k)^T & (\mathbf{A}^2_k)^T & \cdots & (\mathbf{A}^m_k)^T
    \end{bmatrix}^T \label{e20} \\
    \mathbf{G}^{(0)}_{m,k}&=(\mathbf{G}^{(1)}_{m,k})^{RC} \label{e4}
\end{align}
where $\mathbf{A}^n_{m,k}$ is a $\binom{k}{n}\times\binom{k}{m}$
matrix, $\mathbf{A}^0_{m,k}=\mathbf{1}_{1\times\binom{k}{m}},0\leq
m\leq k$, $\mathbf{A}^k_{m,k}=\mathbf{0}_{1\times\binom{k}{m}},
0\leq m\leq k-1$, $\mathbf{A}^k_{k,k}=1$, and for $1\leq n\leq k-1$
\begin{align*}
    \mathbf{A}^n_{0,k}&=\mathbf{0}_{\binom{k}{n}\times1} \qquad
    \mathbf{A}^n_{k,k}=\mathbf{1}_{\binom{k}{n}\times 1} \\
    \mathbf{A}^n_{m,k}&=
    \begin{bmatrix}
    \mathbf{A}^n_{m,k-1} & \mathbf{A}^n_{m-1,k-1} \\
    \mathbf{0}_{\binom{k-1}{n-1}\times\binom{k-1}{m}} & \mathbf{A}^{n-1}_{m-1,k-1}
    \end{bmatrix},
    \quad 1\leq m\leq k-1
\end{align*}

The role of $\mathbf{A}^{1}_k$ (or $\mathbf{A}^1_{m,k},0\leq m\leq
k$) is to specify arrangements of joint pmf and marginal pmf such
that their relation can be easily established by
$\mathbf{G}^{(s)}_{m,k}, s=0,1$. In the following, we show
properties of $\mathbf{A}^n_{m,k}$ and $\mathbf{G}^{(s)}_{m,k}$.

\begin{lemma} \label{l2}
Let $\mathcal{I}_{m,k}(j)=\{i|\mathbf{A}^1_{m,k}[i,j]=1,1\leq i\leq
k\}$, $1\leq j\leq\binom{k}{m}$ and we have
\begin{align} \label{e17}
    |\mathcal{I}_{m,k}(j)|&=m\\
    \mathcal{I}_{m,k}(j)&=\mathcal{I}_{m,k}(l)\quad \text{if and only if
    $j=l$} \label{e22} \\
    \mathbf{A}^n_{m,k}[i,j]&=\mathbf{1}_{[\mathcal{I}_{m,k}(j)\supseteq\mathcal{I}_{n,k}(i)]}
    \label{e18}
\end{align}
where $|\mathcal{I}|$ denotes number of elements in the set
$\mathcal{I}$.
\end{lemma}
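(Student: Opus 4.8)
The plan is to equip every column of $\mathbf{A}^1_{m,k}$ with a combinatorial label and then read everything off from that labelling. For $k\ge 1$ and $0\le m\le k$, I would introduce the map $\sigma_{m,k}$ from $\{1,\dots,\binom{k}{m}\}$ to the family of $m$-element subsets of $\{1,\dots,k\}$, defined recursively so as to mirror the recursion defining $\mathbf{A}^n_{m,k}$: the first $\binom{k-1}{m}$ indices are sent, via $\sigma_{m,k-1}$, to the $m$-subsets of $\{1,\dots,k-1\}$ (the $m$-subsets of $\{1,\dots,k\}$ avoiding $k$), and the remaining $\binom{k-1}{m-1}$ indices are sent to $\{k\}\cup\sigma_{m-1,k-1}(\cdot)$ (the $m$-subsets containing $k$), with the forced boundary values $\sigma_{0,k}(1)=\emptyset$ and $\sigma_{k,k}(1)=\{1,\dots,k\}$. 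Since the $m$-subsets of $\{1,\dots,k\}$ partition into those avoiding $k$ and those containing $k$, and $S\mapsto S\cup\{k\}$ is a bijection from the $(m-1)$-subsets of $\{1,\dots,k-1\}$ onto the latter, a straightforward induction on $k$ shows each $\sigma_{m,k}$ is a bijection; the same induction gives $\sigma_{1,k}(i)=\{i\}$ for all $i$.

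The heart of the argument is the identity
\begin{equation*}
    \mathbf{A}^n_{m,k}[i,j]=\mathbf{1}_{[\sigma_{n,k}(i)\subseteq\sigma_{m,k}(j)]}
\end{equation*}
valid for all $0\le n,m\le k$ and all admissible row and column indices $i,j$, which I would prove by induction on $k$. The base case $k=1$ is a direct check of the four scalars $\mathbf{A}^0_{0,1},\mathbf{A}^0_{1,1},\mathbf{A}^1_{0,1},\mathbf{A}^1_{1,1}$. In the inductive step, the boundary values of $n$ and $m$ are handled at once from the explicit formulas: for $n=0$ the entry is $1$ and $\sigma_{0,k}(i)=\emptyset$ lies in every set; for $n=k$ with $m<k$ the entry is $0$ and $|\sigma_{m,k}(j)|=m<k$ forbids the containment; for $m=0$ with $1\le n\le k-1$ the entry is $0$ and the nonempty $\sigma_{n,k}(i)$ is not contained in $\emptyset$; for $m=k$ the entry is $1$ and $\sigma_{n,k}(i)\subseteq\{1,\dots,k\}$; and $n=m=k$ is trivial. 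For $1\le n,m\le k-1$ I would use Pascal's identity $\binom{k}{n}=\binom{k-1}{n}+\binom{k-1}{n-1}$ to split the row index $i$ into the top block (where $\sigma_{n,k}(i)=\sigma_{n,k-1}(i)$ avoids $k$) or the bottom block (where $\sigma_{n,k}(i)=\{k\}\cup\sigma_{n-1,k-1}(i')$), and likewise the column index $j$ into the left block ($\sigma_{m,k}(j)=\sigma_{m,k-1}(j)$) or the right block ($\sigma_{m,k}(j)=\{k\}\cup\sigma_{m-1,k-1}(j')$). In each of the four cases the corresponding block of $\mathbf{A}^n_{m,k}$ is one of $\mathbf{A}^n_{m,k-1}$, $\mathbf{A}^n_{m-1,k-1}$, $\mathbf{0}$, or $\mathbf{A}^{n-1}_{m-1,k-1}$; the inductive hypothesis turns its entry into a containment among the $\{1,\dots,k-1\}$-labels, and one checks this is equivalent to $\sigma_{n,k}(i)\subseteq\sigma_{m,k}(j)$, using only the elementary facts that a set avoiding $k$ is contained in $B$ exactly when it is contained in $B\cup\{k\}$, that $\{k\}\cup A$ is never contained in a set missing $k$, and that $\{k\}\cup A\subseteq\{k\}\cup B$ exactly when $A\subseteq B\cup\{k\}$.

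Finally I would specialize to $n=1$. Because $\sigma_{1,k}(i)=\{i\}$, the identity above reads $\mathbf{A}^1_{m,k}[i,j]=\mathbf{1}_{[i\in\sigma_{m,k}(j)]}$, so $\mathcal{I}_{m,k}(j)=\{i:\mathbf{A}^1_{m,k}[i,j]=1\}=\sigma_{m,k}(j)$. Then (\ref{e17}) is just $|\sigma_{m,k}(j)|=m$; (\ref{e22}) is the injectivity of the bijection $\sigma_{m,k}$; and (\ref{e18}) is the displayed identity rewritten with $\mathcal{I}$ in place of $\sigma$. The only real work is the bookkeeping in the four-way row/column split inside the inductive step, together with keeping the boundary indices $n\in\{0,k\}$ and $m\in\{0,k\}$ consistent with the recursion; every remaining step is formal.
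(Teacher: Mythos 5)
Your argument is correct and complete: the bijective labelling $\sigma_{m,k}$ of columns by $m$-subsets, built to mirror the block recursion, and the induction establishing $\mathbf{A}^n_{m,k}[i,j]=\mathbf{1}_{[\sigma_{n,k}(i)\subseteq\sigma_{m,k}(j)]}$ (with the boundary cases $n\in\{0,k\}$, $m\in\{0,k\}$ checked against the explicit all-ones/all-zeros definitions) immediately yield all three claims once you observe $\sigma_{1,k}(i)=\{i\}$ and hence $\mathcal{I}_{m,k}(j)=\sigma_{m,k}(j)$. The paper states this lemma without proof, so there is no authorial argument to compare against; yours is the natural one and I find no gaps.
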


\begin{rem}
Since $\mathbf{A}^1_{m,k}$ is a $k\times\binom{k}{m}$ matrix, from
(\ref{e17}), (\ref{e22}), we conclude that
$\mathbf{A}^1_{m,k}[\star,j],1\leq j\leq\binom{k}{m}$ contains all
possible permutations of $m$ ones and $k-m$ zeros.
\end{rem}

\begin{lemma} \label{t1}
Let $S_m=\sum_{i=0}^{m}{\binom{k}{i}}$. Let
$\overline{\mathbf{G}}^{(1)}_{m,k}$ and
$\underline{\mathbf{G}}^{(0)}_{m,k}$ be $S_m\times S_m$ matrices,
$\underline{\mathbf{G}}^{(1)}_{m,k}$ and
$\overline{\mathbf{G}}^{(0)}_{m,k}$ be $(2^k-S_m)\times S_m$
matrices, and $\mathbf{G}^{(s)}_{m,k}=\begin{bmatrix}
\overline{\mathbf{G}}^{(s)}_{m,k} &
\underline{\mathbf{G}}^{(s)}_{m,k}
\end{bmatrix},s=0,1$.
\begin{gather*}
    \overline{\mathbf{G}}^{(1)}_{m,k}=
    \begin{bmatrix}
    1 & \mathbf{1}_{1\times k} & \cdots & \mathbf{1}_{1\times
    \binom{k}{n}} & \cdots & \mathbf{1}_{1\times \binom{k}{m}} \\
    \mathbf{0}_{k\times 1} & \mathbf{I}_{k} & \cdots &
    \mathbf{A}^1_{n,k} & \cdots & \mathbf{A}^1_{m,k} \\
    \vdots & \ddots & \ddots & \vdots & \ddots & \vdots\\
    \mathbf{0}_{\binom{k}{n}\times1} & \cdots
    & \mathbf{0} & \mathbf{I}_{\binom{k}{n}} & \cdots &
    \mathbf{A}^n_{m,k} \\
    \vdots & \vdots & \vdots & \ddots & \ddots & \vdots\\
    \mathbf{0}_{\binom{k}{m}\times1} & \mathbf{0}_{\binom{k}{m}\times
    k} & \cdots & \cdots  & \mathbf{0} &
    \mathbf{I}_{\binom{k}{m}} \\
    \end{bmatrix}\\
    \underline{\mathbf{G}}^{(0)}_{m,k}=(\overline{\mathbf{G}}^{(1)}_{m,k})^{RC}
\end{gather*}
are nonsingular and their inverse matrices become
\begin{gather}
    (\overline{\mathbf{G}}^{(1)}_{m,k})^{-1}=
    \begin{bmatrix}
    1 & -\mathbf{1}_{1\times k} & \cdots & (-1)^n\mathbf{1}_{1\times
    \binom{k}{n}} & \cdots & (-1)^m\mathbf{1}_{1\times \binom{k}{m}} \\
    \mathbf{0}_{k\times 1} & \mathbf{I}_{k} & \cdots &
    (-1)^{1+n}\mathbf{A}^1_{n,k} & \cdots & (-1)^{1+m}\mathbf{A}^1_{m,k} \\
    \vdots & \ddots & \ddots & \vdots & \ddots & \vdots\\
    \mathbf{0}_{\binom{k}{n}\times1} & \cdots
    & \mathbf{0} & \mathbf{I}_{\binom{k}{n}} & \cdots &
    (-1)^{n+m}\mathbf{A}^n_{m,k} \\
    \vdots & \vdots & \vdots & \ddots & \ddots & \vdots\\
    \mathbf{0}_{\binom{k}{m}\times1} & \mathbf{0}_{\binom{k}{m}\times
    k} & \cdots & \cdots  & \mathbf{0} &
    \mathbf{I}_{\binom{k}{m}}
    \end{bmatrix}\\
    (\underline{\mathbf{G}}^{(0)}_{m,k})^{-1}=(\overline{\mathbf{G}}^{(1)}_{m,k})^{-RR}
\end{gather}
\end{lemma}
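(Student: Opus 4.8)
The plan is to exploit the block structure that is already visible in the statement. By the construction (\ref{e3})--(\ref{e20}) together with \textbf{Lemma~\ref{l2}}, the entry of $\mathbf{A}^n_{l,k}$ indexed by the $n$-subset $\mathcal{I}_{n,k}(i)$ and the $l$-subset $\mathcal{I}_{l,k}(j)$ is $\mathbf{1}_{[\mathcal{I}_{n,k}(i)\subseteq\mathcal{I}_{l,k}(j)]}$ by (\ref{e18}); hence $\mathbf{A}^n_{l,k}=\mathbf{0}$ whenever $l<n$ (a smaller set cannot contain a larger one) and $\mathbf{A}^n_{n,k}=\mathbf{I}_{\binom{k}{n}}$ (by (\ref{e22}) distinct columns index distinct $n$-subsets). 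This identifies $\overline{\mathbf{G}}^{(1)}_{m,k}$ with the displayed block matrix, whose $(n,l)$ block for $0\le n\le l\le m$ is $\mathbf{A}^n_{l,k}$ and whose diagonal blocks are identities; in particular it is block upper triangular with unit determinant, hence nonsingular. It therefore suffices to check that the displayed candidate $\mathbf{M}$, whose $(n,l)$ block is $(-1)^{n+l}\mathbf{A}^n_{l,k}$ for $0\le n\le l\le m$ and $\mathbf{0}$ otherwise, satisfies $\overline{\mathbf{G}}^{(1)}_{m,k}\mathbf{M}=\mathbf{I}_{S_m}$, since a right inverse of a square matrix is its inverse.

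The $(n,n')$ block of the product, for $n\le n'$, is $\sum_{l=n}^{n'}(-1)^{l+n'}\mathbf{A}^n_{l,k}\mathbf{A}^l_{n',k}$ (the endpoints $l=n$ and $l=n'$ supplying the identity factors). Using (\ref{e18}) once more, the entry of $\mathbf{A}^n_{l,k}\mathbf{A}^l_{n',k}$ at row $\mathcal{I}_{n,k}(i)$, column $\mathcal{I}_{n',k}(j)$ counts the $l$-subsets $T$ with $\mathcal{I}_{n,k}(i)\subseteq T\subseteq\mathcal{I}_{n',k}(j)$, which is $\binom{n'-n}{l-n}$ when $\mathcal{I}_{n,k}(i)\subseteq\mathcal{I}_{n',k}(j)$ and $0$ otherwise. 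Hence the corresponding entry of the product block equals $\mathbf{1}_{[\mathcal{I}_{n,k}(i)\subseteq\mathcal{I}_{n',k}(j)]}\sum_{l=n}^{n'}(-1)^{l+n'}\binom{n'-n}{l-n}=\mathbf{1}_{[\mathcal{I}_{n,k}(i)\subseteq\mathcal{I}_{n',k}(j)]}(-1)^{n+n'}(1-1)^{n'-n}$ by the binomial theorem. For $n<n'$ this vanishes, and for $n=n'$ it reduces to $\mathbf{1}_{[i=j]}$; thus $\overline{\mathbf{G}}^{(1)}_{m,k}\mathbf{M}=\mathbf{I}_{S_m}$, which establishes both the nonsingularity and the inverse formula for $\overline{\mathbf{G}}^{(1)}_{m,k}$.

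For the $^{(0)}$ statements I would introduce the reversal (exchange) matrix $\mathbf{J}$, the square matrix with ones on the anti-diagonal, which satisfies $\mathbf{J}^2=\mathbf{I}$: right multiplication by $\mathbf{J}$ reverses column order and left multiplication reverses row order, so $\mathbf{B}^{RC}=\mathbf{B}\mathbf{J}$ and $\mathbf{B}^{RR}=\mathbf{J}\mathbf{B}$ for conformable square $\mathbf{B}$. Consequently $(\mathbf{B}^{RC})^{-1}=\mathbf{J}^{-1}\mathbf{B}^{-1}=\mathbf{J}\mathbf{B}^{-1}=(\mathbf{B}^{-1})^{RR}$, i.e.\ $(\mathbf{B}^{RC})^{-1}=\mathbf{B}^{-RR}$. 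Since $\underline{\mathbf{G}}^{(0)}_{m,k}=(\overline{\mathbf{G}}^{(1)}_{m,k})^{RC}$ (a direct consequence of $\mathbf{G}^{(0)}_{m,k}=(\mathbf{G}^{(1)}_{m,k})^{RC}$ in (\ref{e4}) together with the block layout of $\mathbf{G}^{(1)}_{m,k}$), applying this identity with $\mathbf{B}=\overline{\mathbf{G}}^{(1)}_{m,k}$ immediately yields the nonsingularity of $\underline{\mathbf{G}}^{(0)}_{m,k}$ and the stated formula $(\underline{\mathbf{G}}^{(0)}_{m,k})^{-1}=(\overline{\mathbf{G}}^{(1)}_{m,k})^{-RR}$.

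The main obstacle I anticipate is bookkeeping rather than depth: one must align the row/column indexing of every block with the subset labels $\mathcal{I}_{\cdot,k}(\cdot)$ furnished by \textbf{Lemma~\ref{l2}}, confirm that the sign exponents in the various rows of the displayed inverse ($(-1)^l$ in the first block-row, $(-1)^{1+l}$ in the second, and so on) really do collapse to the single formula $(-1)^{n+l}$, and treat the boundary values $l=n$ and $l=n'$ of the summation uniformly. Once the chain-counting identity for $\mathbf{A}^n_{l,k}\mathbf{A}^l_{n',k}$ is in place the vanishing is nothing more than $(1-1)^{n'-n}=0$; an alternative, more computational route would be induction on $k$ via the recursive block definition of $\mathbf{A}^n_{m,k}$, but the chain-counting argument keeps the combinatorics minimal.
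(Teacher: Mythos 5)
Your proof is correct and complete. Note that the paper itself states Lemma~\ref{t1} without any proof, so there is no argument of the authors' to compare against; your write-up actually supplies the missing justification. The two key observations -- that (\ref{e18}) makes $\mathbf{A}^n_{l,k}$ the containment-incidence matrix between $n$-subsets and $l$-subsets (so the block matrix is unipotent upper triangular, hence nonsingular), and that the entry of $\mathbf{A}^n_{l,k}\mathbf{A}^l_{n',k}$ counts intermediate $l$-subsets, giving $\binom{n'-n}{l-n}$ on the support of the containment indicator so that the alternating block sum collapses to $(-1)^{n+n'}(1-1)^{n'-n}$ -- are exactly the right ones; this is the standard M\"obius-inversion computation for the Boolean lattice, specialized to ranks $0$ through $m$. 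The sign bookkeeping checks out: the displayed inverse has $(n,l)$ block $(-1)^{n+l}\mathbf{A}^n_{l,k}$, consistent with the row-by-row exponents in the statement. The reduction of the $(0)$-superscript claims to the identity $(\mathbf{B}\mathbf{J})^{-1}=\mathbf{J}\mathbf{B}^{-1}$ for the exchange matrix $\mathbf{J}$ is also right, and correctly interprets the paper's $\mathbf{B}^{-RR}$ as $(\mathbf{B}^{-1})^{RR}$. The only caveat worth flagging is that the dimensions the lemma assigns to $\underline{\mathbf{G}}^{(1)}_{m,k}$ and $\overline{\mathbf{G}}^{(0)}_{m,k}$ appear transposed relative to the column-partition actually used in Lemma~\ref{t2}; this is an inconsistency in the paper's statement, not in your argument, and it does not affect the square blocks you invert.
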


Let
$\mathbf{p}^{(s)}_{m,k}[i]=\Pr(\mathbf{1}^{Co}_1=\mathbf{A}^1_{m,k}[1,i],\ldots,\mathbf{1}^{Co}_k=\mathbf{A}^1_{m,k}[k,i]|
    \mathbf{1}^{Rx}=s,\mathbf{1}^{Tx}=1)$, where $s=0,1$, $0\leq m\leq k$ and $1\leq i\leq\binom{k}{m}$, $\mathbf{1}^{Co}_i$
denotes spectrum availability at the $i$th cooperative node, and let
\begin{equation} \label{e19}
    \mathbf{P}^{(s)}_k=\begin{bmatrix}(\mathbf{p}^{(s)}_{0,k})^T &
    (\mathbf{p}^{(s)}_{1,k})^T & \cdots & (\mathbf{p}^{(s)}_{k,k})^T
    \end{bmatrix}^T
\end{equation}
Therefore, $\mathbf{P}^{(s)}_k$ characterizes joint pmf of spectrum
availability at $k$ cooperative nodes,
$\mathbf{1}^{Co}_1,\ldots,\mathbf{1}^{Co}_k$. Similarly, we divide
$\mathbf{P}^{(s)}_k$ into two parts. Let
$\overline{\mathbf{P}}^{(1)}_{m,k}$ and
$\underline{\mathbf{P}}^{(0)}_{m,k}$ be $S_m\times1$ vectors,
$\underline{\mathbf{P}}^{(1)}_{m,k}$ and
$\overline{\mathbf{P}}^{(0)}_{m,k}$ be $(2^k-S_m)\times1$ vectors,
and $\mathbf{P}^{(s)}_k=\begin{bmatrix}
(\overline{\mathbf{P}}^{(s)}_{m,k})^T &
(\underline{\mathbf{P}}^{(s)}_{m,k})^T
\end{bmatrix}^T,s=0,1$. In addition, let
\begin{equation*} 
    \mathbf{q}^{(s)}_{m,k}[j]=\Pr(\mathbf{1}^{Co}_{k_1}=s,\ldots,\mathbf{1}^{Co}_{k_m}=s|
    \mathbf{1}^{Rx}=s,\mathbf{1}^{Tx}=1,\{k_1,\ldots,k_m\}=\mathcal{I}_{m,k}(j))
\end{equation*}
$s=0,1$, $1\leq m\leq k$ and $1\leq j\leq \binom{k}{m}$, which
specifies the $m$th order marginal pmf of
$\mathbf{1}^{Co}_1,\ldots,\mathbf{1}^{Co}_k$. Arrange them into a
vector form, we define
\begin{equation}
    \mathbf{Q}^{(s)}_{m,k}=\begin{bmatrix}1 & (\mathbf{q}^{(s)}_{1,k})^T & (\mathbf{q}^{(s)}_{2,k})^T & \cdots &
    (\mathbf{q}^{(s)}_{m,k})^T\end{bmatrix}^T,s=0,1 \label{e7}
\end{equation}
Please note that
$\mathbf{q}^{(1)}_{k,k}=\mathbf{Q}^{(1)}_{k,k}[2^k]=\mathbf{P}^{(1)}_k[2^k]$
and
$\mathbf{q}^{(0)}_{k,k}=\mathbf{Q}^{(0)}_{k,k}[2^k]=\mathbf{P}^{(0)}_k[1]$.
\begin{lemma} \label{t2}
Marginal and joint pmf of spectrum availability at k cooperative
nodes satisfy
$\mathbf{G}^{(s)}_{m,k}\mathbf{P}^{(s)}_k=\mathbf{Q}^{(s)}_{m,k}$ or
\begin{align}
    \overline{\mathbf{P}}^{(1)}_{k,K}&=(\overline{\mathbf{G}}^{(1)}_{k,K})^{-1}
    (\mathbf{Q}^{(1)}_{k,K}-\underline{\mathbf{G}}^{(1)}_{k,K}\underline{\mathbf{P}}^{(1)}_{k,K})\\
    \underline{\mathbf{P}}^{(0)}_{k,K}&=(\underline{\mathbf{G}}^{(0)}_{k,K})^{-1}
    (\mathbf{Q}^{(0)}_{k,K}-\overline{\mathbf{G}}^{(0)}_{k,K}\overline{\mathbf{P}}^{(0)}_{k,K})
\end{align}
\end{lemma}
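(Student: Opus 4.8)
The plan is to first establish the single master identity $\mathbf{G}^{(s)}_{m,k}\mathbf{P}^{(s)}_k=\mathbf{Q}^{(s)}_{m,k}$ and then obtain the two displayed equalities as a purely linear-algebraic corollary, using the block inverses already computed in Lemma~\ref{t1}.

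\textbf{Step 1: the case $s=1$, by the law of total probability.} I would expand $\mathbf{G}^{(1)}_{m,k}\mathbf{P}^{(1)}_k$ one row at a time. By (\ref{e20}) and (\ref{e3}) the rows of $\mathbf{G}^{(1)}_{m,k}$ are indexed by pairs $(n,i)$ with $0\le n\le m$, $1\le i\le\binom{k}{n}$, while by (\ref{e19}) the entries of $\mathbf{P}^{(1)}_k$ are indexed by pairs $(m',i')$ with $0\le m'\le k$, $1\le i'\le\binom{k}{m'}$, the label $(m',i')$ naming the joint outcome of $\mathbf{1}^{Co}_1,\dots,\mathbf{1}^{Co}_k$ whose set of ones is exactly $\mathcal{I}_{m',k}(i')$. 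Hence the $(n,i)$ component of the product is $\sum_{m',i'}\mathbf{A}^n_{m',k}[i,i']\,\mathbf{p}^{(1)}_{m',k}[i']$, and by (\ref{e18}) the coefficient $\mathbf{A}^n_{m',k}[i,i']$ equals $1$ precisely when $\mathcal{I}_{m',k}(i')\supseteq\mathcal{I}_{n,k}(i)$. Since, by (\ref{e17})--(\ref{e22}), the map $(m',i')\mapsto\mathcal{I}_{m',k}(i')$ runs exactly once over every subset of $\{1,\dots,k\}$, the sum collects $\Pr(\,\cdot\mid\mathbf{1}^{Rx}=1,\mathbf{1}^{Tx}=1)$ over all outcomes whose set of ones contains $\mathcal{I}_{n,k}(i)$; marginalizing out the remaining coordinates, this equals $\Pr(\mathbf{1}^{Co}_{\kappa}=1\text{ for all }\kappa\in\mathcal{I}_{n,k}(i)\mid\mathbf{1}^{Rx}=1,\mathbf{1}^{Tx}=1)=\mathbf{q}^{(1)}_{n,k}[i]$ for $n\ge1$, and equals the sum of all joint probabilities, i.e.\ $1$, for $n=0$. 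The ordering of these components matches that of $\mathbf{Q}^{(1)}_{m,k}$ in (\ref{e7}) block by block, so $\mathbf{G}^{(1)}_{m,k}\mathbf{P}^{(1)}_k=\mathbf{Q}^{(1)}_{m,k}$; note that (\ref{e22}) is exactly what makes each $\mathbf{q}^{(1)}_{n,k}[i]$ well defined, independent of the chosen representative of $\mathcal{I}_{n,k}(i)$.

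\textbf{Step 2: the case $s=0$, by complementation.} Applying $x\mapsto 1-x$ to every $\mathbf{1}^{Co}_i$, keeping the conditioning event $\{\mathbf{1}^{Rx}=0,\mathbf{1}^{Tx}=1\}$ fixed, turns ``all of $\mathcal{I}_{n,k}(i)$ equal $0$'' into ``all of $\mathcal{I}_{n,k}(i)$ equal $1$'' and replaces the outcome with ones-set $C$ by the one with ones-set $C^{c}$. Feeding the complemented family (still $k$ binary variables under a fixed conditioning) into Step~1 therefore yields $\mathbf{G}^{(1)}_{m,k}\,\Pi\,\mathbf{P}^{(0)}_k=\mathbf{Q}^{(0)}_{m,k}$, where $\Pi$ is the coordinate permutation induced by $C\mapsto C^{c}$. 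To conclude $\mathbf{G}^{(0)}_{m,k}\mathbf{P}^{(0)}_k=\mathbf{Q}^{(0)}_{m,k}$ it remains to verify $\mathbf{G}^{(1)}_{m,k}\Pi=(\mathbf{G}^{(1)}_{m,k})^{RC}=\mathbf{G}^{(0)}_{m,k}$ (cf.~(\ref{e4})), i.e.\ that the arrangement fixed by $\mathbf{A}^1_{m,k}$ is complement-reversing: $\mathcal{I}_{m',k}(i')^{c}=\mathcal{I}_{k-m',k}\!\big(\binom{k}{m'}+1-i'\big)$. I would prove this combinatorial fact by induction on $k$ from the recursion for $\mathbf{A}^n_{m,k}$ together with the boundary blocks $\mathbf{A}^0_{m,k},\mathbf{A}^k_{m,k}$. (Equivalently one may bypass $\Pi$ and compute $\mathbf{G}^{(0)}_{m,k}\mathbf{P}^{(0)}_k$ directly: the same fact turns the coefficient into $\mathbf{1}_{[\mathcal{I}_{m',k}(i')\cap\mathcal{I}_{n,k}(i)=\emptyset]}$, whence the sum is $\mathbf{q}^{(0)}_{n,k}[i]$.)

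\textbf{Step 3: the solved forms, and the main obstacle.} Reading the master identity in the notation of the statement (order $k$ inside an ambient set of $K$ nodes) and splitting it with the column blocks of Lemma~\ref{t1}, namely $\mathbf{G}^{(1)}_{k,K}=[\,\overline{\mathbf{G}}^{(1)}_{k,K}\ \ \underline{\mathbf{G}}^{(1)}_{k,K}\,]$ and $\mathbf{P}^{(1)}_K=[\,(\overline{\mathbf{P}}^{(1)}_{k,K})^{T}\ \ (\underline{\mathbf{P}}^{(1)}_{k,K})^{T}\,]^{T}$, it becomes $\overline{\mathbf{G}}^{(1)}_{k,K}\overline{\mathbf{P}}^{(1)}_{k,K}+\underline{\mathbf{G}}^{(1)}_{k,K}\underline{\mathbf{P}}^{(1)}_{k,K}=\mathbf{Q}^{(1)}_{k,K}$; since Lemma~\ref{t1} provides $\overline{\mathbf{G}}^{(1)}_{k,K}$ as a nonsingular matrix with an explicit inverse, left-multiplying by $(\overline{\mathbf{G}}^{(1)}_{k,K})^{-1}$ and moving the second term to the other side gives the first displayed equality, and the second follows identically from the $s=0$ identity because $\underline{\mathbf{G}}^{(0)}_{k,K}=(\overline{\mathbf{G}}^{(1)}_{k,K})^{RC}$ is nonsingular as well. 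The only genuinely delicate point is the combinatorial bookkeeping of Step~2 --- confirming that the recursively defined index arrangement makes $RC$ coincide with bitwise complementation of configurations; granting that, Step~1 is just the law of total probability and Step~3 is the linear algebra already packaged in Lemma~\ref{t1}.
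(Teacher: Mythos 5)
The paper states Lemma~\ref{t2} (like Lemmas~\ref{l2} and~\ref{t1}) without any proof, so there is nothing to compare your argument against; judged on its own terms, your proposal is correct. Step~1 is the law of total probability read off row by row through (\ref{e18}) together with the fact, from (\ref{e17}) and (\ref{e22}), that the columns of $\mathbf{A}^1_{m,k}$ enumerate the size-$m$ subsets exactly once; Step~3 is precisely the block inversion already packaged in Lemma~\ref{t1}, applied to the column split $\mathbf{G}^{(s)}_{k,K}=\begin{bmatrix}\overline{\mathbf{G}}^{(s)}_{k,K} & \underline{\mathbf{G}}^{(s)}_{k,K}\end{bmatrix}$ and the matching row split of $\mathbf{P}^{(s)}_K$. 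You also correctly isolate the one genuinely nontrivial ingredient: that the arrangement encoded by $\mathbf{A}^1_{m,k}$ is complement-reversing, $\mathcal{I}_{m',k}(i')^{c}=\mathcal{I}_{k-m',k}\bigl(\tbinom{k}{m'}+1-i'\bigr)$, which is exactly what makes the column reversal in (\ref{e4}) implement bitwise complementation and hence turns the $s=1$ identity into the $s=0$ one. That fact does hold and follows by the induction you indicate (the recursion splits the columns of $\mathbf{A}^1_{m,k}$ into the size-$m$ subsets omitting node $k$ followed by those containing it, and positions $i'$ and $\tbinom{k}{m}+1-i'$ land in complementary halves of the global ordering since $2^k-S_{m'-1}=S_{k-m'}$); the only caveat is that you defer this verification rather than carry it out.
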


\begin{lemma} \label{cor3}
$\mathbf{P}^{(s)}_k$ provides equivalent information to
$\mathbf{Q}^{(s)}_{k,k},s=0,1$, or specifically,
\begin{equation} \label{e8}
    \mathbf{P}^{(s)}_k=\mathbf{c}^{(s)}_k+\mathbf{Q}^{(s)}_{k,k}[2^k]\mathbf{b}^{(s)}_k,s=0,1
\end{equation}
where
\begin{align*} 
    \mathbf{c}^{(1)}_k&=
    \begin{bmatrix}
    (\overline{\mathbf{G}}^{(1)}_{k-1,k})^{-1}\mathbf{Q}^{(1)}_{k-1,k}
    \\ 0
    \end{bmatrix} \qquad
    \mathbf{c}^{(0)}_k=
    \begin{bmatrix}
    0\\
    (\underline{\mathbf{G}}^{(0)}_{k-1,k})^{-1}\mathbf{Q}^{(0)}_{k-1,k}
    \end{bmatrix} \\
    \mathbf{b}^{(1)}_k&=
    \begin{bmatrix}
    (-1)^k\mathbf{1}_{\binom{k}{0}\times1}^T & \cdots &
    (-1)^{(k-j)}\mathbf{1}_{\binom{k}{j}\times1}^T & \cdots &
    (-1)^0\mathbf{1}_{\binom{k}{k}\times1}^T
    \end{bmatrix}^T\\
    \mathbf{b}^{(0)}_k&=(-1)^k\mathbf{b}^{(1)}_k 
\end{align*}
\end{lemma}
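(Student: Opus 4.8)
The plan is to obtain \eqref{e8} by specializing Lemma~\ref{t2} to $m=k$, where the transformation matrix becomes square, and then inverting it with the explicit inverse supplied by Lemma~\ref{t1}.

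\emph{Case $s=1$.} Since $S_k=2^k$, the matrix $\mathbf{G}^{(1)}_{k,k}$ equals $\overline{\mathbf{G}}^{(1)}_{k,k}$, which Lemma~\ref{t1} asserts is nonsingular, so Lemma~\ref{t2} with $m=k$ gives $\mathbf{P}^{(1)}_k=(\overline{\mathbf{G}}^{(1)}_{k,k})^{-1}\mathbf{Q}^{(1)}_{k,k}$. I would then exploit two nesting facts. From (\ref{e7}), $\mathbf{Q}^{(1)}_{k,k}=\big[(\mathbf{Q}^{(1)}_{k-1,k})^{T}\ \ \mathbf{Q}^{(1)}_{k,k}[2^k]\big]^{T}$, the last coordinate being the single scalar $\mathbf{q}^{(1)}_{k,k}$. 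From the inverse in Lemma~\ref{t1}, partitioning $(\overline{\mathbf{G}}^{(1)}_{k,k})^{-1}$ after its first $2^k-1$ rows and columns, its leading principal submatrix is exactly $(\overline{\mathbf{G}}^{(1)}_{k-1,k})^{-1}$ (the blocks with row/column index at most $k-1$ are literally the same in the formulas for $m=k-1$ and $m=k$), its last row is $\big[\mathbf{0}_{1\times(2^k-1)}\ \ 1\big]$, and its last column is $\mathbf{b}^{(1)}_k$. The column identification is the one computation needing care: by Lemma~\ref{t1} the entry of that column in row-block $n$ is $(-1)^{n+k}\mathbf{A}^{n}_{k,k}$, and since $\mathbf{A}^{n}_{k,k}=\mathbf{1}_{\binom{k}{n}\times1}$ (with the scalar $1$ at the two ends) and $(-1)^{n+k}=(-1)^{k-n}$, this is precisely the $n$th block $(-1)^{k-n}\mathbf{1}_{\binom{k}{n}\times1}$ of $\mathbf{b}^{(1)}_k$. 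Performing the block multiplication and collecting the terms proportional to $\mathbf{Q}^{(1)}_{k,k}[2^k]$ then gives $\mathbf{P}^{(1)}_k=\big[\big((\overline{\mathbf{G}}^{(1)}_{k-1,k})^{-1}\mathbf{Q}^{(1)}_{k-1,k}\big)^{T}\ \ 0\big]^{T}+\mathbf{Q}^{(1)}_{k,k}[2^k]\,\mathbf{b}^{(1)}_k=\mathbf{c}^{(1)}_k+\mathbf{Q}^{(1)}_{k,k}[2^k]\,\mathbf{b}^{(1)}_k$, which is (\ref{e8}) for $s=1$.

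\emph{Case $s=0$.} Here I would piggy-back on the $s=1$ computation through the reversal symmetry of the definitions. Again $S_k=2^k$, so $\mathbf{G}^{(0)}_{k,k}=\underline{\mathbf{G}}^{(0)}_{k,k}$ is nonsingular and, by Lemma~\ref{t1}, $(\underline{\mathbf{G}}^{(0)}_{k,k})^{-1}=(\overline{\mathbf{G}}^{(1)}_{k,k})^{-RR}$, i.e.\ the $s=1$ inverse with its rows reversed; hence $\mathbf{P}^{(0)}_k$ is the row-reversal of $(\overline{\mathbf{G}}^{(1)}_{k,k})^{-1}\mathbf{Q}^{(0)}_{k,k}$. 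Since $\mathbf{Q}^{(0)}_{k,k}$ has the same nested form as $\mathbf{Q}^{(1)}_{k,k}$, the block identity from the previous paragraph applies verbatim with $\mathbf{Q}^{(0)}$ in place of $\mathbf{Q}^{(1)}$, so $(\overline{\mathbf{G}}^{(1)}_{k,k})^{-1}\mathbf{Q}^{(0)}_{k,k}=\big[\big((\overline{\mathbf{G}}^{(1)}_{k-1,k})^{-1}\mathbf{Q}^{(0)}_{k-1,k}\big)^{T}\ \ 0\big]^{T}+\mathbf{Q}^{(0)}_{k,k}[2^k]\,\mathbf{b}^{(1)}_k$, and I reverse both terms. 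Reversing $\mathbf{b}^{(1)}_k$ sends its $n$th block $(-1)^{k-n}\mathbf{1}_{\binom{k}{n}\times1}$ to position $k-n$ with the same constant entries (using $\binom{k}{n}=\binom{k}{k-n}$), which yields $(-1)^{k}\mathbf{b}^{(1)}_k=\mathbf{b}^{(0)}_k$; reversing the other term yields the vector whose first entry is $0$ and whose remaining entries are $(\overline{\mathbf{G}}^{(1)}_{k-1,k})^{-RR}\mathbf{Q}^{(0)}_{k-1,k}=(\underline{\mathbf{G}}^{(0)}_{k-1,k})^{-1}\mathbf{Q}^{(0)}_{k-1,k}$ (again by Lemma~\ref{t1}), which is $\mathbf{c}^{(0)}_k$. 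This is (\ref{e8}) for $s=0$.

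\emph{Equivalence.} Equation (\ref{e8}) then exhibits $\mathbf{P}^{(s)}_k$ as an explicit affine image of $\mathbf{Q}^{(s)}_{k,k}$: the constant vector $\mathbf{c}^{(s)}_k$ depends only on the sub-block $\mathbf{Q}^{(s)}_{k-1,k}$, and the coefficient of $\mathbf{b}^{(s)}_k$ is the remaining coordinate $\mathbf{Q}^{(s)}_{k,k}[2^k]$, so $\mathbf{Q}^{(s)}_{k,k}$ determines $\mathbf{P}^{(s)}_k$; conversely $\mathbf{Q}^{(s)}_{k,k}=\mathbf{G}^{(s)}_{k,k}\mathbf{P}^{(s)}_k$ by Lemma~\ref{t2}, and since both matrices are nonsingular the correspondence is a bijection, so the two objects carry the same information. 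I expect the main obstacle to be the sign and index bookkeeping in Lemma~\ref{t1}'s inverse — confirming that the trailing column-block is $\mathbf{b}^{(1)}_k$ and that the leading principal submatrix is the $(k-1)$-analogue — after which everything reduces to block multiplication and the row/column-reversal relations already established.
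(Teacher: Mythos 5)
Your proof is correct, and since the paper states this lemma without proof (it is clearly meant to follow from Lemmas~\ref{t1} and~\ref{t2}), your derivation supplies exactly the intended argument: specialize Lemma~\ref{t2} to $m=k$ where $S_k=2^k$ makes the system square, read off the trailing column-block $(-1)^{n+k}\mathbf{A}^n_{k,k}=(-1)^{k-n}\mathbf{1}_{\binom{k}{n}\times1}$ of the explicit inverse as $\mathbf{b}^{(1)}_k$, identify the leading principal block with $(\overline{\mathbf{G}}^{(1)}_{k-1,k})^{-1}$, and transport everything to $s=0$ via the row/column-reversal relations. The sign and reversal bookkeeping (in particular $J\mathbf{b}^{(1)}_k=(-1)^k\mathbf{b}^{(1)}_k$ using $\binom{k}{n}=\binom{k}{k-n}$) all checks out.
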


\subsection{Multiple Cooperative Nodes}
Assume there are $K$ cooperative nodes with corresponding spectrum
availability
$\mathbf{1}^{Co}_1,\mathbf{1}^{Co}_2,\ldots,\mathbf{1}^{Co}_{K}$ and
their joint pmf conditionally on $\mathbf{1}^{Rx}=s$ and
$\mathbf{1}^{Tx}=1$, $\mathbf{P}^{(s)}_K,s=0,1$.
\begin{prop} \label{p2}
Spectrum sensing with multiple cooperative nodes becomes
\begin{equation} \label{e14}
    \hat{\mathbf{1}}^{link}=\mathbf{1}^{Tx}\bigoplus_{j=1}^{2^K}{\mathbf{\Gamma}[j]\prod_{i=1}^{K}
    {[\mathbf{A}^1_K[i,j]\mathbf{1}^{Co}_k\oplus(1-\mathbf{A}^1_K[i,j])\bar{\mathbf{1}}^{Co}_k]}}
\end{equation}
and
\begin{equation}
    R\left(\mathbf{P}^{(1)}_K,\mathbf{P}^{(0)}_K\right)=
    \sum_{i=1}^{2^K}{\min\{w(1-\alpha)\mathbf{P}^{(0)}_K[i],\alpha \mathbf{P}^{(1)}_K[i]\}}
\end{equation}
where
$\mathbf{\Gamma}=\mathbf{1}_{[\alpha\mathbf{P}^{(1)}_K-w(1-\alpha)\mathbf{P}^{(0)}_K\geq0]}$
and $\oplus$ denotes OR operation.
\end{prop}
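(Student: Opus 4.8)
The plan is to mimic the proof of Proposition~\ref{p4}, but now regard the whole vector $(\mathbf{1}^{Co}_1,\ldots,\mathbf{1}^{Co}_K)$ as a single discrete observation and apply the minimum-Bayesian-risk (likelihood ratio) test. First I would dispose of the event $\mathbf{1}^{Tx}=0$: there $\mathbf{1}^{link}=0$ with certainty, so the optimal prediction is $0$, which is why the whole rule carries the prefactor $\mathbf{1}^{Tx}$ in (\ref{e14}). On the event $\mathbf{1}^{Tx}=1$ the task is binary detection of $\mathbf{1}^{Rx}$ with prior $\alpha$ and class-conditional pmfs $\mathbf{P}^{(1)}_K$, $\mathbf{P}^{(0)}_K$; with the same cost identification used in Proposition~\ref{p4} ($C_{11}=C_{00}=0$, $C_{01}/C_{10}=w$), the likelihood ratio test sets $\hat{\mathbf{1}}^{link}=1$ exactly on the observation indices $j$ with $\alpha\mathbf{P}^{(1)}_K[j]\ge w(1-\alpha)\mathbf{P}^{(0)}_K[j]$, that is, on $\{j:\mathbf{\Gamma}[j]=1\}$.

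Next I would turn this index-based rule into the Boolean expression. By (\ref{e3}) and the Remark following Lemma~\ref{l2}, the columns of $\mathbf{A}^1_K$ enumerate all $2^K$ binary words of length $K$, in the same order used to index $\mathbf{P}^{(s)}_K$; hence, for a fixed realization, the product $\prod_{i=1}^{K}[\mathbf{A}^1_K[i,j]\mathbf{1}^{Co}_i\oplus(1-\mathbf{A}^1_K[i,j])\bar{\mathbf{1}}^{Co}_i]$ (each bracket being $\mathbf{1}^{Co}_i$ or its complement, selected by $\mathbf{A}^1_K[i,j]$) equals $1$ iff $\mathbf{1}^{Co}_i=\mathbf{A}^1_K[i,j]$ for all $i$, i.e.\ iff $j$ is the index $j^\ast$ of the observed word, and is $0$ otherwise. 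Since at most one product term is nonzero, the outer $\bigoplus_j$ returns the value of that single term, so the right-hand side of (\ref{e14}) equals $\mathbf{1}^{Tx}\mathbf{\Gamma}[j^\ast]$, which is precisely the likelihood ratio test decision; this proves the first equation (as a sanity check, specializing to $K=1$ recovers (\ref{e15})).

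Finally, for the risk I would condition on $\mathbf{1}^{Tx}=1$, where $\mathbf{1}^{link}=\mathbf{1}^{Rx}$, so that in (\ref{e30}) one has $\Pr(\mathbf{1}^{link}=0\mid\mathbf{1}^{Tx}=1)=1-\alpha$, $\Pr(\mathbf{1}^{link}=1\mid\mathbf{1}^{Tx}=1)=\alpha$, while $P_F=\sum_{j:\mathbf{\Gamma}[j]=1}\mathbf{P}^{(0)}_K[j]$ and $P_M=\sum_{j:\mathbf{\Gamma}[j]=0}\mathbf{P}^{(1)}_K[j]$. Substituting gives $R=\sum_{j:\mathbf{\Gamma}[j]=1}w(1-\alpha)\mathbf{P}^{(0)}_K[j]+\sum_{j:\mathbf{\Gamma}[j]=0}\alpha\mathbf{P}^{(1)}_K[j]$, and by the definition of $\mathbf{\Gamma}$ the retained term in each sum is exactly the smaller of $w(1-\alpha)\mathbf{P}^{(0)}_K[j]$ and $\alpha\mathbf{P}^{(1)}_K[j]$, so the two sums merge into $\sum_{j=1}^{2^K}\min\{w(1-\alpha)\mathbf{P}^{(0)}_K[j],\alpha\mathbf{P}^{(1)}_K[j]\}$; ties contribute the common value under either decision and are harmless. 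The only delicate point — the main obstacle, such as it is — is the bookkeeping in the second step: verifying that the column order of $\mathbf{A}^1_K$ matches the entry order of $\mathbf{P}^{(s)}_K$ and that the displayed product is the correct one-hot indicator of the observed word; once that is checked, the rest is the standard LRT argument plus the one-line $\min$ identity.
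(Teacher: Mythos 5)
Your proposal is correct and follows essentially the same route as the paper: identify the optimal decision on each observation index via the likelihood ratio test (yielding $\mathbf{\Gamma}$), encode the index-based rule as the Boolean expression via the one-hot product over columns of $\mathbf{A}^1_K$, and evaluate the Bayesian risk termwise using the $\min$ identity. Your write-up is in fact more explicit than the paper's at the step the authors dismiss as "combining binary arithmetic," and your treatment of the $\mathbf{1}^{Tx}=0$ case and of ties is a harmless elaboration.
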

\begin{proof} By the likelihood ratio test, we have
\begin{equation*}
    \frac{\mathbf{P}^{(1)}_K[i]}{\mathbf{P}^{(0)}_K[i]}
    {\substack{\overset{\hat{\mathbf{1}}^{link}=1}{\geq}\\
    \underset{\hat{\mathbf{1}}^{link}=0}{<}}}
    \frac{w(1-\alpha)}{\alpha}
\end{equation*}
Therefore, the optimum detector becomes
\begin{equation*}
    \mathbf{\Gamma}[i]=\hat{\mathbf{1}}^{link}|_{\mathbf{1}^{Co}_1=\mathbf{A}^1_K[1,i],\cdots,\mathbf{1}^{Co}_K=\mathbf{A}^1_K[K,i]}
    =\mathbf{1}_{[\alpha\mathbf{P}^{(1)}_K[i]-w(1-\alpha)\mathbf{P}^{(0)}_K[i]\geq0]}
\end{equation*}
With the result and combining binary arithmetic we obtain
(\ref{e14}). In terms of Bayesian risk,
\begin{align*}
    R_{opt}&=\sum_{i=1}^{2^K}{[w(1-\alpha)\mathbf{P}^{(0)}_K[i]
    \Pr(\hat{\mathbf{1}}^{link}=1|\mathbf{1}^{Co}_1=\mathbf{A}^1_K[1,i],\cdots,\mathbf{1}^{Co}_K=\mathbf{A}^1_K[K,i],\mathbf{1}^{Rx}=0,\mathbf{1}^{Tx}=1)+}\\
    &\quad \alpha\mathbf{P}^{(1)}_K[i]
    \Pr(\hat{\mathbf{1}}^{link}=0|\mathbf{1}^{Co}_1=\mathbf{A}^1_K[1,i],\cdots,\mathbf{1}^{Co}_K=\mathbf{A}^1_K[K,i],\mathbf{1}^{Rx}=1,\mathbf{1}^{Tx}=1)]\\
    &=\sum_{i=1}^{2^K}{[w(1-\alpha)\mathbf{P}^{(0)}_K[i]\mathbf{\Gamma}[i]+
    \alpha\mathbf{P}^{(1)}_K[i](1-\mathbf{\Gamma}[i])]}\\
    &=\sum_{i=1}^{2^K}{\min\{w(1-\alpha)\mathbf{P}^{(0)}_K[i],\alpha \mathbf{P}^{(1)}_K[i]\}}
\end{align*}
\end{proof}

We consider two cooperative nodes insightfully to understand how
multiple cooperative nodes improve performance of spectrum sensing.
For spectrum availability at two cooperative nodes
$\mathbf{1}^{Co}_1,\mathbf{1}^{Co}_2$, let
$\mathbf{q}^{(1)}_{1,2}=\begin{bmatrix} \beta_1 & \beta_2
\end{bmatrix}^T$ and $\mathbf{q}^{(0)}_{1,2}=\begin{bmatrix} \gamma_1 & \gamma_2
\end{bmatrix}^T$. In addition, their joint probability is specified as follows. When $\mathbf{1}^{Rx}=1$,
PS is likely inactive and then $\mathbf{1}^{Co}_1$ and
$\mathbf{1}^{Co}_2$ are independent. On the other hand, when
$\mathbf{1}^{Rx}=0$, $\mathbf{1}^{Co}_1$ and $\mathbf{1}^{Co}_2$ are
correlated with correlation $\rho_{12}$. With the constraints on
$\mathbf{q}^{(0)}_{1,2}$, we have
\begin{equation} \label{e33}
    \mathbf{P}^{(0)}_2=
    \begin{bmatrix}
    \gamma_1\gamma_2+\Delta &
    (1-\gamma_1)\gamma_2-\Delta &
    \gamma_1(1-\gamma_2)-\Delta &
    (1-\gamma_1)(1-\gamma_2)+\Delta
    \end{bmatrix}^T
\end{equation}
where
$\Delta=\sqrt{\gamma_1\gamma_2(1-\gamma_1)(1-\gamma_2)}\rho_{12}$.

\subsubsection{Independent ($\rho_{12}=0$)}
In case $\mathbf{1}^{Co}_1$ and $\mathbf{1}^{Co}_2$ are
conditionally independent, it leads to conventional assumption in
cooperative spectrum sensing.
\begin{prop}
For two cooperative nodes with independent spectrum availability,
spectrum sensing becomes
\begin{equation} \label{e32}
    \hat{\mathbf{1}}^{link}=
    \begin{cases}
    \mathbf{1}^{Tx}, &\text{if $\alpha\geq\alpha_{(4)}$}\\
    \mathbf{1}^{Tx}\bigoplus_{i=1}^2{(\mathbf{1}_{[\rho_i>0]}\mathbf{1}^{Co}_i+\mathbf{1}_{[\rho_i<0]}\bar{\mathbf{1}}^{Co}_i)},
    &\text{if $\alpha_{(3)}\leq\alpha<\alpha_{(4)}$}\\
    \mathbf{1}^{Tx}(\mathbf{1}_{[\rho_k>0]}\mathbf{1}^{Co}_k+\mathbf{1}_{[\rho_k<0]}\bar{\mathbf{1}}^{Co}_k),
    &\text{if $\alpha_{(2)}<\alpha<\alpha_{(3)},k=\arg\max_iM_R(i)$} \\
    \mathbf{1}^{Tx}\prod_{i=1}^2{(\mathbf{1}_{[\rho_i>0]}\mathbf{1}^{Co}_i+\mathbf{1}_{[\rho_i<0]}\bar{\mathbf{1}}^{Co}_i)},
    &\text{if $\alpha_{(1)}<\alpha\leq\alpha_{(2)}$} \\
    0, &\text{if $\alpha\leq\alpha_{(1)}$}
    \end{cases}
\end{equation}
where
\begin{equation} \label{e36}
    (\alpha_{(1)},\alpha_{(2)},\alpha_{(3)},\alpha_{(4)})=
    \begin{cases}
    (\alpha^{C}_4,\alpha^{C}_2,\alpha^{C}_3,\alpha^{C}_1)
    &\text{if $\rho_1>0,\rho_2>0,\delta^+_{1}>\delta^+_{2}$}\\
    (\alpha^{C}_4,\alpha^{C}_3,\alpha^{C}_2,\alpha^{C}_1)
    &\text{if $\rho_1>0,\rho_2>0,\delta^+_{1}<\delta^+_{2}$}\\
    (\alpha^{C}_2,\alpha^{C}_4,\alpha^{C}_1,\alpha^{C}_3)
    &\text{if $\rho_1>0,\rho_2<0,\delta^+_{1}>\delta^-_{2}$}\\
    (\alpha^{C}_2,\alpha^{C}_1,\alpha^{C}_4,\alpha^{C}_3)
    &\text{if $\rho_1>0,\rho_2<0,\delta^+_{1}<\delta^-_{2}$}\\
    (\alpha^{C}_3,\alpha^{C}_1,\alpha^{C}_4,\alpha^{C}_2)
    &\text{if $\rho_1<0,\rho_2>0,\delta^-_{1}>\delta^+_{2}$}\\
    (\alpha^{C}_3,\alpha^{C}_4,\alpha^{C}_1,\alpha^{C}_2)
    &\text{if $\rho_1<0,\rho_2>0,\delta^-_{1}<\delta^+_{2}$}\\
    (\alpha^{C}_1,\alpha^{C}_3,\alpha^{C}_2,\alpha^{C}_4)
    &\text{if $\rho_1<0,\rho_2<0,\delta^-_{1}>\delta^-_{2}$}\\
    (\alpha^{C}_1,\alpha^{C}_2,\alpha^{C}_3,\alpha^{C}_4)
    &\text{if $\rho_1<0,\rho_2<0,\delta^-_{1}<\delta^-_{2}$}\\
    \end{cases}
\end{equation}
$\alpha^{C}_i=w\mathbf{P}^{(0)}_2[i]/(\mathbf{P}^{(1)}_2[i]+w\mathbf{P}^{(0)}_2[i])$,
$M_R(i)=\mathbf{1}_{[\rho_i\geq0]}\delta^+_{i}+\mathbf{1}_{[\rho_i<0]}\delta^-_{i}$,
$\delta^+_{i}=\gamma_i\beta_i/((1-\gamma_i)(1-\beta_i))$,
$\delta^-_{i}=(1-\gamma_i)(1-\beta_i)/(\gamma_i\beta_i)$, and
$\rho_i$ denotes the correlation between $\mathbf{1}^{Co}_i$ and
$\mathbf{1}^{Rx}$.
\end{prop}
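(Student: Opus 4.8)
The plan is to specialize Proposition~\ref{p2} to $K=2$ and read off, as the prior $\alpha$ sweeps from $0$ to $1$, which of the four observation patterns the optimal likelihood-ratio detector declares ``link available''. Throughout I assume $\beta_i+\gamma_i\neq1$ (equivalently $\rho_i\neq0$), as in the case list.

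First I would write out the two conditional joint pmf vectors. With the pattern-to-index convention induced by $\mathbf{A}^1_2$ (index $1$ is $(\mathbf{1}^{Co}_1,\mathbf{1}^{Co}_2)=(0,0)$, index $2$ is $(1,0)$, index $3$ is $(0,1)$, index $4$ is $(1,1)$), conditional independence under $\mathbf{1}^{Rx}=1$ together with $\rho_{12}=0$ under $\mathbf{1}^{Rx}=0$ gives $\mathbf{P}^{(1)}_2=[(1-\beta_1)(1-\beta_2),\,\beta_1(1-\beta_2),\,(1-\beta_1)\beta_2,\,\beta_1\beta_2]^T$ and $\mathbf{P}^{(0)}_2=[\gamma_1\gamma_2,\,(1-\gamma_1)\gamma_2,\,\gamma_1(1-\gamma_2),\,(1-\gamma_1)(1-\gamma_2)]^T$, i.e. (\ref{e33}) with $\Delta=0$. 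By Proposition~\ref{p2}, $\mathbf{\Gamma}[i]=\mathbf{1}_{[\alpha\mathbf{P}^{(1)}_2[i]\ge w(1-\alpha)\mathbf{P}^{(0)}_2[i]]}=\mathbf{1}_{[\alpha\ge\alpha^C_i]}$, since $t\mapsto wt/(1+wt)$ is strictly increasing and so $\alpha^C_i=w\mathbf{P}^{(0)}_2[i]/(\mathbf{P}^{(1)}_2[i]+w\mathbf{P}^{(0)}_2[i])$ is strictly decreasing in the likelihood ratio $\Lambda_i:=\mathbf{P}^{(1)}_2[i]/\mathbf{P}^{(0)}_2[i]$. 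Hence the accepted set $\{i:\alpha\ge\alpha^C_i\}$ grows monotonically with $\alpha$: empty below all four thresholds, all of $\{1,2,3,4\}$ above them, and in between it always consists of the patterns with the largest $\Lambda_i$.

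Next I would order the four likelihood ratios. By the product structure, $\Lambda_i$ factors over the two nodes as $L_1(a)L_2(b)$, where $L_i(1)=\beta_i/(1-\gamma_i)$, $L_i(0)=(1-\beta_i)/\gamma_i$, and $L_i(1)/L_i(0)=\delta^+_i$, $L_i(0)/L_i(1)=\delta^-_i$. Since $\operatorname{sgn}(\delta^+_i-1)=\operatorname{sgn}(\beta_i+\gamma_i-1)=\operatorname{sgn}(\rho_i)$ by Lemma~\ref{l1}, the reading of node $i$ pointing toward $\mathbf{1}^{Rx}=1$ is exactly $Z_i:=\mathbf{1}_{[\rho_i>0]}\mathbf{1}^{Co}_i+\mathbf{1}_{[\rho_i<0]}\bar{\mathbf{1}}^{Co}_i$; that reading contributes the larger per-node ratio $\max\{L_i(0),L_i(1)\}$, and $M_R(i)=\max\{\delta^+_i,\delta^-_i\}$ is the ratio of the larger per-node ratio to the smaller. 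It follows that among the four patterns the ``doubly aligned'' one $(Z_1,Z_2)=(1,1)$ has the strictly largest $\Lambda$, the ``doubly anti-aligned'' one $(0,0)$ the strictly smallest, while for the two mixed patterns $\Lambda|_{(Z_1,Z_2)=(1,0)}$ exceeds $\Lambda|_{(Z_1,Z_2)=(0,1)}$ precisely when $M_R(1)>M_R(2)$. Reversing this order gives the order statistics $\alpha_{(1)}\le\alpha_{(2)}\le\alpha_{(3)}\le\alpha_{(4)}$ of $\{\alpha^C_1,\dots,\alpha^C_4\}$, and a short computation in each sign branch recovers (\ref{e36}); for instance if $\rho_1>0,\rho_2>0,\delta^+_1>\delta^+_2$ then $\Lambda_4>\Lambda_2>\Lambda_3>\Lambda_1$, hence $(\alpha_{(1)},\alpha_{(2)},\alpha_{(3)},\alpha_{(4)})=(\alpha^C_4,\alpha^C_2,\alpha^C_3,\alpha^C_1)$, and the other seven branches follow by the symmetries $\mathbf{1}^{Co}_i\leftrightarrow\bar{\mathbf{1}}^{Co}_i$ and $1\leftrightarrow2$.

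Finally, on each of the five $\alpha$-intervals I would convert the accepted set into the claimed Boolean rule via the AND/OR representation of Proposition~\ref{p2}: for $\alpha\le\alpha_{(1)}$ nothing is accepted, so $\hat{\mathbf{1}}^{link}=0$; on $\alpha_{(1)}<\alpha\le\alpha_{(2)}$ only $(Z_1,Z_2)=(1,1)$ is accepted, giving $\mathbf{1}^{Tx}\prod_{i=1}^{2}Z_i$; adding the larger mixed pattern on $\alpha_{(2)}<\alpha<\alpha_{(3)}$ makes ``accept'' equivalent to $Z_k=1$ with $k=\arg\max_i M_R(i)$, giving $\mathbf{1}^{Tx}Z_k$; adding the second mixed pattern on $\alpha_{(3)}\le\alpha<\alpha_{(4)}$ makes it $Z_1$ OR $Z_2$, i.e. $\mathbf{1}^{Tx}\bigoplus_{i=1}^{2}Z_i$; and for $\alpha\ge\alpha_{(4)}$ every pattern is accepted, giving $\mathbf{1}^{Tx}$. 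This is exactly (\ref{e32}). The main obstacle is the bookkeeping: fixing the pattern-to-index correspondence coming from $\mathbf{A}^1_{1,2}$ and then verifying that the labels $\alpha^C_i$ land in the correct order-statistic slots in all eight sign cases; the boundary ties (the $\le$ versus $<$ placement) are immaterial to the Bayes risk and are resolved by the $\mathbf{1}_{[\,\cdot\,\ge0]}$ convention of Proposition~\ref{p2}.
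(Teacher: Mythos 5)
Your proposal is correct and follows the same route the paper intends: the paper gives no explicit proof of this proposition, remarking only that the orderings in (\ref{e36}) are "listed" and that the sensing rule "is determined according to the value of $\alpha$ by arguing (\ref{e14})", i.e.\ by specializing \textbf{Proposition \ref{p2}} to $K=2$ with $\Delta=0$ in (\ref{e33}) and tracking how the accepted set $\{i:\alpha\geq\alpha^C_i\}$ grows with $\alpha$. Your write-up supplies exactly the details the paper omits (the pattern-to-index convention from $\mathbf{A}^1_{1,2}$, the factorization of the likelihood ratios through $\delta^{\pm}_i$ and $M_R$, and the case check of all eight sign branches), and your observation that the boundary placement of $\leq$ versus $<$ is immaterial to the Bayes risk is accurate.
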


We list all valid orders of
$(\alpha^{C}_1,\alpha^{C}_2,\alpha^{C}_3,\alpha^{C}_4)$ in
(\ref{e36}) and the the spectrum sensing is determined according to
the value of $\alpha$ by arguing (\ref{e14}). We observe that
$\alpha$ is high ($\alpha_{(3)}\leq\alpha<\alpha_{(4)}$), any one of
cooperative nodes helps the spectrum sensing, which leads the
spectrum sensing to OR operation. However, when $\alpha$ is low
($\alpha_{(1)}<\alpha\leq\alpha_{(2)}$), CR-Tx requires more
evidence to claim available link and the spectrum sensing becomes
AND operation. In addition, it is interesting to note that there
exists a region of $\alpha$ ($\alpha_{(2)}<\alpha<\alpha_{(3)}$)
such that CR-Tx only depends on one of two cooperative nodes, which
motivates us to define a metric or a measure to evaluate cooperative
nodes.

\begin{defi}
\textbf{Reliability} of a cooperative node is measured by $M_R$.
$\mathbf{1}^{Co}_i$ is said to be more or equally \textbf{reliable}
than (to) $\mathbf{1}^{Co}_j$ if $M_R(i)\geq M_R(j)$, which is
denoted by $\mathbf{1}^{Co}_i\unrhd\mathbf{1}^{Co}_j$.
\end{defi}

\begin{prop}
For $K$ cooperative nodes with independent spectrum availability,
without loss of generality, we assume $\rho_i\geq0$ for
$i=1,\ldots,n$ and $\rho_i<0$ for $i=n+1,\ldots,K$. Spectrum sensing
becomes $\hat{\mathbf{1}}^{link}=\mathbf{1}^{Tx}\mathbf{1}_{[s]}$,
where
\begin{equation} \label{e34}
    s=\left\{\sum_{i\in\mathcal{C}^+\cup\mathcal{C}^-}{\log M_R(i)}
    \geq\log\left(\frac{w(1-\alpha)}{\alpha}\right)
    +\sum_{i=1}^{n}{\log\left(\frac{\gamma_i}{1-\beta_i}\right)}
    +\sum_{i=n+1}^{K}{\log\left(\frac{1-\gamma_i}{\beta_i}\right)}\right\}
\end{equation}
and $\mathcal{C}^+=\{i|\mathbf{1}^{Co}_i=1,i=1,\ldots,n\}$,
$\mathcal{C}^-=\{i|\mathbf{1}^{Co}_i=0,i=n+1,\ldots,K\}$.
\end{prop}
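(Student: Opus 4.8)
\emph{Proof proposal.} The plan is to reduce the $2^K$-way likelihood ratio test of \textbf{Proposition \ref{p2}} to the single scalar inequality $s$ by exploiting conditional independence of the cooperative observations and then passing to the log-domain. First I would invoke the optimum detector of \textbf{Proposition \ref{p2}}: writing $(c_1,\dots,c_K)$ for the realized configuration of $\mathbf{1}^{Co}_1,\dots,\mathbf{1}^{Co}_K$, the rule declares $\hat{\mathbf{1}}^{link}=\mathbf{1}^{Tx}$ exactly when $\mathbf{P}^{(1)}_K[i]/\mathbf{P}^{(0)}_K[i]\geq w(1-\alpha)/\alpha$ for the index $i$ corresponding to that configuration. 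Under the independence hypothesis the joint conditional pmf factorizes, $\mathbf{P}^{(s)}_K[i]=\prod_{j=1}^{K}\Pr(\mathbf{1}^{Co}_j=c_j\mid\mathbf{1}^{Rx}=s,\mathbf{1}^{Tx}=1)$, so the likelihood ratio is a product of per-node ratios, each equal to $\beta_j/(1-\gamma_j)$ when $c_j=1$ and $(1-\beta_j)/\gamma_j$ when $c_j=0$. Taking logarithms converts the test into $\sum_{j=1}^{K}\ell_j\geq\log(w(1-\alpha)/\alpha)$, where $\ell_j$ is the per-node log-likelihood ratio.

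The heart of the argument is an elementary algebraic identity that rewrites each $\ell_j$ as $\log M_R(j)$ times an observation indicator, plus an observation-independent offset, with the case split governed by the sign of $\rho_j$. For $j\leq n$ we have $\rho_j\geq0$, equivalently $\beta_j+\gamma_j\geq1$ by \textbf{Lemma \ref{l1}}, hence $M_R(j)=\delta^+_j=\gamma_j\beta_j/((1-\gamma_j)(1-\beta_j))$; then one checks directly that $\ell_j=\mathbf{1}_{[c_j=1]}\log M_R(j)-\log(\gamma_j/(1-\beta_j))$, since the right side equals $\log(\beta_j/(1-\gamma_j))$ when $c_j=1$ and $\log((1-\beta_j)/\gamma_j)$ when $c_j=0$, matching the per-node ratios. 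Symmetrically, for $j>n$ we have $\rho_j<0$ and $M_R(j)=\delta^-_j=(1-\gamma_j)(1-\beta_j)/(\gamma_j\beta_j)$, and the same verification gives $\ell_j=\mathbf{1}_{[c_j=0]}\log M_R(j)-\log((1-\gamma_j)/\beta_j)$. Summing over $j$, the observation-dependent terms collapse, by the very definitions of $\mathcal{C}^+$ and $\mathcal{C}^-$, to $\sum_{j\leq n}\mathbf{1}_{[c_j=1]}\log M_R(j)+\sum_{j>n}\mathbf{1}_{[c_j=0]}\log M_R(j)=\sum_{i\in\mathcal{C}^+\cup\mathcal{C}^-}\log M_R(i)$.

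Finally I would move the two offset sums to the right-hand side of $\sum_j\ell_j\geq\log(w(1-\alpha)/\alpha)$, obtaining statement (\ref{e34}) verbatim; the detector declares $\hat{\mathbf{1}}^{link}=\mathbf{1}^{Tx}$ on $s$ and $0$ otherwise, and because $\hat{\mathbf{1}}^{link}$ already carries the prefactor $\mathbf{1}^{Tx}$ (inference is performed only when $\mathbf{1}^{Tx}=1$, cf.\ \textbf{Proposition \ref{p1}} and \textbf{Proposition \ref{p2}}), this is exactly $\hat{\mathbf{1}}^{link}=\mathbf{1}^{Tx}\mathbf{1}_{[s]}$. The only point requiring care is the bookkeeping in the case split — recognizing that a positively (respectively negatively) correlated cooperative node contributes its reliability weight $\log M_R$ solely through the observation $\mathbf{1}^{Co}_j=1$ (respectively $\mathbf{1}^{Co}_j=0$), which is precisely what makes $\mathcal{C}^+$ and $\mathcal{C}^-$ the correct index sets; the rest is the routine factorize-then-logarithm manipulation. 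It is worth remarking, though not needed for the proof, that $\beta_j+\gamma_j\gtrless1$ forces $\delta^{\pm}_j\geq1$, so each $\log M_R(i)\geq0$, consistent with reading $M_R$ as an amount of evidence.
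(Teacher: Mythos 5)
Your proposal is correct and follows essentially the same route as the paper's proof: factorize the likelihood ratio by conditional independence, extract the reliability factor $M_R(i)$ for the nodes in $\mathcal{C}^+\cup\mathcal{C}^-$ (the paper does this by multiplying and dividing products, you do the identical manipulation per node in the log domain), and rearrange. The per-node identities you verify are exactly the second line of the paper's chain of equalities, so this is the same argument in slightly different notation.
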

\begin{proof}
Since
$\mathbf{1}^{Co}_1,\mathbf{1}^{Co}_2,\ldots,\mathbf{1}^{Co}_{K}$ are
independent, the likelihood ratio test becomes
\begin{align*}
    \frac{\mathbf{P}^{(1)}_K[i]}{\mathbf{P}^{(0)}_K[i]}
    &=\prod_{i\in\mathcal{C}^+}\frac{\beta_i}{1-\gamma_i}
    \prod_{i\in\{1,\ldots,n\}\setminus\mathcal{C}^+}\frac{1-\beta_i}{\gamma_i}
    \prod_{i\in\mathcal{C}^-}\frac{1-\beta_i}{\gamma_i}
    \prod_{i\in\{n+1,\ldots,K\}\setminus\mathcal{C}^-}\frac{\beta_i}{1-\gamma_i}\\
    &=\prod_{i\in\mathcal{C}^+}\frac{\beta_i\gamma_i}{(1-\beta_i)(1-\gamma_i)}
    \prod_{i\in\{1,\ldots,n\}}\frac{1-\beta_i}{\gamma_i}
    \prod_{i\in\mathcal{C}^-}\frac{(1-\beta_i)(1-\gamma_i)}{\beta_i\gamma_i}
    \prod_{i\in\{n+1,\ldots,K\}}\frac{\beta_i}{1-\gamma_i}\\
    &=\prod_{i\in\mathcal{C}^+\cup\mathcal{C}^-}M_R(i)
    \prod_{i\in\{1,\ldots,n\}}\frac{1-\beta_i}{\gamma_i}
    \prod_{i\in\{n+1,\ldots,K\}}\frac{\beta_i}{1-\gamma_i}
    {\substack{\overset{\hat{\mathbf{1}}^{link}=1}{\geq}\\
    \underset{\hat{\mathbf{1}}^{link}=0}{<}}}
    \frac{w(1-\alpha)}{\alpha}
\end{align*}
Taking logarithm at both sides and rearranging the formula, we
complete the proof.
\end{proof}

We observe that reliability $M_R(i)$ is used to quantify the
information of $\mathbf{1}^{Rx}$ provided by the $i$th cooperative
node. Please note that if $\mathbf{1}^{Co}_i$ is independent of
$\mathbf{1}^{Rx}$, $M_R(i)=1$ and $\mathbf{1}^{Co}_i$ is irrelevant
to the spectrum sensing. Therefore, reliability can imply sensing
capability, that is, one cooperative node with higher reliability
has better sensing capability. Reliability can thus serve a
criterion to select cooperative nodes when number of cooperative
nodes is limited due to appropriate overhead caused by information
exchange. Specially, if there are $K$ equally reliable cooperative
nodes, each cooperative node provides equal amount of information
about $\mathbf{1}^{Rx}$ and the spectrum sensing rule turns out to
be Counting rule. This is a generalization from identically and
independently distributed (i.i.d.) observations
\cite{GSSinCRN:Viswanathan89} in conventional distributed detection
to equally reliable observations.

\subsubsection{Correlated ($\rho_{12}\neq0$)}
When there exists correlation between spectrum availability at
cooperative nodes, the joint probabilities $\mathbf{P}^{(0)}_2$ are
shifted by $\Delta$, as in (\ref{e33}). For example, if the
correlation is positive $\alpha^{C}_1$ and $\alpha^{C}_4$ increase
while $\alpha^{C}_2$ and $\alpha^{C}_3$ decrease. If the correlation
increases, eventually, the order of
$(\alpha^{C}_1,\alpha^{C}_2,\alpha^{C}_3,\alpha^{C}_4)$ will switch
and the spectrum sensing in (\ref{e32}) will change accordingly.
However, the correlated case becomes tedious and we consider a
simple but meaningful example, where cooperative nodes have
symmetric error rates (i.e. $\beta_i=\gamma_i$) and reliability
becomes
$\mathbf{1}_{[\rho_i\geq0]}\beta_i+\mathbf{1}_{[\rho_i<0]}(1-\beta_i)$.
Similarly, with the aid of (\ref{e14}) and (\ref{e33}), the spectrum
sensing can be easily derived.

\begin{prop} \label{p3}
For two cooperative nodes with correlated spectrum availability and
symmetric error rate satisfying $\rho_1>0, \rho_2>0$, and
$\mathbf{1}^{Co}_1\rhd\mathbf{1}^{Co}_2$, the spectrum sensing would
be (\ref{e32}) with modifications according to $\Delta$.
\begin{equation} \label{e40}
    \hat{\mathbf{1}}^{link}=
    \begin{cases}
    \mathbf{1}^{Tx}(\mathbf{1}^{Co}_1\mathbf{1}^{Co}_2\oplus\bar{\mathbf{1}}^{Co}_1\bar{\mathbf{1}}^{Co}_2)
    &\text{if
    $\alpha_{(2)}<\alpha<\alpha_{(3)},\Delta<(1-2\beta_1)\beta_2$}\\
    \mathbf{1}^{Tx}(\mathbf{1}^{Co}_1\oplus\bar{\mathbf{1}}^{Co}_2)
    &\text{if
    $\alpha_{(3)}<\alpha<\alpha_{(4)},\Delta<(1-2\beta_2)\beta_1$}\\
    \mathbf{1}^{Tx}\mathbf{1}^{Co}_1\bar{\mathbf{1}}^{Co}_2
    &\text{if
    $\alpha_{(1)}<\alpha<\alpha_{(2)},\Delta\geq(2\beta_2-1)(1-\beta_1)$}\\
    \mathbf{1}^{Tx}(\mathbf{1}^{Co}_1\otimes\mathbf{1}^{Co}_2)
    &\text{if $\alpha_{(2)}<\alpha<\alpha_{(3)},\Delta\geq(2\beta_1-1)(1-\beta_2)$}
    \end{cases}
\end{equation}
where
\begin{equation} \label{e37}
    (\alpha_{(1)},\alpha_{(2)},\alpha_{(3)},\alpha_{(4)})=
    \begin{cases}
    (\alpha^{C}_4,\alpha^{C}_1,\alpha^{C}_2,\alpha^{C}_3)
    &\text{if $\Delta_{min}\leq\Delta<(1-2\beta_1)\beta_2,\beta_1(1+\beta_2)<1$}\\
    (\alpha^{C}_4,\alpha^{C}_2,\alpha^{C}_1,\alpha^{C}_3)
    &\text{if $(1-2\beta_1)\beta_2\leq\Delta<(1-2\beta_2)\beta_1$}\\
    (\alpha^{C}_4,\alpha^{C}_2,\alpha^{C}_3,\alpha^{C}_1)
    &\text{if $(1-2\beta_2)\beta_1\leq\Delta<(2\beta_2-1)(1-\beta_1)$}\\
    (\alpha^{C}_2,\alpha^{C}_4,\alpha^{C}_3,\alpha^{C}_1)
    &\text{if $(2\beta_2-1)(1-\beta_1)\leq\Delta<(2\beta_1-1)(1-\beta_2)$}\\
    (\alpha^{C}_2,\alpha^{C}_3,\alpha^{C}_4,\alpha^{C}_1)
    &\text{if $(2\beta_1-1)(1-\beta_2)\leq\Delta\leq\Delta_{max},\beta_1(2-\beta_2)<1$}\\
    \end{cases}
\end{equation}
$\Delta_{min}=-(1-\beta_1)(1-\beta_2)$,
$\Delta_{max}=(1-\beta_1)\beta_2$, and $\otimes$ denotes XOR
operation.
\end{prop}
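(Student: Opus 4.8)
The plan is to collapse the two–node detector (\ref{e14}) into a threshold test on the prior $\alpha$, then follow how the four thresholds are re-ordered as the correlation shift $\Delta$ runs through its admissible range, and finally re-encode each resulting band in Boolean form to obtain (\ref{e37}) and (\ref{e40}). The first step is parameter bookkeeping. With symmetric error rates $\gamma_i=\beta_i$ the correlation expression gives $\rho_i>0\iff\beta_i>\tfrac12$, and $\mathbf{1}^{Co}_1\rhd\mathbf{1}^{Co}_2$ together with $M_R(i)=\beta_i$ forces $\tfrac12<\beta_2<\beta_1<1$. Conditional independence under $\mathbf{1}^{Rx}=1$ gives $\mathbf{P}^{(1)}_2=[(1-\beta_1)(1-\beta_2), \beta_1(1-\beta_2), (1-\beta_1)\beta_2, \beta_1\beta_2]^{T}$, and substituting $\gamma_i=\beta_i$ into (\ref{e33}) gives $\mathbf{P}^{(0)}_2$; demanding all four entries of $\mathbf{P}^{(0)}_2$ nonnegative pins the feasible interval to $\Delta\in[\Delta_{min},\Delta_{max}]$ with $\Delta_{min}=-(1-\beta_1)(1-\beta_2)$ and $\Delta_{max}=(1-\beta_1)\beta_2$.

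Next I specialize \textbf{Proposition~\ref{p2}} to $K=2$. The four Boolean products in (\ref{e14}) are the indicators of the four mutually exclusive outcomes of $(\mathbf{1}^{Co}_1,\mathbf{1}^{Co}_2)$, so the OR collapses and $\hat{\mathbf{1}}^{link}=\mathbf{1}^{Tx}$ precisely on the outcomes $j$ with $\mathbf{\Gamma}[j]=1$, i.e.\ with $\alpha\geq\alpha^{C}_j=w\mathbf{P}^{(0)}_2[j]/(\mathbf{P}^{(1)}_2[j]+w\mathbf{P}^{(0)}_2[j])$. Since $\alpha^{C}_j$ is strictly increasing in the likelihood ratio $L_j=\mathbf{P}^{(0)}_2[j]/\mathbf{P}^{(1)}_2[j]$, ordering the $L_j$ is the same as ordering the $\alpha^{C}_j$ as $\alpha_{(1)}\leq\dots\leq\alpha_{(4)}$; on a band $\alpha\in(\alpha_{(k)},\alpha_{(k+1)})$ the rule equals $\mathbf{1}^{Tx}$ on exactly the $k$ outcomes of smallest threshold and $0$ on the rest, which is already the skeleton of (\ref{e32}).

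The core is the $\Delta$-dependence of the ordering of $L_1,\dots,L_4$. Each $\mathbf{P}^{(0)}_2[j]$ is affine in $\Delta$ and each $\mathbf{P}^{(1)}_2[j]$ is constant, so each $L_j$ is affine in $\Delta$ (increasing for $j=1,4$, decreasing for $j=2,3$) and any two cross exactly once. I would compute the six crossing abscissae; the four relevant ones satisfy $(1-2\beta_1)\beta_2<(1-2\beta_2)\beta_1<0<(2\beta_2-1)(1-\beta_1)<(2\beta_1-1)(1-\beta_2)$ (strictness from $\beta_1>\beta_2$), while the $L_1$--$L_4$ and $L_2$--$L_3$ crossings sit below $\Delta_{min}$ and above $\Delta_{max}$ respectively and so never occur. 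Comparing the first two with $\Delta_{min}$ and the last with $\Delta_{max}$ produces the sharp feasibility conditions $\beta_1(1+\beta_2)<1$ and $\beta_1(2-\beta_2)<1$. Reading the permutation of $(\alpha^{C}_1,\alpha^{C}_2,\alpha^{C}_3,\alpha^{C}_4)$ off the at most five $\Delta$-subintervals yields (\ref{e37}) line by line.

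Finally, for each $\Delta$-band and each intermediate $\alpha$-band I take the union of the surviving outcomes and rewrite it as a Boolean function of $\mathbf{1}^{Co}_1,\mathbf{1}^{Co}_2$: the single outcome $(1,0)$ is $\mathbf{1}^{Co}_1\bar{\mathbf{1}}^{Co}_2$; the complement of $(0,1)$ is $\mathbf{1}^{Co}_1\oplus\bar{\mathbf{1}}^{Co}_2$; $\{(0,0),(1,1)\}$ is $\mathbf{1}^{Co}_1\mathbf{1}^{Co}_2\oplus\bar{\mathbf{1}}^{Co}_1\bar{\mathbf{1}}^{Co}_2$; and $\{(1,0),(0,1)\}$ is $\mathbf{1}^{Co}_1\otimes\mathbf{1}^{Co}_2$. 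Matching each expression to the $\Delta$-condition that makes the corresponding threshold the largest or smallest among the four gives the four modifications in (\ref{e40}), the extreme $\alpha$-bands recovering the pure OR/AND forms of (\ref{e32}). The main obstacle I anticipate is exactly the bookkeeping in these last two steps: ensuring that only the five listed $\Delta$-subintervals arise (no spurious crossing inside $[\Delta_{min},\Delta_{max}]$), that each side condition is stated tightly rather than merely sufficiently, and that the union of surviving outcomes on every band is encoded as the right OR/XOR/complement expression — a sign slip in an affine coefficient or an off-by-one in the threshold ranking would corrupt a whole row of (\ref{e40}).
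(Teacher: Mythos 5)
Your proposal is correct and follows exactly the route the paper intends (the paper gives no explicit proof, only the remark that the result follows from (\ref{e14}) and (\ref{e33})): specialize the two-node thresholds $\alpha^C_j$, observe each likelihood ratio is affine in $\Delta$, and track the reordering across the crossing points. I checked your claimed crossings $(1-2\beta_1)\beta_2<(1-2\beta_2)\beta_1<0<(2\beta_2-1)(1-\beta_1)<(2\beta_1-1)(1-\beta_2)$, the exclusion of the $L_1$--$L_4$ and $L_2$--$L_3$ crossings from $[\Delta_{min},\Delta_{max}]$, the feasibility conditions $\beta_1(1+\beta_2)<1$ and $\beta_1(2-\beta_2)<1$, and the Boolean encodings of each band; all agree with (\ref{e37}) and (\ref{e40}).
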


All possible switching orders of
$(\alpha^{C}_1,\alpha^{C}_2,\alpha^{C}_3,\alpha^{C}_4)$ according to
$\Delta$ are listed in (\ref{e37}) and the first and the last orders
are impossible unless an additional condition is satisfied to make
the regions of $\Delta$ valid, i.e. $\beta_1(1+\beta_2)<1$ and
$\beta_1(2-\beta_2)<1$ respectively. Since $\mathbf{1}^{Co}_1$ and
$\mathbf{1}^{Co}_2$ are correlated when $\mathbf{1}^{Rx}=0$ (i.e.
$\rho_{12}\neq0$ or $\Delta\neq0$), not only $\mathbf{1}^{Co}_1$ and
$\mathbf{1}^{Co}_2$ alone but also the identity of
$\mathbf{1}^{Co}_1$ and $\mathbf{1}^{Co}_2$ (i.e.
$\mathbf{1}^{Co}_1\otimes\mathbf{1}^{Co}_2$ or
$\mathbf{1}^{Co}_1\mathbf{1}^{Co}_2\oplus\bar{\mathbf{1}}^{Co}_1\bar{\mathbf{1}}^{Co}_2$)
can provide information about $\mathbf{1}^{Rx}$ and thus can be used
to determine CR link availability. This is actually similar to
covariance-based detection. For example, if
$\Delta\geq(2\beta_2-1)(1-\beta_1)\geq0$, $\mathbf{1}^{Co}_1$ and
$\mathbf{1}^{Co}_2$ are probably identical when spectrum is
unavailable at CR-Rx, i.e. $\mathbf{1}^{RX}=0$, and
$\mathbf{1}^{Co}_2$ in (\ref{e32}) is then replaced by
$\mathbf{1}^{Co}_1\otimes\mathbf{1}^{Co}_2$. Furthermore, when
$\Delta$ increases and is greater than $(2\beta_1-1)(1-\beta_2)$,
the roles of $\mathbf{1}^{Co}_1$ and
$\mathbf{1}^{Co}_1\otimes\mathbf{1}^{Co}_2$ switch because the
identity of $\mathbf{1}^{Co}_1$ and $\mathbf{1}^{Co}_2$ can provide
more information about $\mathbf{1}^{RX}$ than $\mathbf{1}^{Co}_1$
alone. Alternatively, when $\Delta<0$, the results can be similarly
explained. In addition, it is interesting to note that even if
$\mathbf{1}^{Co}_2$ is independent to $\mathbf{1}^{Rx}$ (i.e.
$\beta_2=1/2$), $\mathbf{1}^{Co}_2$ may become helpful due to the
correlation between $\mathbf{1}^{Co}_1$ and $\mathbf{1}^{Co}_2$. In
the next section, we will further investigate impacts of correlation
between $\mathbf{1}^{Co}_1$ and $\mathbf{1}^{Co}_2$ on network
operation.

\subsection{Multiple Cooperative Nodes with Limited Statistical Information}
In CRN or self-organizing networks, due to lacking of centralized
coordination, each node in CRN can only sense and exchange local
information. In addition, dynamic wireless channels and mobility of
nodes make the situation severer and one node can only acquire
information within limited sensing duration. We can either design
systems under simplified assumptions, which may result in severe
performance degradation, or apply advanced signal processing
techniques based on minimax criterion \cite{GSSinCRN:Poor94}, robust
to outliers in networks, as we are going to do hereafter.

To derive the optimum Bayesian detection in \textbf{Proposition
\ref{p2}}, we have to acquire joint pmf of spectrum availability at
$K$ cooperative nodes, which may require long observation interval
to achieve acceptable estimation error. If we only have up to the
$k$th order marginal pmf (related to capability of observation), i.e
$\mathbf{Q}^{(s)}_{k,K},s=0,1$ according to \textbf{Lemma
\ref{cor3}}, our design criterion becomes minimax criterion, that
is, we find the least-favorable joint pmf $\mathbf{P}^{(s)}_K,s=0,1$
such that maximizes Bayesian risk and then conduct the optimum
Bayesian detection under that joint probability. Therefore, the
problem can be formulated as follows.
\begin{prop} [Robust Cooperative Sensing]
Cooperative spectrum sensing with limited statistical information
$\mathbf{Q}^{(s)}_{k,K},s=0,1$ becomes (\ref{e14}) with
$(\mathbf{P}^{(1)}_K,\mathbf{P}^{(0)}_K)$ replaced by
$(\mathbf{P}^{(1)}_{opt},\mathbf{P}^{(0)}_{opt})$, where
\begin{equation}
\begin{aligned} \label{e35}
    (\mathbf{P}^{(1)}_{opt},\mathbf{P}^{(0)}_{opt})&=
    \arg\max_{\mathbf{P}^{(1)}_K,\mathbf{P}^{(0)}_K}
    R\left(\mathbf{P}^{(1)}_K,\mathbf{P}^{(0)}_K\right)=
    \arg\min_{\mathbf{P}^{(1)}_K,\mathbf{P}^{(0)}_K}
    \|w(1-\alpha)\mathbf{P}^{(0)}_K-\alpha\mathbf{P}^{(1)}_K\|_1\\
    \text{s.t.}\quad
    \overline{\mathbf{P}}^{(1)}_{k,K}&=(\overline{\mathbf{G}}^{(1)}_{k,K})^{-1}
    (\mathbf{Q}^{(1)}_{k,K}-\underline{\mathbf{G}}^{(1)}_{k,K}\underline{\mathbf{P}}^{(1)}_{k,K})\\
    \underline{\mathbf{P}}^{(0)}_{k,K}&=(\underline{\mathbf{G}}^{(0)}_{k,K})^{-1}
    (\mathbf{Q}^{(0)}_{k,K}-\overline{\mathbf{G}}^{(0)}_{k,K}\overline{\mathbf{P}}^{(0)}_{k,K})\\
    \mathbf{0}_{2^K\times1}&\preceq\mathbf{P}^{(s)}_K\preceq\mathbf{1}_{2^K\times1},s=0,1
\end{aligned}
\end{equation}
\end{prop}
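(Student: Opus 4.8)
The plan is to read this as a classical minimax (robust) binary hypothesis testing problem and to identify its least favorable pair, following the robust detection methodology of \cite{GSSinCRN:Poor94}. Nature chooses the pair of conditional joint pmf's $(\mathbf{P}^{(1)}_K,\mathbf{P}^{(0)}_K)$ consistent with the observed low-order marginals, and the detector chooses a (possibly randomized) decision vector $\mathbf{\Gamma}\in[0,1]^{2^K}$; the pay-off is the Bayesian risk $R(\mathbf{\Gamma},\mathbf{P}^{(1)}_K,\mathbf{P}^{(0)}_K)=\sum_{i=1}^{2^K}[w(1-\alpha)\mathbf{P}^{(0)}_K[i]\mathbf{\Gamma}[i]+\alpha\mathbf{P}^{(1)}_K[i](1-\mathbf{\Gamma}[i])]$, which is affine in $\mathbf{\Gamma}$ and affine in $(\mathbf{P}^{(1)}_K,\mathbf{P}^{(0)}_K)$. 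By \textbf{Lemma~\ref{t2}} (with $m=k$) the set of admissible pmf pairs is exactly those satisfying the two equality constraints of (\ref{e35}) together with $\mathbf{0}_{2^K\times1}\preceq\mathbf{P}^{(s)}_K\preceq\mathbf{1}_{2^K\times1}$, and by \textbf{Lemma~\ref{cor3}} this captures precisely the information contained in ``marginals up to order $k$''. This feasible region is nonempty (the true joint law lies in it), convex and compact --- using \textbf{Lemma~\ref{t1}}, the equality system is a genuine affine parametrization of $\mathbf{P}^{(s)}_K$ by the unconstrained higher-order blocks $\underline{\mathbf{P}}^{(1)}_{k,K}$, $\overline{\mathbf{P}}^{(0)}_{k,K}$ intersected with the unit cube --- and the relaxed detector set $[0,1]^{2^K}$ is convex and compact.

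Next I would invoke a minimax theorem (von Neumann's, for the bilinear case, or Sion's) on this pay-off over the two convex compact sets, obtaining $\min_{\mathbf{\Gamma}}\max_{(\mathbf{P}^{(1)}_K,\mathbf{P}^{(0)}_K)}R=\max_{(\mathbf{P}^{(1)}_K,\mathbf{P}^{(0)}_K)}\min_{\mathbf{\Gamma}}R$ together with a saddle point $(\mathbf{\Gamma}^{*},\mathbf{P}^{(1)}_{*},\mathbf{P}^{(0)}_{*})$. For a fixed pmf pair, the inner minimization over $\mathbf{\Gamma}$ is the Bayes risk minimization already solved in \textbf{Proposition~\ref{p2}}: the minimizer is the deterministic likelihood-ratio test $\mathbf{\Gamma}=\mathbf{1}_{[\alpha\mathbf{P}^{(1)}_K-w(1-\alpha)\mathbf{P}^{(0)}_K\geq0]}$ and its value is $R(\mathbf{P}^{(1)}_K,\mathbf{P}^{(0)}_K)=\sum_{i}\min\{w(1-\alpha)\mathbf{P}^{(0)}_K[i],\alpha\mathbf{P}^{(1)}_K[i]\}$; since $R$ is affine in $\mathbf{\Gamma}$, relaxing to randomized rules does not lower this value, so the robust rule may be taken deterministic. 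Hence the right-hand side $\max_{(\mathbf{P}^{(1)}_K,\mathbf{P}^{(0)}_K)}\min_{\mathbf{\Gamma}}R$ equals $\max_{(\mathbf{P}^{(1)}_K,\mathbf{P}^{(0)}_K)}R(\mathbf{P}^{(1)}_K,\mathbf{P}^{(0)}_K)$, so $(\mathbf{P}^{(1)}_{*},\mathbf{P}^{(0)}_{*})$ is a maximizer of $R$ over the admissible set --- a least favorable pair $(\mathbf{P}^{(1)}_{opt},\mathbf{P}^{(0)}_{opt})$ --- and $\mathbf{\Gamma}^{*}$ is the Bayes detector against it, i.e. exactly (\ref{e14}) with $(\mathbf{P}^{(1)}_K,\mathbf{P}^{(0)}_K)$ replaced by $(\mathbf{P}^{(1)}_{opt},\mathbf{P}^{(0)}_{opt})$.

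It remains to recast $\arg\max R$ as the $\ell_1$ problem of (\ref{e35}). I would apply the pointwise identity $\min\{a,b\}=\tfrac12(a+b-|a-b|)$ to write $R(\mathbf{P}^{(1)}_K,\mathbf{P}^{(0)}_K)=\tfrac12\big(w(1-\alpha)\sum_i\mathbf{P}^{(0)}_K[i]+\alpha\sum_i\mathbf{P}^{(1)}_K[i]-\|w(1-\alpha)\mathbf{P}^{(0)}_K-\alpha\mathbf{P}^{(1)}_K\|_1\big)$. The first (all-ones) row of the equality system $\mathbf{G}^{(s)}_{k,K}\mathbf{P}^{(s)}_K=\mathbf{Q}^{(s)}_{k,K}$ forces $\sum_i\mathbf{P}^{(s)}_K[i]=1$ for every feasible pair, so the first two terms reduce to the fixed constant $\tfrac12(w(1-\alpha)+\alpha)$; maximizing $R$ is therefore equivalent to minimizing $\|w(1-\alpha)\mathbf{P}^{(0)}_K-\alpha\mathbf{P}^{(1)}_K\|_1$, a convex program whose minimizer exists by compactness, which is the statement of the proposition.

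The step I expect to be the real obstacle is the clean justification of the minimax reduction rather than any computation: one must confirm that constraining nature to joint pmf's with prescribed marginals up to order $k$ yields a nonempty, convex, compact uncertainty set (nonemptiness is automatic from the actual joint law, and \textbf{Lemma~\ref{t1}} is precisely what turns the marginal-to-joint relations of \textbf{Lemma~\ref{t2}} into a valid affine parametrization, so the set is a polytope), and that moving between deterministic and randomized detectors is lossless so that the minimax theorem applies and the saddle value genuinely coincides with $\max R$ rather than merely bounding it. A minor secondary point is uniqueness: the argument delivers existence of a least favorable pair and of the robust rule, but the minimizer of the $\ell_1$ program need not be unique, so the proposition is to be read as specifying any such minimizer.
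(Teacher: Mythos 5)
Your proposal is correct and follows essentially the same route as the paper: the paper's own (very terse) justification rests on exactly your ingredients --- the identity $\min(x,y)=(x+y-|x-y|)/2$ together with $\sum_i\mathbf{P}^{(s)}_K[i]=1$ to turn $\arg\max R$ into the $\ell_1$ minimization, \textbf{Lemma~\ref{t2}} to impose the marginal constraints, and convexity of the norm to make the program tractable. The only substantive difference is that you additionally establish a saddle-point/minimax equality via Sion's theorem; the paper never claims this --- it simply \emph{defines} the robust detector as the Bayes rule against the risk-maximizing (least-favorable) pair --- so that portion of your argument is sound but proves more than the proposition asserts.
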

The last equality in the objective function is based on the fact
that $\min(x,y)=(x+y-|x-y|)/2$ and that the sum of probability
distribution is equal to one. The result is reasonable because in
order to minimize the objective function, the likelihood ratio
$\mathbf{P}^{(1)}_K[i]/\mathbf{P}^{(0)}_K[i]$ approaches to the
optimum threshold $w(1-\alpha)/\alpha$, which induces poor
performance of the detector and therefore increases Bayesian risk.
Furthermore, we could apply \textbf{Lemma \ref{t2}} to set the
constraints on joint pmf. Since vector norm is a convex function,
the problem can be solved by well-developed algorithms in convex
optimization \cite{GSSinCRN:Boyd04}.

In last part, we proposed a simple methodology to select cooperative
nodes based on reliability under assumption of independent
observations. However, in practice, there exists correlation among
spectrum availability at cooperative nodes and spectrum sensing may
change as we showed in \textbf{Proposition \ref{p3}}. In addition,
since the statistical information is limited within reasonable
observation interval, CR-Tx can select cooperative nodes to minimize
maximum Bayesian risk by minimax criterion.

\section{Application to Realistic Operation of CRN}
In preceding sections, we only considered single CR link in CRN.
However, CRN is not just a link level technology if we want to
successfully route packets from source to destination through CRs
and PS. In the following, we suggest a simple physical layer model
for CRN and investigate the impacts of spectrum sensing on network
operation and the role of a cooperative node playing in CRN, which
is impossible to be revealed from traditional treatment of spectrum
sensing. Since spectrum sensing may not be ideal and there exists
hidden terminal problem, we further define the true state for PS.
\begin{defi}
The true state for PS can be represented by the indicator
\begin{equation*}
    \mathbf{1}^{PS}=
    \begin{cases}
    1, &\text{PS either does not exist or is inactive}\\
    0, &\text{PS exists and is active}
    \end{cases}
\end{equation*}
\end{defi}
Therefore, with the definition of $\alpha$, we have
\begin{equation} \label{e38}
    \alpha=\frac{\sum_{s=0}^{1}{\Pr(\mathbf{1}^{Tx}=1,\mathbf{1}^{Rx}=1|\mathbf{1}^{PS}=s)\Pr(\mathbf{1}^{PS}=s)}}
    {\sum_{s=0}^{1}{\Pr(\mathbf{1}^{Tx}=1|\mathbf{1}^{PS}=s)\Pr(\mathbf{1}^{PS}=s)}}
\end{equation}
To connect relations between indicator functions of link
availability and realistic operation of CRN, we propose a simple
received power model.

\subsection{Received Power Model}
We model the received power from PS and background noise as
log-normal distribution, or $10log_{10}(P_S)\sim
N(\mu_S,\sigma^2_S)$ and $10log_{10}(P_N)\sim N(\mu_0,\sigma^2_0)$,
where $\sigma^2_S$ and $\sigma^2_0$ are used to quantify the
measurement uncertainty of the received power from PS and noise
respectively. In addition, $\mu_0$ is a constant whereas $\mu_S$
should be varied according to path loss and shadowing. More
specifically, let $\mu_S=K_0-10alog_{10}(d_{CR})-b_{CR}$, where
$K_0$ is a constant, $d_{CR}$ denotes distance from CR (either CR-Tx
or CR-Rx) to PS as in Fig \ref{Fig_2}, $a$ means path loss exponent,
and $b_{CR}$ represents shadowing effect. When $\mathbf{1}^{PS}=1$,
the received signal only comes from noise. However, when
$\mathbf{1}^{PS}=0$, the received signal is the superposition of
signal from PS and noise, which results in addition of two
log-normal random variables. We could simply model the received
power as another log-normal random variable with parameters
$\mu_{CR}$ and $\sigma_{CR}$, and under assumption of
$\sigma_S>\sigma_0$, we have
\begin{align} \label{e42}
    \mu_{CR}&=
    \begin{cases}
    \mu_0, &\text{if $\mu_S\leq\mu_0-\sigma_S$}\\
    \mu_S, &\text{if $\mu_S\geq\mu_0+\sigma_S$}\\
    (\mu_S+\mu_0+\sigma_S)/2, &\text{otherwise}
    \end{cases}\\
    \sigma_{CR}^2&=
    \begin{cases}
    \sigma^2_0, &\text{if $\mu_S\leq\mu_0-\sigma_S$}\\
    \sigma^2_S, &\text{if $\mu_S\geq\mu_0+2\sigma_S$}\\
    \frac{\sigma^2_S-\sigma^2_0}{3\sigma_S}(\mu_S-\mu_0)+
    \frac{\sigma^2_S+2\sigma^2_0}{3}, &\text{otherwise}
    \end{cases} \label{e43}
\end{align}
By simulation, the distribution of the simplified model, although
not exactly identical to, is close to the simulated distribution,
especially in terms of mean and variance. It justifies our
simplified model.

\subsection{Spectrum Sensing at CR-Tx and Reception at CR-Rx}
Recall the conditions that CRs can successfully communicate over a
link. Assume CR-Tx adopts an energy detector in the hypothesis
testing (\ref{e27}) and there is no interference from co-existing
systems. The detector can be represented as
\begin{equation*}
    P_{Tx}{\substack{\overset{\mathbf{1}^{Tx}=1}{\leq}\\
    \underset{\mathbf{1}^{Tx}=0}{>}}}\tau_{Tx} \quad \text{(in dB)}
\end{equation*}
where $P_{Tx}$ denotes the received power at CR-Tx and $\tau_{Tx}$
is a fixed threshold since the detector is designed under a given
SINR. On the other hand, to successfully receive packets, the SINR
at CR-Rx should be greater than minimum value $\eta_{outage}$ as
shown in (\ref{e28}). Similarly, spectrum availability at CR-Rx can
be represented as
\begin{equation*}
    P_{Rx}{\substack{\overset{\mathbf{1}^{Rx}=1}{\leq}\\
    \underset{\mathbf{1}^{Rx}=0}{>}}}\tau_{Rx} \quad \text{(in dB)}
\end{equation*}
where $P_{Rx}$ denotes the received power from PS and noise at
CR-Rx. Different from CR-Tx, $\tau_{Rx}$ is varied according to the
received power from CR-Tx. For simplicity, we only consider
propagation loss in modeling the received power from CR-TX and have
$\tau_{Rx}=L_0-10alog_{10}(r_{Rx})$, where $L_0$ is a constant and
$r_{Rx}$ denotes distance between CR-Tx and CR-Rx.

We suppose that the measurement uncertainties and hence the received
power from PS and noise at CR-Tx and CR-Rx are independent. However,
to model spatial behavior for CR-Tx and CR-Rx, we consider the
relation of shadowing between CR-Tx and CR-Rx. Intuitively, the
relation should depend on locations of CR-Tx, CR-Rx, PS, along with
the obstacle size and we proceed based on a linear model
\begin{equation} \label{e29}
    b_{Rx}=
    \begin{cases}
    \max\left\{b_{Tx}(1-\frac{r_{Rx}}{2\kappa}),0\right\},
    &\text{if $r_{Rx}\cos(\theta_{Rx})\leq d_{Tx}$} \\
    0, &\text{if $r_{Rx}\cos(\theta_{Rx})> d_{Tx}$}
    \end{cases}
\end{equation}
where $\kappa$ denotes parameter of obstacle size and $\theta_{Rx}$
is the angle between line segments with starting point at CR-Tx and
end points at CR-Rx and PS, as shown in Fig. \ref{Fig_2} for
illustration. In this model, shadowing at CR-Rx $b_{Rx}$ linearly
decreases with respect to the distance between CR-Tx and CR-Rx
$r_{Rx}$ with rate inverse proportional to the obstacle size
$\kappa$ and is equal to zero when CR-Rx is far apart from CR-Tx or
PS is located in the middle of CR-Tx and CR-Rx. Additionally, since
shadowing parameter achieves maximum at CR-Tx, this results in the
worst case scenario in spectrum sensing. Finally, from log-normal
fading distribution,
\begin{equation*}
    \alpha=\frac{\Pr(\mathbf{1}^{PS}=1)Q\left(\frac{\mu_0-\tau_{Tx}}{\sigma_0}\right)
    Q\left(\frac{\mu_0-\tau_{Rx}}{\sigma_0}\right)+
    \Pr(\mathbf{1}^{PS}=0)Q\left(\frac{\mu_{Tx}-\tau_{Tx}}{\sigma_{Tx}}\right)
    Q\left(\frac{\mu_{Rx}-\tau_{Rx}}{\sigma_{Rx}}\right)}
    {\Pr(\mathbf{1}^{PS}=1)Q\left(\frac{\mu_0-\tau_{Tx}}{\sigma_0}\right)
    +\Pr(\mathbf{1}^{PS}=0)Q\left(\frac{\mu_{Tx}-\tau_{Tx}}{\sigma_{Tx}}\right)}
\end{equation*}
where $Q(x)$ denotes the right-tail probability of a Gaussian random
variable with zero mean and unit variance.

\subsection{Cooperative Spectrum Sensing}
Under above proposed signal model, the analysis can be easily
extended to cooperative spectrum sensing. Considering a cooperative
node conducting an energy detector, we have
\begin{equation*}
    P_{Co}{\substack{\overset{\mathbf{1}^{Co}=1}{\leq}\\
    \underset{\mathbf{1}^{Co}=0}{>}}}\tau_{Co} \quad \text{(in dB)}
\end{equation*}
Furthermore, the correlation due to geography is established similar
to (\ref{e29}). We could therefore calculate $\beta$ and $\gamma$
similar to $\alpha$, as
\begin{align*}
    \beta&=\frac{\sum_{s=0}^{1}{\Pr(\mathbf{1}^{Tx}=1,\mathbf{1}^{Rx}=1,\mathbf{1}^{Co}=1|\mathbf{1}^{PS}=s)\Pr(\mathbf{1}^{PS}=s)}}
    {\sum_{s=0}^{1}{\Pr(\mathbf{1}^{Tx}=1,\mathbf{1}^{Rx}=1|\mathbf{1}^{PS}=s)\Pr(\mathbf{1}^{PS}=s)}} \\
    \gamma&=\frac{\sum_{s=0}^{1}{\Pr(\mathbf{1}^{Tx}=1,\mathbf{1}^{Rx}=0,\mathbf{1}^{Co}=0|\mathbf{1}^{PS}=s)\Pr(\mathbf{1}^{PS}=s)}}
    {\sum_{s=0}^{1}{\Pr(\mathbf{1}^{Tx}=1,\mathbf{1}^{Rx}=0|\mathbf{1}^{PS}=s)\Pr(\mathbf{1}^{PS}=s)}}
\end{align*}
With the relation between statistical information
$\{\alpha,\beta,\gamma\}$ and received power model, we can
mathematically determine allowable transmission region of CR-Tx.

\subsection{Neighborhood of CR-Tx}
In Section III and IV, we have developed spectrum sensing under
different assumptions and note that spectrum sensing depends on the
value of $\alpha$, i.e., spatial behavior of CR-Tx and CR-Rx.
Especially, there is even a region of $\alpha$ that CR-Tx is
prohibited from forwarding packets to CR-Rx and the link from CR-Tx
to CR-Rx is disconnected (i.e. $\hat{\mathbf{1}}^{link}=0$). This
undesirable phenomenon alters CRN topology and heavily affects
network performance, such as throughput of CRN, etc. Therefore, we
would like to theoretically study link properties in CRN and first
define the regions of $\alpha$ as follows.
\begin{defi}
The set $\{\alpha|\Pr(\hat{\mathbf{1}}^{link}=0)=1\}$ is called
\textbf{prohibitive region} while
$\{\alpha|\Pr(\hat{\mathbf{1}}^{link}=1)\neq0\}$ is called
\textbf{admissive region}. The boundary between these two sets is
called \textbf{critical boundary} of $\alpha$ and is denoted by
$\alpha_C$. Therefore,
$\{\alpha|\hat{\mathbf{1}}^{link}=0\}=\{\alpha|0\leq\alpha<\alpha_C\}$
and
$\{\alpha|\hat{\mathbf{1}}^{link}=1\}=\{\alpha|\alpha_C\leq\alpha\leq1\}$.
\end{defi}
If $\alpha$ lies in the prohibitive region, the link from CR-Tx to
CR-Rx is disconnected. The property and the engineering meaning of
$\alpha_C$ are addressed as follows.
\begin{lemma} \label{l6}
$\alpha_C$ is a decreasing function with respect to number of
cooperative nodes.
\end{lemma}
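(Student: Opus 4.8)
The plan is to show that adding a cooperative node can only shrink the prohibitive region, i.e. lower the critical boundary $\alpha_C$. The key observation is that $\alpha_C$ is precisely the infimum of the set of $\alpha$ for which the optimum Bayesian detector ever declares $\hat{\mathbf{1}}^{link}=1$ for some realization of the observations. From \textbf{Proposition \ref{p2}}, with $K$ cooperative nodes the detector declares $\hat{\mathbf{1}}^{link}=1$ on realization $i$ exactly when $\alpha\mathbf{P}^{(1)}_K[i]\geq w(1-\alpha)\mathbf{P}^{(0)}_K[i]$, i.e. when $\alpha/(1-\alpha)\geq w\,\mathbf{P}^{(0)}_K[i]/\mathbf{P}^{(1)}_K[i]$. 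Hence the admissive region is $\{\alpha : \alpha/(1-\alpha)\geq w\min_i \mathbf{P}^{(0)}_K[i]/\mathbf{P}^{(1)}_K[i]\}$, and since $t\mapsto t/(1-t)$ is strictly increasing on $(0,1)$, we get the closed form
\begin{equation*}
    \frac{\alpha_C(K)}{1-\alpha_C(K)} = w\,\min_{1\leq i\leq 2^K}\frac{\mathbf{P}^{(0)}_K[i]}{\mathbf{P}^{(1)}_K[i]},
\end{equation*}
with the convention that the ratio is $+\infty$ (hence $\alpha_C=1$) when some $\mathbf{P}^{(1)}_K[i]=0$ while $\mathbf{P}^{(0)}_K[i]>0$. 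So it suffices to prove that this minimum likelihood ratio is nonincreasing in $K$.

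Next I would relate the $(K+1)$-node likelihood ratios to the $K$-node ones. Appending one more cooperative node refines each realization $i$ of the first $K$ nodes into two realizations according to $\mathbf{1}^{Co}_{K+1}\in\{0,1\}$, with $\mathbf{P}^{(s)}_{K+1}$ splitting $\mathbf{P}^{(s)}_K[i]$ as $\mathbf{P}^{(s)}_K[i] = \Pr(\text{realization }i,\,\mathbf{1}^{Co}_{K+1}=0\mid \mathbf{1}^{Rx}=s,\mathbf{1}^{Tx}=1) + \Pr(\text{realization }i,\,\mathbf{1}^{Co}_{K+1}=1\mid \cdots)$. Writing the two pieces as $x_0,x_1$ (for $s=0$) and $y_0,y_1$ (for $s=1$), the mediant inequality gives $\min\{x_0/y_0,\,x_1/y_1\}\leq (x_0+x_1)/(y_0+y_1)$. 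Therefore for every parent realization $i$, at least one of its two children has likelihood ratio $\mathbf{P}^{(0)}_{K+1}/\mathbf{P}^{(1)}_{K+1}$ no larger than $\mathbf{P}^{(0)}_K[i]/\mathbf{P}^{(1)}_K[i]$. Taking the minimum over $i$ on both sides yields $\min_j \mathbf{P}^{(0)}_{K+1}[j]/\mathbf{P}^{(1)}_{K+1}[j] \leq \min_i \mathbf{P}^{(0)}_K[i]/\mathbf{P}^{(1)}_K[i]$, and by the displayed relation above, $\alpha_C(K+1)\leq\alpha_C(K)$. Edge cases where denominators vanish are handled by the same convention, since a zero denominator only makes a ratio larger, never smaller.

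Finally I would note the degenerate baseline: with zero cooperative nodes there is only the trivial observation, $\mathbf{P}^{(0)}_0=\mathbf{P}^{(1)}_0=1$, and the formula recovers $\alpha_C(0)/(1-\alpha_C(0))=w$, i.e. $\alpha_C(0)=w/(w+1)$, consistent with \eqref{e13}; adding the first cooperative node recovers the thresholds $\alpha_1,\alpha_2$ of \textbf{Proposition \ref{p4}} (and $\alpha_C=\min\{\alpha_1,\alpha_2\}$), so the claimed monotonicity specializes correctly. The main obstacle is bookkeeping rather than depth: one must be careful that "number of cooperative nodes" is compared with the \emph{best} available node set, i.e. that adding a node means the new node set contains the old one as a marginal, so that the splitting argument applies; and one must handle the $\mathbf{P}^{(1)}=0$ cases cleanly so the statement "decreasing" is read in the weak (nonincreasing) sense — strict decrease requires the new node to carry genuinely new information about $\mathbf{1}^{Rx}$, which is exactly the $M_R\neq 1$ / $\rho\neq 0$ condition already isolated in \textbf{Corollary \ref{cor1}}.
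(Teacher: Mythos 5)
Your proposal is correct and follows essentially the same route as the paper: both characterize $\alpha_C$ via the extremal likelihood ratio over realizations and then observe that appending a node splits each realization into two whose combined ratio is a mediant of the children's, so the extremum can only move favorably. Your mediant inequality applied to $\min_i \mathbf{P}^{(0)}_K[i]/\mathbf{P}^{(1)}_K[i]$ is exactly the paper's case split on $\tilde{\beta}_k+\tilde{\gamma}_k\gtrless 1$ applied to $\max_i \mathbf{P}^{(1)}_K[i]/\mathbf{P}^{(0)}_K[i]$; your added closed form for $\alpha_C$, the degenerate-denominator convention, and the $K=0$ consistency check are welcome tightenings of details the paper leaves implicit.
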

\begin{proof}
It is easy to show that for fixed $w$, $\alpha$ decreases as the
threshold of the likelihood ratio test $w(1-\alpha)/\alpha$
increases. Therefore, $\alpha_C$ can be determine by the largest
likelihood ratio. Assume the largest likelihood ratio with $k-1$
cooperative nodes occurs at $i_{max}$, i.e.,
$i_{max}=\arg\max_i\{\mathbf{P}^{(1)}_{k-1}[i]/\mathbf{P}^{(0)}_{k-1}[i]\}$.
When the $k$th cooperative node enters, let
\begin{align*}
    \tilde{\beta}_k&=\Pr(\mathbf{1}^{Co}_k=1|\mathbf{1}^{Co}_1=\mathbf{A}^1_{m,k-1}[1,i_{max}],\ldots,
    \mathbf{1}^{Co}_{k-1}=\mathbf{A}^1_{m,k-1}[k-1,i_{max}],\mathbf{1}^{Rx}=1,\mathbf{1}^{Tx}=1)\\
    \tilde{\gamma}_k&=\Pr(\mathbf{1}^{Co}_k=0|\mathbf{1}^{Co}_1=\mathbf{A}^1_{m,k-1}[1,i_{max}],\ldots,
    \mathbf{1}^{Co}_{k-1}=\mathbf{A}^1_{m,k-1}[k-1,i_{max}],\mathbf{1}^{Rx}=0,\mathbf{1}^{Tx}=1)
\end{align*}
Then, there are two likelihood ratio with $k$ cooperative nodes, say
$i$th and $j$th, becoming
\begin{align*}
    \frac{\mathbf{P}^{(1)}_{k}[i]}{\mathbf{P}^{(0)}_{k}[i]}&=
    \frac{\mathbf{P}^{(1)}_{k-1}[i_{max}]\tilde{\beta}_k}
    {\mathbf{P}^{(0)}_{k-1}[i_{max}](1-\tilde{\gamma}_k)}\\
    \frac{\mathbf{P}^{(1)}_{k}[j]}{\mathbf{P}^{(0)}_{k}[j]}&=
    \frac{\mathbf{P}^{(1)}_{k-1}[i_{max}](1-\tilde{\beta}_k)}
    {\mathbf{P}^{(0)}_{k-1}[i_{max}]\tilde{\gamma}_k}
\end{align*}
Since either $\tilde{\beta}_k+\tilde{\gamma}_k\geq1$ or
$\tilde{\beta}_k+\tilde{\gamma}_k<1$, one of the $i$th and the $j$th
likelihood ratio is not less than
$\mathbf{P}^{(1)}_{k-1}[i]/\mathbf{P}^{(0)}_{k-1}[i]$, which results
in lower $\alpha_C$.
\end{proof}
\begin{lemma} \label{l7}
The following two statements are equivalent:
\begin{enumerate}
    \item $\alpha\geq\alpha_C$
    \item $\Pr(\mathbf{1}^{link}=1|\hat{\mathbf{1}}^{link}=1)\geq w/(w+1)$
\end{enumerate}
\end{lemma}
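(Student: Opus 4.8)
The plan is to express statement (2) as a posterior–odds inequality and then obtain it by summing the per-configuration inequalities that define the optimum detector. First I would note that $\{\hat{\mathbf{1}}^{link}=1\}\subseteq\{\mathbf{1}^{Tx}=1\}$ (by the form of every decision rule derived above) and $\{\mathbf{1}^{link}=1\}\subseteq\{\mathbf{1}^{Tx}=1\}$ (by \textbf{Proposition~\ref{p1}}), so all probabilities in (2) may be computed conditionally on $\mathbf{1}^{Tx}=1$, on which event $\mathbf{1}^{link}=\mathbf{1}^{Rx}$. Bayes' rule, together with $\Pr(\mathbf{1}^{link}=1\mid\mathbf{1}^{Tx}=1)=\alpha$, $\Pr(\hat{\mathbf{1}}^{link}=1\mid\mathbf{1}^{link}=1,\mathbf{1}^{Tx}=1)=1-P_M$, and $\Pr(\hat{\mathbf{1}}^{link}=1\mid\mathbf{1}^{link}=0,\mathbf{1}^{Tx}=1)=P_F$ (with $P_F,P_M$ as defined after (\ref{e30})), shows that (2) is equivalent to $\alpha(1-P_M)\geq w(1-\alpha)P_F$.

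For (1)$\Rightarrow$(2), I would assume $\alpha\geq\alpha_C$, so that $\alpha$ lies in the admissive region and $\Pr(\hat{\mathbf{1}}^{link}=1)\neq0$, making the conditional probability in (2) well defined. Let $\mathcal{A}=\{i:\mathbf{\Gamma}[i]=1\}$ be the set of observation configurations on which the detector declares the link available, so that $1-P_M=\sum_{i\in\mathcal{A}}\mathbf{P}^{(1)}_K[i]$ and $P_F=\sum_{i\in\mathcal{A}}\mathbf{P}^{(0)}_K[i]$ (the inference-based and single-node rules (\ref{e13}), (\ref{e15}) being special cases of the likelihood-ratio test of \textbf{Proposition~\ref{p2}}). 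By the definition $\mathbf{\Gamma}=\mathbf{1}_{[\alpha\mathbf{P}^{(1)}_K-w(1-\alpha)\mathbf{P}^{(0)}_K\geq0]}$, each $i\in\mathcal{A}$ satisfies $\alpha\mathbf{P}^{(1)}_K[i]\geq w(1-\alpha)\mathbf{P}^{(0)}_K[i]$; summing over $\mathcal{A}$ gives exactly $\alpha(1-P_M)\geq w(1-\alpha)P_F$, i.e. (2). Equivalently, writing $a=\alpha(1-P_M)$ and $c=(1-\alpha)P_F$, we have $a\geq wc$ and $\Pr(\mathbf{1}^{link}=1\mid\hat{\mathbf{1}}^{link}=1)=a/(a+c)\geq wc/(wc+c)=w/(w+1)$, using that $a\mapsto a/(a+c)$ is nondecreasing.

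For (2)$\Rightarrow$(1) I would use the contrapositive: if $\alpha<\alpha_C$ then $\alpha$ is in the prohibitive region, so $\Pr(\hat{\mathbf{1}}^{link}=1)=0$ and the event conditioned on in (2) is null, whence (2) cannot hold. To make this precise I would invoke the characterization of $\alpha_C$ implicit in the proof of \textbf{Lemma~\ref{l6}}: since the threshold $w(1-\alpha)/\alpha$ is strictly decreasing in $\alpha$, the acceptance set $\mathcal{A}$ is empty precisely when $\alpha$ is below the value at which the largest likelihood ratio $\max_i\mathbf{P}^{(1)}_K[i]/\mathbf{P}^{(0)}_K[i]$ equals that threshold, and this value is $\alpha_C$; hence $\alpha<\alpha_C\Leftrightarrow\mathcal{A}=\emptyset\Leftrightarrow\Pr(\hat{\mathbf{1}}^{link}=1)=0$. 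I expect the boundary bookkeeping to be the main obstacle: one must align the admissive/prohibitive dichotomy (a statement about the random variable $\hat{\mathbf{1}}^{link}$) with emptiness of $\mathcal{A}$, handle ties at $\alpha=\alpha_C$, and, in the inference-based setting, confirm that using the estimate $\hat\alpha=(N+1)/(L+2)$ in place of $\alpha$ in the test does not disturb the per-configuration inequalities driving the summation.
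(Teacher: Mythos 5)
Your proposal is correct and follows essentially the same route as the paper: both reduce statement (2) via Bayes' rule to the inequality $\alpha(1-P_M)\geq w(1-\alpha)P_F$ (the paper writes this as the aggregate likelihood ratio exceeding $w(1-\alpha)/\alpha$), justify it by the fact that every configuration in the acceptance region satisfies the per-configuration threshold inequality, and handle the converse by observing that the conditional probability is well defined only when $\Pr(\hat{\mathbf{1}}^{link}=1)\neq0$, i.e.\ $\alpha\geq\alpha_C$. Your explicit summation over the acceptance set $\mathcal{A}$ merely spells out a step the paper asserts in one line.
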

\begin{proof} Since $\alpha\geq\alpha_C$ if and only if
$\Pr(\hat{\mathbf{1}}^{link}=1)\neq0$, we have
\begin{align*}
    &\Pr(\mathbf{1}^{link}=1|\hat{\mathbf{1}}^{link}=1) \\
    &=\Pr(\mathbf{1}^{Tx}=1,\mathbf{1}^{Rx}=1|\mathbf{1}^{Tx}=1,\hat{\mathbf{1}}^{Rx}=1)
    \\
    &=\frac{\Pr(\mathbf{1}^{Rx}=1,\hat{\mathbf{1}}^{Rx}=1|\mathbf{1}^{Tx}=1)}
    {\Pr(\hat{\mathbf{1}}^{Rx}=1|\mathbf{1}^{Tx}=1)}\\
    &=\frac{\Pr(\mathbf{1}^{Rx}=1|\mathbf{1}^{Tx}=1)\Pr(\hat{\mathbf{1}}^{Rx}=1|\mathbf{1}^{Rx}=1,\mathbf{1}^{Tx}=1)}
    {\sum_{s=0}^{1}{\Pr(\mathbf{1}^{Rx}=s|\mathbf{1}^{Tx}=1)\Pr(\hat{\mathbf{1}}^{Rx}=1|\mathbf{1}^{Rx}=s,\mathbf{1}^{Tx}=1)}} \\
    &=\frac{\frac{\alpha}{1-\alpha}\frac{\Pr(\hat{\mathbf{1}}^{Rx}=1|\mathbf{1}^{Rx}=1,\mathbf{1}^{Tx}=1)}
    {\Pr(\hat{\mathbf{1}}^{Rx}=1|\mathbf{1}^{Rx}=0,\mathbf{1}^{Tx}=1)}}
    {1+\frac{\alpha}{1-\alpha}\frac{\Pr(\hat{\mathbf{1}}^{Rx}=1|\mathbf{1}^{Rx}=1,\mathbf{1}^{Tx}=1)}
    {\Pr(\hat{\mathbf{1}}^{Rx}=1|\mathbf{1}^{Rx}=0,\mathbf{1}^{Tx}=1)}}\\
    &\geq\frac{\frac{\alpha}{1-\alpha}\frac{w(1-\alpha)}{\alpha}}
    {1+\frac{\alpha}{1-\alpha}\frac{w(1-\alpha)}{\alpha}}
    =\frac{w}{w+1}
\end{align*}
The inequality holds because the likelihood ratio is greater than
$w(1-\alpha)/\alpha$ if $\hat{\mathbf{1}}^{Rx}=1$ and $x/(c+x)$ is a
increasing function with respect to $x$. Reversely, the conditional
probability $\Pr(\mathbf{1}^{link}=1|\hat{\mathbf{1}}^{link}=1)$ is
well-defined if and only if $\Pr(\hat{\mathbf{1}}^{link}=1)\neq0$,
which implies $\alpha\geq\alpha_C$.
\end{proof}

In \textbf{Lemma \ref{l7}},
$\Pr(\mathbf{1}^{link}=1|\hat{\mathbf{1}}^{link}=1)$ could be
interpreted as the probability of successful transmission in CRN and
the weighting factor in Bayesian risk (\ref{e30}) can be determined
by the constraint on the outage probability
$P_{out}=\Pr(\mathbf{1}^{link}=0|\hat{\mathbf{1}}^{link}=1)$. That
is, if a CRN maintains $P_{out}<\zeta$, $w=(1-\zeta)/\zeta$.
Therefore, the condition that allows CR-TX forwarding packets to
CR-Rx (i.e. $\alpha$ belongs to admissive region) guarantees the
outage probability of CR link. Further considering the proposed
physical layer models, we can establish and define a geographic
region, where CR-Tx is allowed forwarding packets to CR-Rx as long
as CR-Rx lies in the region.

\begin{defi}
\textbf{Neighborhood} of CR-Tx $\mathcal{N}$ is
$\{(r_{Rx},\theta_{Rx})|\alpha\geq\alpha_C\}$ or equivalently
becomes
$\{(r_{Rx},\theta_{Rx})|\Pr(\mathbf{1}^{link}=1|\hat{\mathbf{1}}^{link}=1)\geq
w/(w+1)\}$. \textbf{Coverage} of CR-Tx is neighborhood of CR-Tx
without PS.
\end{defi}

Please not that the coverage of CR-Tx is defined without the
existence of PS and the neighborhood is the effective area in real
operation coexisting with PS. When we consider a path loss model
between CR-TX and CR-Rx, coverage becomes a circularly shaped
region. However, due to hidden terminal problem as in Fig.
\ref{Fig_1} and \ref{Fig_2}, where PS is either apart from CR-Tx or
is blocked by obstacles, the probability of collision at CR-Rx could
increase and CR-Tx may be prohibited from forwarding packets to
CR-Rx. Therefore, neighborhood of CR-Tx shrinks from its coverage
and is no longer circular shape. In addition, hidden terminal
problem is location dependent, that is, PS is hidden to CR-Tx but
not to CR-Rx in Fig. \ref{Fig_1} and \ref{Fig_2}. Thus, CR-Rx is
possibly allowed forwarding packets to CR-Tx. From such
observations, CR links are directional and can be mathematically
characterized as follows.

\begin{defi}
$CR_i$ is said to be \textbf{connective} to $CR_j$ if $CR_j$ is
located in the neighborhood of $CR_i$, which is denoted by
$\mathbf{1}^{link}_{ij}=1$. Otherwise, $\mathbf{1}^{link}_{ij}=0$ if
$CR_i$ is not \textbf{connective} to $CR_j$.
\end{defi}

According to above arguments, it is possible that
CR-Rx is connective to CR-Tx but the reserve is not true. Mathematical
conclusion is developed in the following, and is numerically verified
in Fig. \ref{Fig_4}and \ref{Fig_5} in Section VI.

\begin{prop} \label{p6}
Connective relation is asymmetric, that is, for two cognitive
radios, $CR_i$ is connective to $CR_j$ does not imply $CR_j$ is
connective to $CR_i$, or mathematically,
$\mathbf{1}^{link}_{ij}=1\nRightarrow\mathbf{1}^{link}_{ji}=1$.
\end{prop}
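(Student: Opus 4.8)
The statement is a non-implication, so it suffices to exhibit a single counterexample: a placement of $CR_i$, $CR_j$ and the PS (together with a shadowing obstacle) for which $\mathbf{1}^{link}_{ij}=1$ while $\mathbf{1}^{link}_{ji}=0$. I would work in the non-cooperative regime, where by the proposition following \textbf{Lemma~\ref{l8}} together with \textbf{Lemmas~\ref{l6}} and~\textbf{\ref{l7}} the critical boundary $\alpha_C$ is a fixed value determined solely by $w$ (the value $w/(w+1)$), hence \emph{identical} for both assignments of the transmitter role. The whole problem then reduces to producing a geometry in which the explicit $Q$-function expression for $\alpha$ from Section~V satisfies $\alpha(CR_i\to CR_j)\ge\alpha_C>\alpha(CR_j\to CR_i)$, i.e. $CR_j\in\mathcal{N}(CR_i)$ but $CR_i\notin\mathcal{N}(CR_j)$.

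The reason such a geometry exists is that the received-power model of Section~V is intrinsically role-asymmetric: CR-Tx detects with a fixed threshold $\tau_{Tx}$ whereas CR-Rx uses the distance-dependent $\tau_{Rx}=L_0-10a\log_{10}(r_{Rx})$; the PS path-loss mean involves the \emph{transmitter's} distance $d_{Tx}$ to the PS; and, above all, the shadowing law (\ref{e29}) anchors the maximal shadowing at the CR-Tx, with both its magnitude and its support governed by $d_{Tx}$, $\theta_{Rx}$ and $\kappa$. Consequently, interchanging $CR_i$ and $CR_j$ changes which node is shadowed and by how much, so the neighborhood region in $(r_{Rx},\theta_{Rx})$ is generically different for the two nodes. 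I would make this quantitative by placing the PS so that it is shadowed from $CR_i$ by an obstacle of size $\kappa$, with $CR_j$ sitting inside that shadow and close to $CR_i$: choose $d_{Tx}=d_i$ large, $b_{Tx}=b_i$ large, $r_{Rx}=r_{ij}$ small with $r_{ij}\cos(\theta_{ij})\le d_i$, so that $CR_j$ is partially shadowed ($b_{Rx}=b_i(1-r_{ij}/2\kappa)>0$) and $\tau_{Rx}$ is large. In the $Q$-formula this drives both $Q((\mu_{Tx}-\tau_{Tx})/\sigma_{Tx})$ and $Q((\mu_{Rx}-\tau_{Rx})/\sigma_{Rx})$ (in both PS states) toward $1$, forcing $\alpha(CR_i\to CR_j)$ arbitrarily near $1$ and hence $\ge\alpha_C$. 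For the reverse scenario, with $CR_j$ transmitting, take $CR_j$'s own shadowing $b_j$ small (it lies at the edge of the obstacle's effect) and $CR_i$ in the unshadowed branch of (\ref{e29}) (e.g. $r_{ji}\cos(\theta_{ji})>d_j$, or simply $b_j\approx0$): then the PS power at $CR_i$ under $\mathbf{1}^{PS}=0$ is strong relative to $\tau_{Rx}$, $Q((\mu_{Rx}-\tau_{Rx})/\sigma_{Rx})$ is small, and $\alpha(CR_j\to CR_i)$ falls below $\alpha_C$, so $CR_i\notin\mathcal{N}(CR_j)$. This yields $\mathbf{1}^{link}_{ij}=1\nRightarrow\mathbf{1}^{link}_{ji}=1$.

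The main obstacle is bookkeeping rather than conceptual: the three-regime definitions (\ref{e42})--(\ref{e43}) for $(\mu_{CR},\sigma_{CR})$ and the piecewise shadowing rule (\ref{e29}) must be kept mutually consistent while the constants $K_0,\mu_0,\sigma_0,\sigma_S,\tau_{Tx},L_0,a,\kappa$ and the three positions are pinned down so that the two $Q$-function products land on opposite sides of $\alpha_C=w/(w+1)$. Once a concrete numerical instance is fixed, verifying the two inequalities is routine, and the effect is moreover already visible in the neighborhood plots of Fig.~\ref{Fig_4} and~\ref{Fig_5}; I would either cite those or pin down one explicit parameter set to close the argument.
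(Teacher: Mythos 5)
Your overall strategy is the paper's: exhibit a single counterexample, work in the non-cooperative regime where (\ref{e13}) gives $\alpha_C=w/(w+1)$, and show $\alpha(CR_i\to CR_j)\geq\alpha_C>\alpha(CR_j\to CR_i)$ via (\ref{e38}). But the specific geometry you propose does not produce the required asymmetry, and the gap is in precisely the step you defer as ``bookkeeping.'' For the reverse link $CR_j\to CR_i$ to fail you need two things simultaneously: $CR_j$ (as transmitter) must be blind to the PS so that it transmits while the PS is active, \emph{and} the PS power at $CR_i$ (as receiver) must exceed $\tau_{Rx}$ so that collisions actually occur. Your construction places both nodes close together behind the same obstacle and far from the PS ($d_i$ large, $b_i$ large, $CR_j$ inside the shadow, $r_{ij}$ small). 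That makes the PS effectively invisible to \emph{both} nodes, so both directions come out with $\alpha\approx1$. Your assertion that ``the PS power at $CR_i$ under $\mathbf{1}^{PS}=0$ is strong relative to $\tau_{Rx}$'' contradicts your own forward-direction setup: even granting the role-asymmetric artifact of (\ref{e29}) that lets $b_{Rx}$ be inherited from a small $b_j$, the path-loss term $-10a\log_{10}(d_i)$ with $d_i$ large keeps $\mu_S$ at $CR_i$ low, and $\tau_{Rx}=L_0-10a\log_{10}(r_{ij})$ is large because $r_{ij}$ is small, so $Q((\mu_{Rx}-\tau_{Rx})/\sigma_{Rx})$ stays near $1$ and no outage occurs. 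Patching this forces $d_i$ small and $d_j$ large, hence $r_{ij}$ not small --- at which point you have abandoned your geometry and reconstructed the paper's.

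The paper's proof avoids this tension by using the Fig.~\ref{Fig_1} configuration with pure path loss and no obstacle at all: $CR_i$ sits \emph{between} $CR_j$ and PS-Tx, so the two nodes have genuinely different exposure to the PS. Forward ($i\to j$): $CR_i$ is close to the PS, detects it essentially perfectly, transmits only when $\mathbf{1}^{PS}=1$, and then $CR_j$ receives cleanly, giving $\alpha\approx1$. Reverse ($j\to i$): $CR_j$ is far from the PS, so $\mu_{CR}=\mu_0$ in (\ref{e42})--(\ref{e43}) and $CR_j$ always believes the channel free; when $\mathbf{1}^{PS}=0$ the nearby PS causes a collision at $CR_i$, so $\alpha\approx\Pr(\mathbf{1}^{PS}=1)=0.7<w/(w+1)=0.9$ with $w=9$. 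Note that your forward direction also uses a different mechanism (transmitter blind but receiver immune) than the paper's (transmitter detects perfectly); the paper's mechanism is what lets the same node be PS-exposed in one role and thereby break the reverse link. To salvage your shadowing-based variant you would need the obstacle to block the PS from exactly one of the two nodes (the Fig.~\ref{Fig_2} situation), not from both, and you would still need to fix a concrete parameter set and verify the inequality $\alpha(CR_j\to CR_i)<w/(w+1)$ rather than leaving it to the figures.
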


\begin{proof}
We analytically illustrate using Fig. \ref{Fig_1}, where $CR_i$ lies
in the middle of $CR_j$ and PS-Tx and PS-Tx is hidden to $CR_j$ but
not to $CR_i$. Let $w=9$ to guarantee the outage probability of CR
link less than $0.1$ and let $\Pr(\mathbf{1}^{PS}=1)=0.7$, i.e., the
spectrum utility of PS is only $30\%$. If $CR_i$ wants to forward
packets to $CR_j$ (i.e. $CR_i$ is CR-Tx and $CR_j$ is CR-Rx), $CR_i$
can successfully detect the activity of PS and
$\Pr(\mathbf{1}^{Tx}=1|\mathbf{1}^{PS}=1)\approx1$ and
$\Pr(\mathbf{1}^{Tx}=1|\mathbf{1}^{PS}=0)\approx0$. Therefore,
$CR_i$ forwards packets to $CR_j$ only when $\mathbf{1}^{PS}=1$. In
addition, since $CR_i$ is located in the transmission range of
$CR_j$, $CR_j$ is located in the transmission range of $CR_i$ in a
pure path loss model and
$\Pr(\mathbf{1}^{Rx}=1|\mathbf{1}^{Tx}=1,\mathbf{1}^{PS}=1)\approx1$.
Applying (\ref{e38}) and (\ref{e13}), we have $\alpha\approx1$ and
$\mathbf{1}^{link}_{ij}=1$. On the other hand, when $CR_j$ wants to
forward packets to $CR_i$, $CR_j$ becomes CR-Tx and $CR_i$ becomes
CR-Rx. Since PS-Tx is hidden to $CR_j$, at $CR_j$, the received
signal power from PS is below noise power, and $\mu_{CR}=\mu_0$ and
$\sigma_{CR}^2=\sigma_0^2$ in (\ref{e42}) and (\ref{e43}).
Therefore, $\Pr(\mathbf{1}^{Tx}=1|\mathbf{1}^{PS}=1)\approx1$ and
$\Pr(\mathbf{1}^{Tx}=1|\mathbf{1}^{PS}=0)\approx1$. That is, $CR_j$
always feels the spectrum available and intends to forward packets
to $CR_i$. However, when $\mathbf{1}^{PS}=0$, collisions occurs at
$CR_i$ and
$\Pr(\mathbf{1}^{Rx}=1|\mathbf{1}^{Tx}=1,\mathbf{1}^{PS}=0)\approx0$.
Similarly, by (\ref{e38}) and (\ref{e13}), we have
$\alpha\approx\Pr(\mathbf{1}^{PS}=1)=0.7<w/(w+1)=0.9$ and
$\mathbf{1}^{link}_{ji}=0$.
\end{proof}

\textbf{Proposition \ref{p6}} mathematically suggests that links in
CRN are generally asymmetric and even unidirectional as the argument
in \cite{GSSinCRN:Centin09}. Therefore, traditional feedback
mechanism such as acknowledgement and automatic repeat request (ARQ)
in data link layer may not be supported in general. This challenge
can be alleviated via cooperative schemes. Roles of a cooperative
node in CR network operation thus include
\begin{enumerate}
    \item Extend neighborhood of CR-Tx to its coverage
    \item Ensure bidirectional links in CRN (i.e. enhance probability to maintain bidirectional)
    \item Enable feedback mechanism for the purpose of upper layers
\end{enumerate}
Since neighborhood increases as $\alpha_C$ decreases, by
\textbf{Lemma \ref{l6}}, the capability of cooperative schemes to
extend neighborhood increases when number of cooperative nodes
increases. Therefore, spectrum sensing capability mathematically
determine CRN topology. It also suggests the functionality of
cooperative nodes in topology control
\cite{GSSinCRN:Thomas07}\cite{GSSinCRN:Chen072} and network routing
\cite{GSSinCRN:Centin09}, which is critical in CRN due to asymmetric
links and heterogeneous network architecture
\cite{GSSinCRN:Centin09}.

Here, we illustrate impacts of correlation between spectrum
availability at cooperative nodes on neighborhood. Recall
\textbf{Proposition \ref{p3}}, where we considered two cooperative
nodes with $\beta_i=\gamma_i$, $\rho_i>0, i=0,1$, and
$\mathbf{1}^{Co}_1\rhd\mathbf{1}^{Co}_2$. From (\ref{e37}), we have
\begin{equation} \label{e39}
    \alpha_C=
    \begin{cases}
    \alpha_4^C &\text{if $\Delta<(2\beta_2-1)(1-\beta_1)$}\\
    \alpha_2^C &\text{otherwise}
    \end{cases}
\end{equation}
Therefore, as $\Delta$ increases from $\Delta_{min}$,
$\alpha_C=\alpha_4^C$ increases from 0 and achieves maximum at
$\Delta=(2\beta_2-1)(1-\beta_1)$. At this point,
$\alpha_C=\alpha_4^C=\alpha_2^C=w(1-\gamma_1)/(\beta_1+w(1-\gamma_1))$,
which is the critical boundary with node one alone. If $\Delta$
further increases to $\Delta_{max}$, $\alpha_C=\alpha_2^C$ decreases
to 0. We conclude that positive correlation between
$\mathbf{1}^{Co}_1$ and $\mathbf{1}^{Co}_2$ shrinks the
neighborhood, compared to the independent case ($\rho_{12}=0$),
unless the correlation is high enough, i.e.,
$\Delta>(2\beta_2-1)(1-\beta_1)/\beta_2$ by solving
$\alpha_2^C|_{\Delta}<\alpha_4^C|_{\Delta=0}$ according to
(\ref{e39}).

If one CR has larger neighborhood area, it is expected to be
connective to more CRs and to have higher probability to forward
packets successfully and higher throughput of CRN accordingly. The
result offers a novel dimension to evaluate cooperative nodes. That
is, different from criterions in link level, such as minimum
Bayesian risk or maximum reliability as we mentioned in last
section, maximum neighborhood area is a novel criterion to select
the best cooperative node from the viewpoint of network operation.

\begin{prop} \label{p5}
(\textbf{Optimum Selection of Cooperative Node}) For a CRN with a
constraint on the outage probability $P_{out}<\zeta$, there are one
CR and $K$ cooperative nodes, indexed by $k$. The best cooperative
node for the CR under maximum neighborhood area criterion is
\begin{equation} \label{e31}
    k_{opt}=\arg\underset{k}{\max}\mathcal{N}_A(k) \quad \text{with $w=(1-\zeta)/\zeta$}
\end{equation}
where $\mathcal{N}_A(k)$ represents neighborhood area of the CR with
the aid of the $k$th cooperative node.
\end{prop}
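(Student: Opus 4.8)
The plan is to reduce the claim to two ingredients already in hand: the correspondence between the outage constraint and the Bayesian weighting factor $w$, and the monotone dependence of network-level performance on neighborhood area. First I would pin down $w$. From the outage requirement $P_{out} = \Pr(\mathbf{1}^{link}=0|\hat{\mathbf{1}}^{link}=1) < \zeta$, i.e. $\Pr(\mathbf{1}^{link}=1|\hat{\mathbf{1}}^{link}=1) > 1-\zeta$, and \textbf{Lemma \ref{l7}}, the event $\alpha \ge \alpha_C$ is equivalent to $\Pr(\mathbf{1}^{link}=1|\hat{\mathbf{1}}^{link}=1) \ge w/(w+1)$. Matching thresholds, $w/(w+1) = 1-\zeta$ gives $w = (1-\zeta)/\zeta$; with this $w$ the admissive region of $\alpha$ is exactly the set of CR-Rx configurations meeting the prescribed outage level.

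Next I would identify the neighborhood with the feasible delivery region. By the definition of neighborhood of CR-Tx, $\mathcal{N}(k) = \{(r_{Rx},\theta_{Rx}) | \alpha \ge \alpha_C\}$, where $\alpha_C$ is computed from the statistics $\{\alpha,\beta_k,\gamma_k\}$ of the $k$th cooperative node and, through the received-power and path-loss models of Section V, $\alpha$ and $\alpha_C$ are functions of the CR-Rx position $(r_{Rx},\theta_{Rx})$. Combining with the first step, a receiver at position $(r_{Rx},\theta_{Rx})$ can be served by CR-Tx under the outage guarantee if and only if that position lies in $\mathcal{N}(k)$; hence $\mathcal{N}(k)$ is precisely the set of admissible forwarding targets when node $k$ cooperates, and $\mathcal{N}_A(k)$ is its area.

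I would then close with the monotonicity observation made just before the proposition: a larger neighborhood area makes the CR connective to weakly more CRs, raises the probability of successful forwarding, and thus weakly increases CRN throughput. Since all these network objectives are nondecreasing in $\mathcal{N}_A(k)$, and $w$ has already been fixed by the outage constraint independently of which cooperative node is chosen, the optimal choice is the node maximizing $\mathcal{N}_A(k)$, i.e. $k_{opt} = \arg\max_k \mathcal{N}_A(k)$ with $w = (1-\zeta)/\zeta$.

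The main obstacle is not a deep computation but the order of quantifiers: $\mathcal{N}_A$ depends on $w$, $w$ depends on $\zeta$, and $\alpha_C$ (hence $\mathcal{N}_A$) depends on the cooperative node in use, so the argument must fix $w$ from the outage requirement before optimizing over $k$. A secondary point I would make explicit is that $\mathcal{N}(k)$ is a genuine, area-measurable region in the $(r_{Rx},\theta_{Rx})$ plane only because $\alpha$ varies monotonically along each ray from CR-Tx (via the $Q$-function expression for $\alpha$ together with $\tau_{Rx} = L_0 - 10a\log_{10}(r_{Rx})$), so that $\alpha \ge \alpha_C$ cuts out a well-behaved set whose area $\mathcal{N}_A$ is meaningful to maximize.
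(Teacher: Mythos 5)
Your argument is correct and follows essentially the same route the paper takes: the paper gives no formal proof of this proposition, but its justification is exactly the surrounding discussion you reconstruct — fixing $w=(1-\zeta)/\zeta$ by matching the outage constraint to the threshold $w/(w+1)$ in \textbf{Lemma \ref{l7}}, identifying the neighborhood as the set of receiver positions admissible under that guarantee, and invoking the monotonicity of connectivity and throughput in neighborhood area. Your added remarks on the order of quantifiers and on the measurability of $\mathcal{N}(k)$ are sensible refinements but do not change the substance.
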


In CRN, CRs could act as relay nodes to relay packets to the
destination. Assume the destination is in the direction $\theta$ of
a CR with respect to PS. It is intuitive for the CR to forward
packets to the direction around $\theta$, say $\theta\pm\epsilon$.
Let
$\mathcal{N}_{\theta\pm\epsilon}=\{\mathcal{N}|\theta_{Rx}\in(\theta-\epsilon,\theta+\epsilon)\}$
and then the best cooperative node may become (\ref{e31}) with
$\mathcal{N}$ replaced by $\mathcal{N}_{\theta\pm\epsilon}$.

\section{Experiments}
\subsection{General Spectrum Sensing}
\subsubsection{Spectrum Sensing Performance}
The performance of spectrum sensing, measured by Bayesian risk
(\ref{e30}), is plotted by Bayesian risk versus the probability of
spectrum availability at CR-Rx $\alpha$ in Fig. \ref{Fig_3}. We set
the weighting factor $w=9$ ($w$ is defined in (\ref{e41})) to
guarantee the outage probability of CR link less than $0.1$. Larger
Bayerian risk represents worse performance because spectrum sensing
induces more possibility of collisions at CR-Rx or of losing
opportunity to utilize spectrum. We see that traditional spectrum
sensing without considering spectrum availability at CR-Rx (i.e.
$\mathbf{1}^{Rx}$) has large Bayesian risk when $\alpha$ becomes
small because collisions usually occur when CR-Tx determines link
availability only by localized spectrum availability at CR-Tx (i.e.
$\mathbf{1}^{Tx}$). On the other hand, by considering
$\mathbf{1}^{Rx}$ in our general spectrum sensing (\ref{e13}) with
known $\alpha$, Bayesian risk decreases when $\alpha$ is less than
$0.9$, which is the critical boundary of $\alpha$, $\alpha_C$, and
CR-Tx is prohibited from forwarding packets to CR-Rx. Therefore,
risk occurs due to loss of opportunity to utilize spectrum.

However, in practice, $\alpha$ is unknown and needs to be estimated
by \textbf{Lemma \ref{l8}}. We set observation depth (i.e. duration)
$L=15$ and show expected Bayesian risk of inference-based spectrum
sensing (\ref{e13}) with respect to observed sequence
$\mathbf{1}^{Rx}[n-1],\ldots,\mathbf{1}^{Rx}[n-L]$. The performance
degrades around $\alpha_C$ and even worse than that of traditional
spectrum sensing when $\alpha\geq\alpha_C$ because the estimation
error may cause the estimated $\alpha$ (\ref{e1}) to across
$\alpha_C$ and results in different sensing rules; however, it is
close to the performance with known $\alpha$. This verifies the
effectiveness of inference-based spectrum sensing.

Fig. \ref{Fig_3} also shows Bayesian risk of cooperative sensing
(\ref{e15}) under different sensing capability of a cooperative
node, i.e. reliability $M_R$. We assume statistical information
$\{\alpha,\beta,\gamma\}$ can be perfectly estimated. The
performance curve is composed of three line segments as in
(\ref{e15}) and shows performance improvement in the middle segment
due to the aid of the cooperative node. However, in the right and
the left segments, the cooperative node becomes useless and the
performance is equal to that of non-cooperative sensing. In
comparison of sensing capability of a cooperative node, the one with
larger reliability is expected to have higher correlation between
spectrum availability at CR-Rx and the cooperative node (i.e.
$\mathbf{1}^{Co}$) and to provide more information about
$\mathbf{1}^{Rx}$; therefore, it achieves lower Bayesian risk and
lower $\alpha_C$ (i.e. larger admissive region). In addition, when
$\beta+\gamma=1$ ($\beta=0.8,\gamma=0.2$), $\mathbf{1}^{Rx}$ and
$\mathbf{1}^{Co}$ are independent and the performance is identical
to that of spectrum sensing with known $\alpha$ in Fig. \ref{Fig_3}.

\subsubsection{Impacts of Correlation between Spectrum Availability at Cooperative Nodes}
We next investigate performance of spectrum sensing with two
cooperative nodes with respect to the correlation between spectrum
availability at these two nodes (i.e. $\mathbf{1}^{Co}_1$ and
$\mathbf{1}^{Co}_2$) $\rho_{12}$. We set $\beta_1=\gamma_1=0.75$,
$\beta_2=\gamma_2=0.7$ as the scenario in \textbf{Proposition
\ref{p3}} and depict Bayesian risk in Fig. \ref{Fig_6}. Generally
speaking, Bayesian risk decreases (increases) as $\rho_{12}$
increases in high (low) $\alpha$. We also observe that $\alpha_C$
decreases when number of cooperative nodes increases and $\alpha_C$
increases when $\rho_{12}$ increases unless the correlation is high
enough. For example, if two cooperative nodes are close in location
and $\rho_{12}=0.8$, $\alpha_C$ is less than that when
$\rho_{12}=0$. It is also interesting to note that there exists a
region of $\alpha$ such that the identity of $\mathbf{1}^{Co}_1$ and
$\mathbf{1}^{Co}_2$ instead of $\mathbf{1}^{Co}_1$ determines CR
link availability as in (\ref{e40}) and Bayesian risk is less than
that with $\mathbf{1}^{Co}_1$ alone.

The results further suggest trade-off between performance in link
layer and network layer when we select cooperative nodes. That is,
for one CR link with $\alpha>0.5$ (e.g. CR-Rx is close to CR-Tx or
spectrum utilization of PS is low), large $\rho_{12}$ is preferred
to achieve low risk. However, from network perspective, to achieve
high number of CR links that are admissive to CR-Tx (i.e. to achieve
large neighborhood and low $\alpha_C$) and thus high throughput of
CRN, small $\rho_{12}$ is preferred.

\subsubsection{Robust Spectrum Sensing}
For multiple cooperative nodes, with six nodes in our simulation, we
show the performance of cooperative spectrum sensing with limited
statistical information $\mathbf{Q}^{(s)}_{k,K},s=0,1$ due to
limited sensing duration. We first find least-favorable joint pmf
$\mathbf{P}^{(s)}_{opt},s=0,1$ by (\ref{e35}) and then compute
corresponding Bayesian risk, which is shown in Fig. \ref{Fig_7}
under different order of known marginal pmf $k$ (i.e. capability of
observation). That is, CR-Tx only acquires pmf of $k$ out of six
cooperative nodes. The risk is compared to that with the optimum
sensing rule (\ref{e14}) and that with assumption of independent
spectrum availability (\ref{e34}). Obviously, Bayesian risk
decreases and approaches to that in the optimum case as the order of
known marginal pmf $k$ increases because more information is
acquired to generate $\mathbf{P}^{(s)}_{opt},s=0,1$ closer to the
true one $\mathbf{P}^{(s)}_{K},s=0,1$. We observe that when the
order $k$ is greater than 3, robust spectrum sensing outperforms the
case of traditional independence assumption. Therefore, if CR-Tx
would like to select six cooperative nodes, CR-Tx only requires
statistical information about spectrum availability among three out
of six cooperative nodes, i.e. $\mathbf{Q}^{(s)}_{3,6},s=0,1$, to
achieve better performance than the case according to reliability
criterion.

\subsection{Neighborhood of CR-Tx}
\subsubsection{Without Obstacles}
In Fig. \ref{Fig_4}, we illustrate neighborhood of CR-Tx ("$+$" in
the figure) without blocking. The neighborhood boundary with/without
a cooperative node ("$\circ$" in the figure) is depicted by a thick
and a thin line respectively. The parameters are set as follows:
$\mu_0=0$, $\sigma_0^2=1$, $\sigma_S^2=8$, $K_0=10$, $a=3$, $L_0=3$,
$\tau_{Tx}=\tau_{Co}=3$, and $\Pr(\mathbf{1}^{PS}=1)=0.6$.

In Fig. \ref{Fig_4}(a), PS ("$\ast$" in the figure) is placed near
to CR-Tx ($(0.7,0)$). We observe that CR-Tx almost perfectly detects
the state of PS, i.e.,
$\Pr(\mathbf{1}^{Tx}=1|\mathbf{1}^{PS}=1)\approx1$ and
$\Pr(\mathbf{1}^{Tx}=1|\mathbf{1}^{PS}=0)\approx0$, and
$\alpha\approx\Pr(\mathbf{1}^{Rx}=1|\mathbf{1}^{Tx}=1,\mathbf{1}^{PS}=1)$
by (\ref{e38}). Therefore, neighborhood of CR-Tx approaches to its
coverage and the cooperative node is not necessary in this case from
a viewpoint of network operation. However, when PS is apart from
CR-Tx ($(1.7,0)$) as we shown in Fig. \ref{Fig_1}, the neighborhood
at PS side shrinks and is no longer circularly shaped because PS is
hidden to CR-Tx and hence probability of collision at CR-Rx
increases when $\mathbf{1}^{PS}=0$. Fig. \ref{Fig_4}(b)$\sim$(d)
illustrate the neighborhood under different locations of the
cooperative node. We observe that neighborhood area decreases when
the cooperative node moves away from PS and there even exists a
region where cooperative sensing can not help. Therefore, the
cooperative node in Fig. \ref{Fig_4}(b) is the best among these
three nodes according to maximum neighborhood area criterion in
\textbf{Proposition \ref{p5}}.

We present an example of existence of unidirectional link in CRN. In
Fig. \ref{Fig_4}(b), assume one CR is located at $(1,0)$. Obviously,
the CR-Tx is not connective to the CR and therefore is prohibited
from forwarding packets to the CR. However, by Fig. \ref{Fig_4}(a),
the CR is connective to CR-Tx, which makes the link unidirectional
(only from the CR to CR-Tx). As \textbf{Proposition \ref{p6}}, this
also shows asymmetric connective relation even under rather ideal
radio propagation. With the aid of a cooperative node located at
$(0.4,0.3)$, the link returns to a bidirectional link.

\subsubsection{With Obstacles}
Alternatively, we consider effects of shadowing due to blocking, as
we illustrated in Fig. \ref{Fig_2}. We set shadowing parameters
$b_{Tx}=25$ and parameter of obstacle size $\kappa=0.3$ and $0.7$ in
Fig. \ref{Fig_5}(a)(b) and Fig. \ref{Fig_5}(c)(d) respectively. We
observe that small obstacles size (i.e. small $\kappa$) can result
in more substantial shrink of the neighborhood, compared to large
obstacles size (i.e. large $\kappa$). The reason is: if $\kappa$ is
small, only a small region around CR-Tx falls in deep shadowing and
the state of PS can be successfully detected outside that region.
Therefore, this leads to high probability of collision at CR-Rx as
$\mathbf{1}^{PS}=0$. On the other hand, if $\kappa$ is large, CRs
are likely separated from PS by obstacles, which results in large
"distance" between CR and PS. Here, "distance" is measured by
received signal power \cite{GSSinCRN:Tu09}\cite{GSSinCRN:Chen07}. In
comparison of the capability of a cooperative node, the one in small
$\kappa$ has good capability to recover the neighborhood to its
coverage even when the node is at opposite side of PS. However, for
large $\kappa$, the cooperative node may also be in deep shadowing
and becomes useless to recover neighborhood of CR-Tx.

\section{Conclusion}
In this paper, we showed that CR link availability should be
determined by spectrum availability at both CR-Tx and CR-Rx, which
may not be identical due to hidden terminal problem (Fig.
\ref{Fig_1} and \ref{Fig_2}). In order to fundamentally explore the
spectrum sensing at link level and its impacts on network operation,
we established an indicator model of CR link availability and
applied statistical inference to predict/estimate unknown spectrum
availability at CR-Tx due to no centralized coordinator nor
information exchange between CR-Tx and CR-Rx in advance. We
therefore expressed conditions for CR-Tx to forward packets to CR-Rx
under guaranteed outage probability. These conditions, along with
physical channel models, define neighborhood of CR-Tx, which is no
longer circularly shaped as coverage. This results in asymmetric or
even unidirectional links in CRN, as we illustrated in Section VI.
The impairment of CR links can be alleviated via cooperative scheme.
Therefore, spectrum sensing capability determines network topology
and thus throughput of CRN. Several factors with impacts on spectrum
sensing are analyzed, including:
\begin{enumerate}
\item Correlation of spectrum availability at cooperative nodes
\item Capability of observation at CR-Tx (i.e. available statistical information at CR-Rx)
\item Locations of cooperative nodes and environment (i.e. obstacles)
\end{enumerate}

Furthermore, limits of cooperative scheme were also addressed in
link level and network level. In addition, to measure sensing
capability and then to select cooperative nodes is an important
issue because we would like to minimize information exchange to
increase spectrum utilization. Criterions from link level (maximum
reliability or minimum Bayesian risk) and network level (maximum
neighborhood area) perspectives were accordingly proposed. We
numerically demonstrated existence of trade-off in designing systems
in different layers. In addition, robust spectrum sensing was
proposed to deal with local and partial information due to no
centralized coordination and limited sensing duration in CRN. More
useful results in CRN extended from this research can be expected in
future works.


\begin{figure} [!b]
\begin{center}
\includegraphics[height=20em]{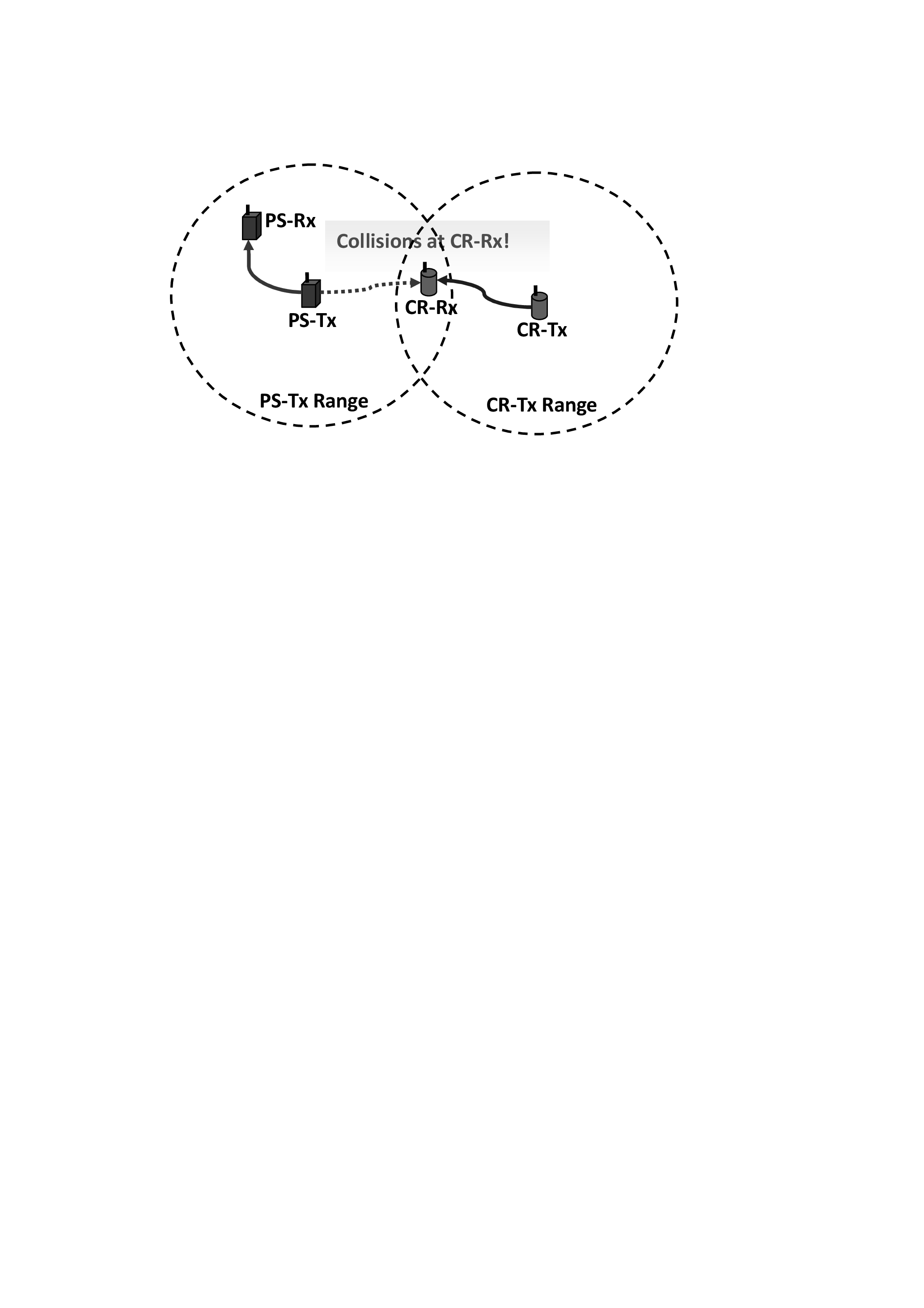}
\end{center}
\caption{Hidden terminal problem. CR-Rx lies in the middle of CR-Tx
and PS-Tx and PS-Tx is hidden to CR-Tx.} \label{Fig_1}
\end{figure}

\begin{figure} [!b]
\begin{center}
\includegraphics[height=25em]{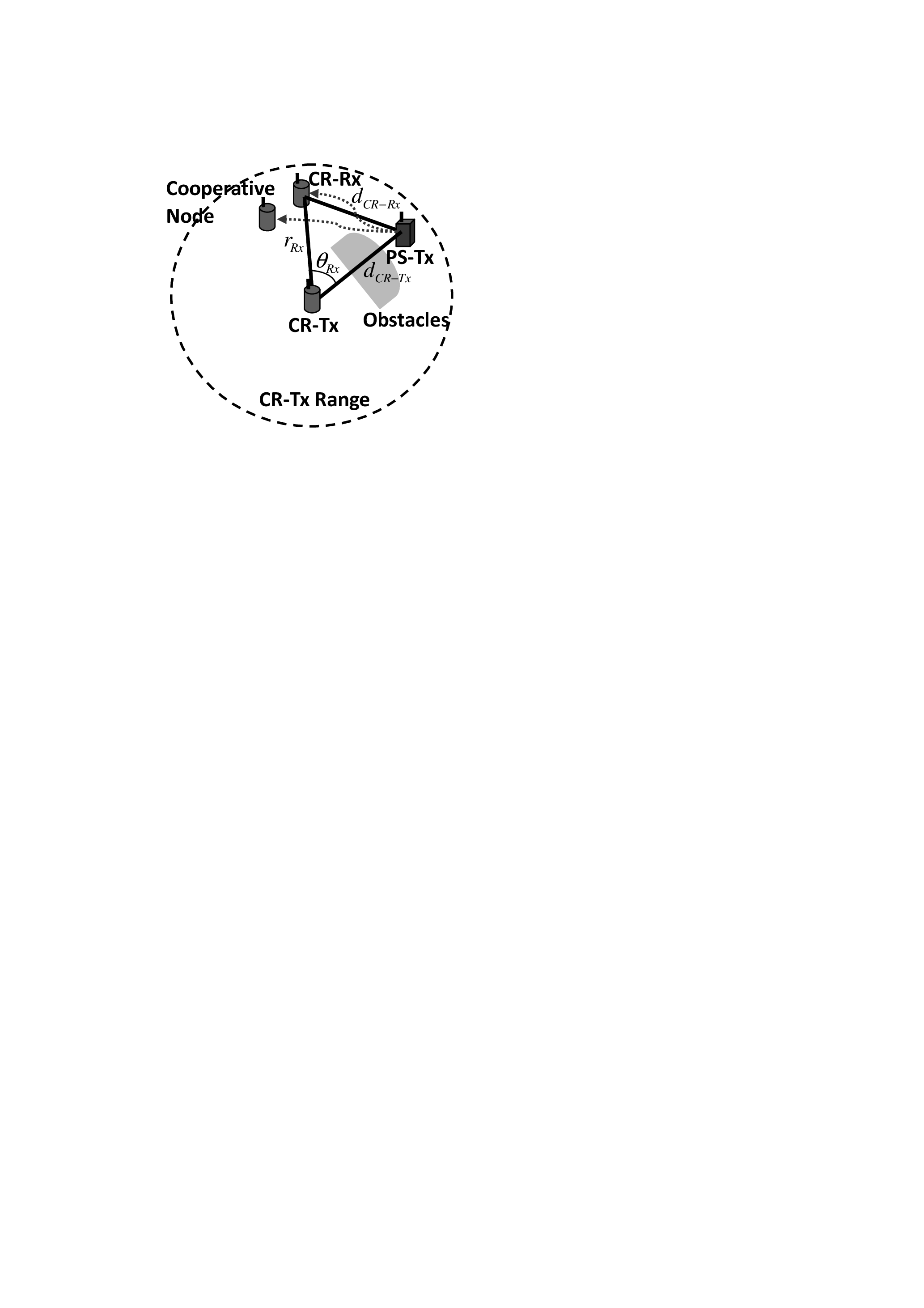}
\end{center}
\caption{Activities of PS-Tx is blocked by obstacles to CR-Tx but
not to CR-Rx.} \label{Fig_2}
\end{figure}

\begin{figure} [!b]
\begin{center}
\includegraphics[height=25em]{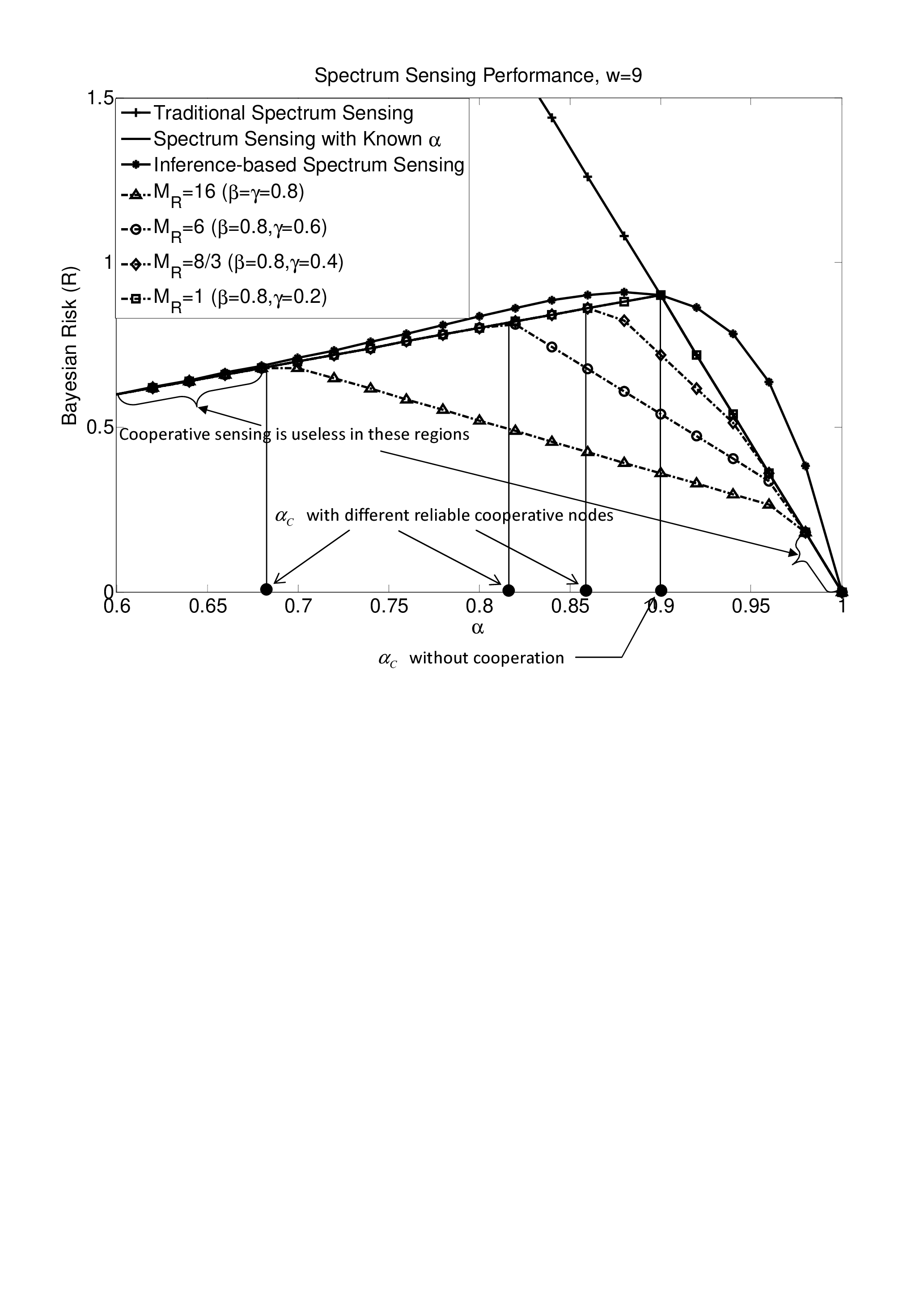}
\end{center}
\caption{Performance (Bayesian risk) of spectrum sensing schemes
with respect to the probability of spectrum availability at CR-Rx
$\alpha$. The critical boundary of $\alpha$, $\alpha_C$, is
represented by $\bullet$ under different sensing schemes .}
\label{Fig_3}
\end{figure}

\begin{figure} [!b]
\begin{center}
\includegraphics[height=25em]{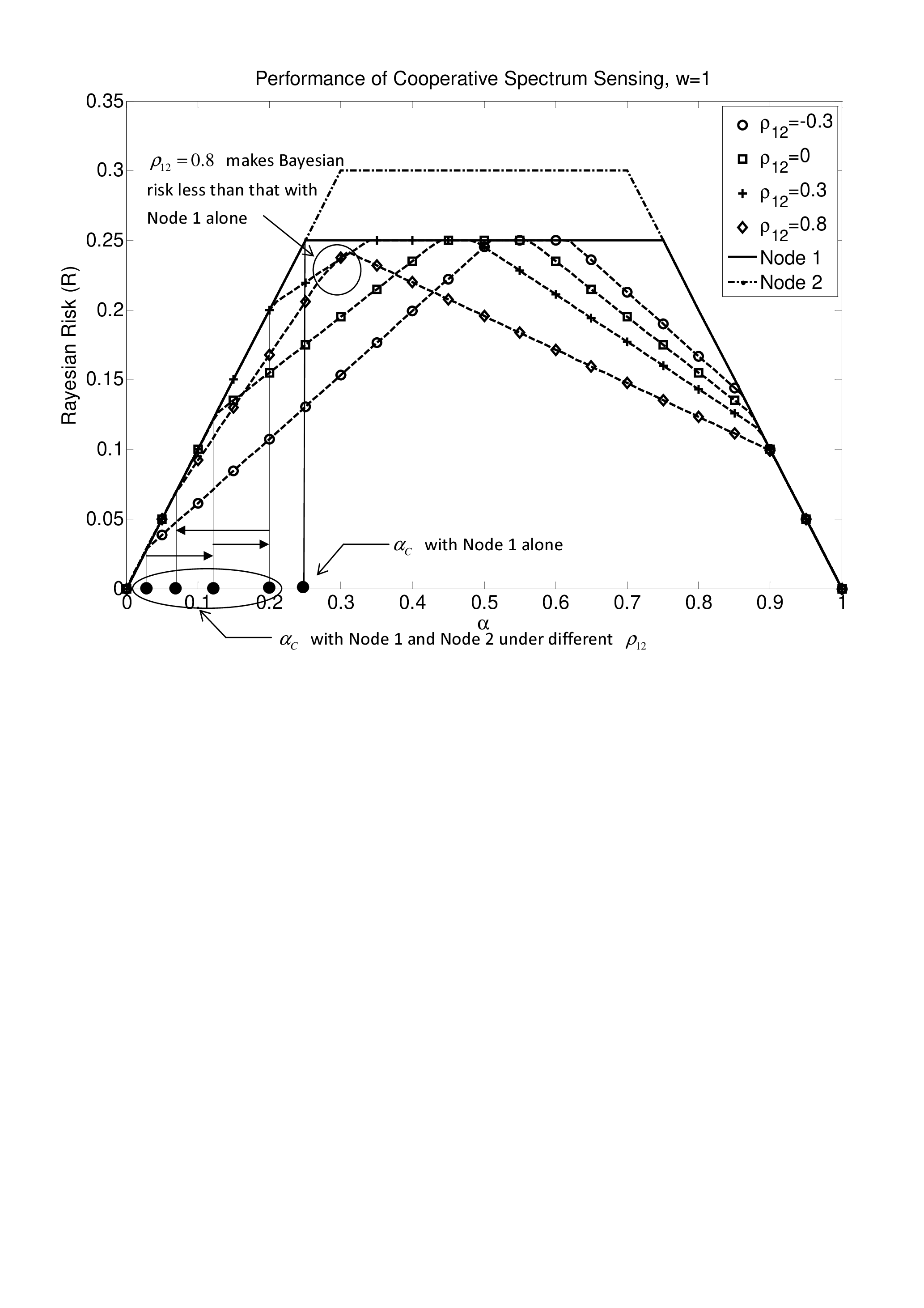}
\end{center}
\caption{Performance of spectrum sensing with one/two cooperative
node(s) with respect to the probability of spectrum availability at
CR-Rx $\alpha$. The transition and the value of the critical
boundary of $\alpha$, $\alpha_C$, under different correlation of
Node 1 and Node 2 $\rho_{12}$ is also shown.} \label{Fig_6}
\end{figure}

\begin{figure} [!b]
\begin{center}
\includegraphics[height=25em]{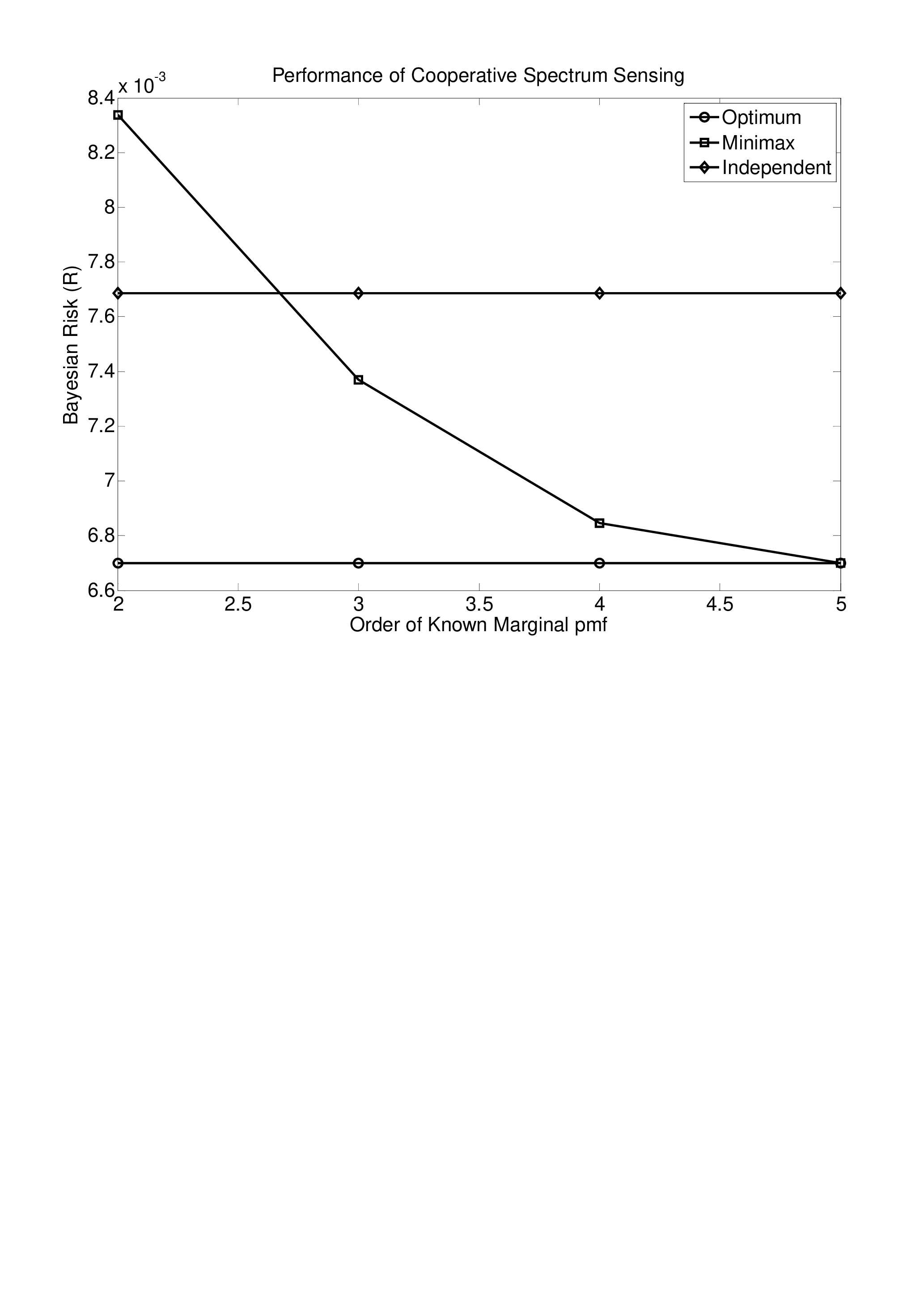}
\end{center}
\caption{Performance of robust spectrum sensing under different
order of known marginal pmf $k$, that is, pmf of $k$ out of six
cooperative nodes is available at CR-Tx. The performance is compared
to the spectrum sensing with full statistical information and with
traditional independence assumption.} \label{Fig_7}
\end{figure}

\begin{figure} [!b]
\begin{center}
\includegraphics[height=35em]{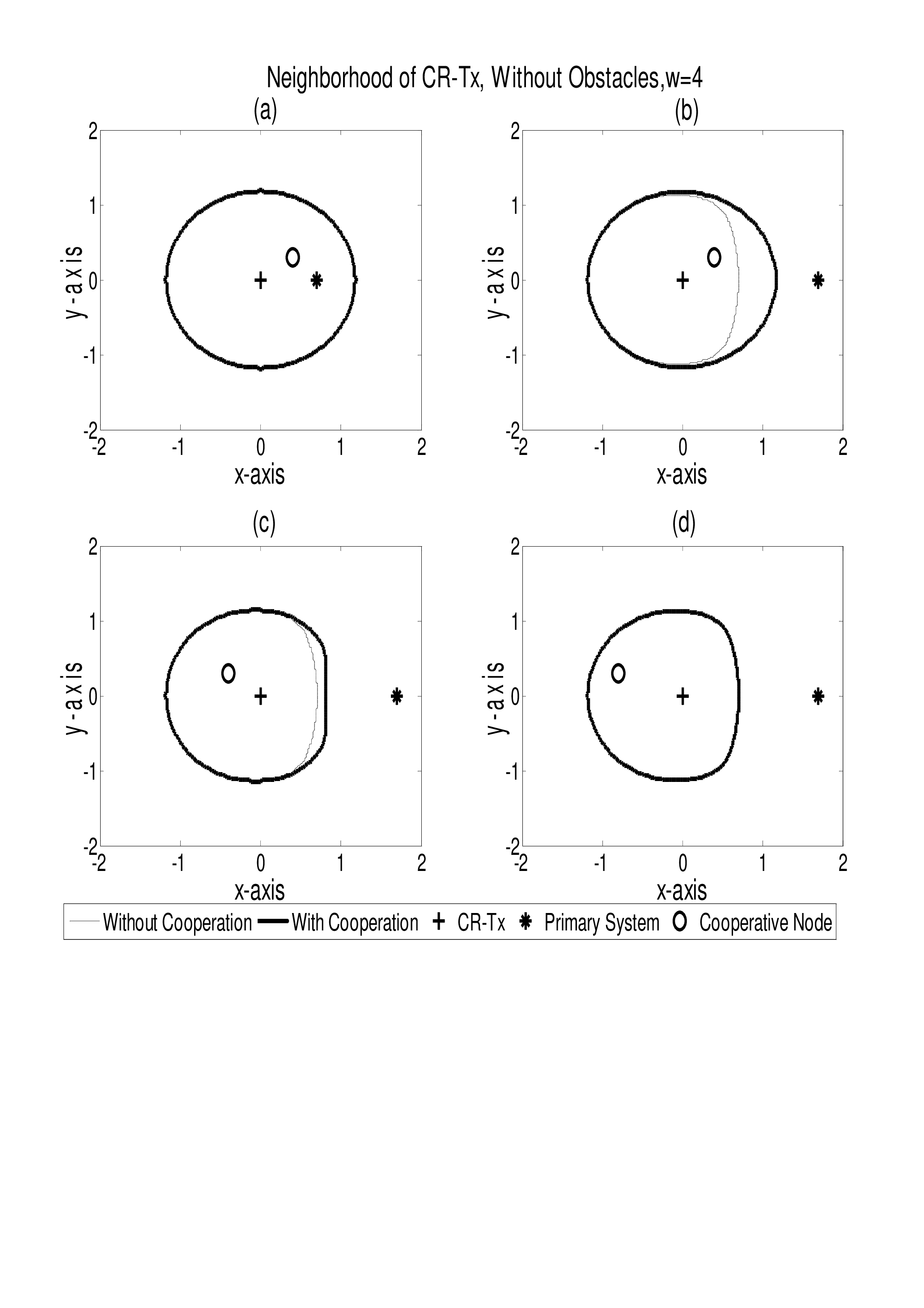}
\end{center}
\caption{Neighborhood of CR-Tx without obstacles. CR-Tx can only be
allowed forwarding to CR-Rx located within the bounded region.}
\label{Fig_4}
\end{figure}

\begin{figure} [!b]
\begin{center}
\includegraphics[height=35em]{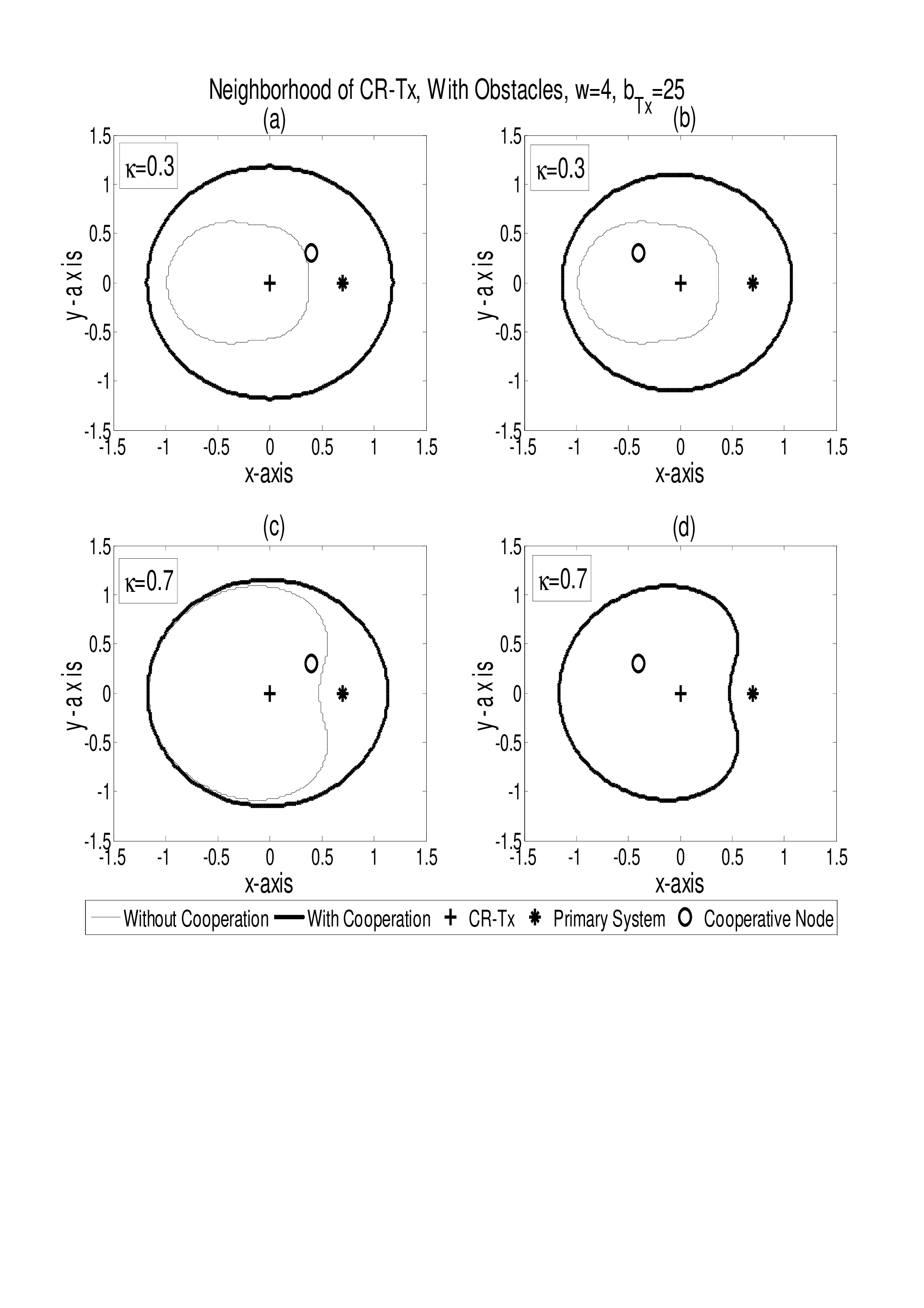}
\end{center}
\caption{Neighborhood of CR-Tx with obstacles. Effects of small
($\kappa=0.3$ in (a)(b)) and large ($\kappa=0.7$ in (c)(d))
obstacles are compared.} \label{Fig_5}
\end{figure}

%

%


\begin{thebibliography}{1}


\bibitem{GSSinCRN:Mitola00}
J.~Mitola III, "Cognitive radio: an integrated agent architecture
for software defined radio," Ph.D. dissertation, Royal institute of
Technology (KTH), Stockholm, Sweden, 2000.

\bibitem{GSSinCRN:FCC}
FCC Spectrum Policy Task Force, "FCC report of the spectrum
efficiency working group," Nov. 2002,
http://www.fcc.gov/sptf/files/SEWGFinalReport\_1.pdf.

\bibitem{GSSinCRN:Haykin05}
S.~Haykin, "Cognitive radio: brain-empowered wireless
communications," \emph{IEEE Journal on Selected Areas in
Communications}, vol. 23, pp. 201-220, Feb. 2005.

\bibitem{GSSinCRN:Ghasemi082}
A. Ghasemi and E. S. Sousa, "Spectrum sensing in cognitive radio
networks: requirements, challenges and design trade-offs,"
\emph{IEEE Communications Magazine}, vol. 46, no. 4, pp. 32-39, Apr.
2008.

\bibitem{GSSinCRN:Ma09}
J. Ma, G. Y. Li, and B. H. Juang, "Signal processing in cognitive
radio," \emph{Proceeding of the IEEE}, vol. 97, no. 5, May 2009.

\bibitem{GSSinCRN:Yucek09}
T. Yucek and H. Arslan, "A survey of spectrum sensing algorithms for
cognitive radio applications," \emph{IEEE Communications Surveys
$\&$ Tutorials}, vol. 11, no. 1, First Quater 2009.

\bibitem{GSSinCRN:Cabric04}
D.~Cabric, S.~M.~Mishra, and R.~W.~Broderson, "Implementation issues
in spectrum sensing for cognitive radios," \emph{IEEE Proc. Signals,
Systems, and Computers}, pp. 772-776, Nov. 2004.

\bibitem{GSSinCRN:Sahai04}
A. Sahai, N. Hoven, and R. Tandra, "Some fundamental limits on
cognitive radio," in \emph{Proc. Allerton Conf. on Commun., Control,
and Computing}, Oct. 2004.

\bibitem{GSSinCRN:Digham03}
F. F. Digham, M. S. Alouini, and M. K. Simon, "On the energy
detection of unknown signals over fading channels," \emph{IEEE Proc.
ICC}, pp. 3575-3579, May 2003.

\bibitem{GSSinCRN:Kim08}
Y.~M.~Kim, G.~Zheng, S.~H.~Sohn, and J.~M.~Kim, "An alternative
energy detection using sliding window for cognitive radio system,"
\emph{IEEE Proc. ICACT}, pp. 481-485, Feb. 2008.

\bibitem{GSSinCRN:Zhu08}
J.~Zhu, Z.~Xu, F.~Wang, B.~Huang, and B.~Zhang, "Double threshold
energy detection of cooperative spectrum sensing in cognitive
radio," \emph{IEEE Proc. CrownCom}, pp. 1-5, May 2008.

\bibitem{GSSinCRN:Farhang08}
B. Farhang-Boroujeny, "Filter bank spectrum sensing for cognitive
radios," \emph{IEEE Transactions on Signal Processing}, vol. 56, no.
5, pp. 1801-1811, May 2008.

\bibitem{GSSinCRN:Penna09}
F. Penna, C. Pastrone, M. A. Spirito, and R. Garello, "Energy
detection spectrum sensing with discontinuous primary user signal,"
\emph{IEEE Proc. ICC}, Jun. 2009.

\bibitem{GSSinCRN:Gardner88}
W.~A.~Gardner, "Signal interception: A unifying theoretical
framework for feature detection," \emph{IEEE Transactions on
Communications}, vol. 38, pp. 897-906, Aug. 1988.

\bibitem{GSSinCRN:Gardner91}
W. A. Gardner, "Exploitation of spectral redundency in
cyclostationary signals," \emph{IEEE Signal Processing Magazine},
vol.8, pp. 14-36, Apr. 1991.

\bibitem{GSSinCRN:Tu09}
S.~Y.~Tu, K.~C.~Chen, and R.~Prasad, "Spectrum sensing of OFDMA
systems for cognitive radio networks," to appear in the \emph{IEEE
Transactions on Vehicular Technology}.

\bibitem{GSSinCRN:Kim07}
K.~Kim, I.~A.~Akbar, K.~K.~Bae, J-S~Um, C.~M.~Spooner, and
J.~H.~Reed, "Cyclostationary approaches to signal detection and
classification in cognitive radio," \emph{IEEE Proc. DySPAN}, pp.
212-215, Apr. 2007.

\bibitem{GSSinCRN:Guo09}
H. Guo, H. Hu, and Y. Yang, "Cyclostaionary signatures in OFDM-based
cognitive radios with cyclic delay diversity," \emph{Proc. IEEE
ICC}, Jun. 2009.

\bibitem{GSSinCRN:Cabric06}
D. Cabric, A. Tkachenko, and R. W. Brodersen, "Spectrum sensing
measurements of pilot, energy, and collaborative detection,"
\emph{IEEE Proc. MILCOM}, pp. 1-7, Oct, 2006.

\bibitem{GSSinCRN:Zeng07}
Y.~Zeng and Y.~C.~Liang, "Covariance based signal detections for
cognitive radio," \emph{IEEE Proc. DySPAN}, pp. 202-207, Apr. 2007.

\bibitem{GSSinCRN:Zeng072}
Y. Zeng and Y. C. Liang, "Maximum-minimum eigenvalue detection for
cognitive radio," \emph{IEEE Proc. PIMRC}, pp. 1-5, Sep. 2007.

\bibitem{GSSinCRN:Zeng09}
Y. Zeng and Y. C. Liang, "Spectrum-sensing algorithms for cognitive
radio based on statistical covariances," \emph{IEEE Transactions on
Vehicular Technology}, vol. 58, no. 4, May 2009.

\bibitem{GSSinCRN:Zhang09}
Q. T. Zhang, "Advanced detection techniques for cognitive radio,"
\emph{IEEE Proc. ICC}, Jun. 2009.

\bibitem{GSSinCRN:Zayen09}
B. Zayen, A. Hayar, and K. Kansanen, "Blind spectrum sensing for
cognitive radio based on signal space dimension estimation,"
\emph{IEEE Proc. ICC}, Jun. 2009.

\bibitem{GSSinCRN:Tian06}
Z. Tian and G. B. Gianakis, "A wavelet approach to wideband spectrum
sensing for cognitive radios," \emph{IEEE Proc. CROWNCOM}, pp. 1-5,
Jun. 2006.

\bibitem{GSSinCRN:Unnikrishnan08}
J.~Unnikrishnan, and V.~V.~Veeravalli, "Cooperative sensing for
primary detection in cognitive radio," \emph{IEEE Journal on
Selected Topics in Signal Processing}, vol. 2, no. 1, pp. 18-27,
Feb. 2008.

\bibitem{GSSinCRN:Quan08}
Z.~Quan, S.~Cui, and A.~H.~Sayed, "Optimal linear cooperation for
spectrum sensing in cognitive radio networks," \emph{IEEE Journal on
Selected Topics in Signal Processing}, vol. 2, no. 1, pp. 28-40,
Feb. 2008.

\bibitem{GSSinCRN:Ganesan07}
G.~Ganesan, and J.~Li, "Cooperative spectrum sensing in cognitive
radio, Part I: Two-user networks," \emph{IEEE Transactions on
Wireless Communications}, vol. 6, no. 6, pp. 2204-2213, June 2007.

\bibitem{GSSinCRN:Ganesan072}
G.~Ganesan, and J.~Li, "Cooperative spectrum sensing in cognitive
radio, Part II: Multiuser networks," \emph{IEEE Transactions on
Wireless Communications}, vol. 6, no. 6, pp. 2214-2222, June 2007.

\bibitem{GSSinCRN:Ghasemi05}
A. Ghasemi and E. S. Sousa, "Collaborative spectrum sensing for
oppotunistic access in fading environments," \emph{IEEE Proc.
DYSPAN}, pp. 131-136, Jun. 2005.

\bibitem{GSSinCRN:Quan09}
Z. Quan, S. Cui, A. H. Sayed, and H. V. Poor, "Optimal multiband
joint detection for spectrum sensing in cognitive radio networks,"
\emph{IEEE Transactions on Signal Processing}, vol. 57, no. 3, pp.
1128-1140, Mar. 2009.

\bibitem{GSSinCRN:Mishra06}
S. M. Mishra, A. Sahai, and R. W. Broderson, "Cooperative sensing
among cognitive radios," \emph{IEEE Proc. ICC}, pp. 1658-1663, Jun.
2006.

\bibitem{GSSinCRN:Jafar07}
S. A. Jafar and S. Srinivasa, "Capacity limits of cognitive radio
with distributed and dynamics spectral activity," \emph{IEEE Journal
on Selected Areas in Communications}, vol. 25, no. 3, pp. 529-537,
Apr. 2007.

\bibitem{GSSinCRN:Srinivasa07}
S. Srinivasa and S. A. Jafar, "The throughput potential of cognitive
radio: A theoretical perspective," \emph{IEEE Communications
Magazine}, vol. 45, no. 5, pp. 73-79, May 2007.

\bibitem{GSSinCRN:Xu02}
K. Xu, M. Gerla, and S. Bae, "How effective is the IEEE 802.11
RTS/CTS handshake in ad hoc networks?" \emph{IEEE Proc. GLOBECOM},
pp. 72-76, Nov. 2002.

\bibitem{GSSinCRN:Chen07}
K. C. Chen, L. H. Kung, D. Shiung, R. Prasad, and S. Chen,
"Self-organizing terminal architecture for cognitive radio
networks," \emph{The 10th International Symposium on Wireless
Personal Multimedia Communications}, pp. 926-931, Dec. 2007.

\bibitem{GSSinCRN:Viswanathan89}
R. Viswanathan and V. Aalo, "On counting rules in distributed
detection," \emph{IEEE Transactions on Acoustics, Speech and Signal
Processing}, vol. 37, no. 5, pp. 772-775, May 1989.

\bibitem{GSSinCRN:Ghasemi08}
A.~Ghasemi and E.~S.~Sousa, "Interference aggregation in spectrum
sensing cognitive wireless networks," \emph{IEEE Journal on Selected
Topics in Signal Processing}, vol. 2, no. 1, pp. 41-56, Feb. 2008.

\bibitem{GSSinCRN:Yu09}
C.~K.~Yu and K.~C.~Chen, "Radio resource tomography of cognitive
radio networks," \emph{IEEE Proc. VTC}, Apr. 2009.

\bibitem{GSSinCRN:Yu092}
C. K. Yu, K. C. Chen, and S. M. Cheng, "Cognitive radio network
tomography," submitted to the \emph{IEEE Transactions on Vehicular
Technology}.

\bibitem{GSSinCRN:Boyd04}
S. Boyd and L. Vandenberghe, \emph{Convex optimization}, Cambridge
University Press, 2004.

\bibitem{GSSinCRN:Poor94}
H. V. Poor, \emph{An introduction to signal detection and
estimation}, Springer-Verlag, 1994.

\bibitem{GSSinCRN:Casella02}
G. Cacella and R. L. Berger, \emph{Statistical inference}, Duxbury,
2002.

\bibitem{GSSinCRN:Thomas07}
R. W. Thomas, R. S. Komali, A. B. MacKenzie, and L. A. DaSilva,
"Joint power and channel minimization in topology control: A
congnitive network approach," \emph{IEEE Proc. ICC}, pp. 6538-6543,
Jun. 2007.

\bibitem{GSSinCRN:Chen072}
T. Chen, H. Zhang, G. M. Maggio, and I. Chlamtac, "Topolog
management in CogMesh: A cluster-based cognitive radio mesh
network," \emph{IEEE Proc. ICC}, pp. 6516-6521, Jun. 2007.

\bibitem{GSSinCRN:Centin09}
K. C. Chen, Y. C. Peng, B. K. Centin, N. Prasad, J. Wang, and S. Y.
Lee, "Routing of opportunistic links in cognitive radio networks,"
to appear in the \emph{Wiley Wireless Communicatioins and Mobile
Computing}.

\bibitem{GSSinCRN:Chen09}
K. C. Chen and R. Prasad, \emph{Cognitive Radio Networks}, Wiley,
2009.


\end{thebibliography}
\end{document}